\newtheorem{construction}{Construction}
\newtheorem{Claim}{Claim}
\newcommand{\F}{\mathbb {F}}
\newcommand{\vc}{{\bf c}}
\begin{document}

\title{Optimal Quaternary $(r,\delta)$-Locally Recoverable Codes: Their Structures and Complete Classification}

\author{Li Xu \and Zhengchun Zhou  \and Jun Zhang \and Sihem Mesnager}
\institute{Li Xu \at School of Mathematics, Southwest Jiaotong University, Chengdu, 611756, China. \\ \email{xuli1451@163.com}
\and Zhengchun Zhou \at School of Mathematics, Southwest Jiaotong University, Chengdu, 611756, China.\\
\email{zzc@swjtu.edu.cn}
\and Jun Zhang \at School of Mathematical Sciences, Capital Normal University, Beijing, 100048, China. \\ \email{junz@cnu.edu.cn}
\and Sihem Mesnager \at Department of Mathematics, University of Paris VIII, 93526 Saint-Denis, France, University of Paris XIII, CNRS, UMR 7539 LAGA, Sorbonne Paris Cit$\acute{\mathrm{e}}$, 93430 Villetaneuse, France, and Telecom Paris, Polytechnic Institute of Paris, 91120 Palaiseau, France. \\
\email{smesnager@univ-paris8.fr}
}

\date{Received: date / Accepted: date}
\maketitle

\begin{abstract}
%\boldmath

Modern distributed and cloud storage systems have reached such a massive scale that recovery from several failures is now a part of the regular operation of the system rather than a rare exception. In addition, storage systems have to provide high data availability to ensure high performance. Redundancy and data encoding must be introduced into the system to address these requirements. Locally recoverable (LRC) codes have been introduced as a family of erasure codes that support the repair of a failed storage node by contacting a small number of other nodes in the cluster. Boosted by their applications in distributed storage, LRC codes have attracted a lot of attention in the recent literature since the introduction of the concept of codes with locality by Gopalan, Huang, Simitci, and Yekhanin in 2012. Since then, LRC codes and their variants have been extensively studied, and many exciting results regarding their locality properties (including a generalized Singleton bound involving the locality of the code). Also, constructions have been obtained, mainly using powerful algebraic coding theory techniques developed by Tamo and Barg in 2014 for constructing various families of LRC codes, including algebraic-geometric codes.

 Aiming to recover the data from several concurrent node failures, linear $r$-LRC codes with locality $r$ were extended into $(r, \delta)$-LRC codes with locality $(r, \delta)$ which can enable the local recovery of a failed node in case of more than one node failure. Optimal LRC codes are those whose parameters achieve the generalized Singleton bound with equality. In the present paper, we are interested in studying optimal LRC codes over small fields and, more precisely, over the finite field $\mathbb{F}_{4}$. Specifically, the $d-$optimal $(r, \delta)-$LRC codes considered in this article are in fact quaternary $[n,k,d]$-linear codes with locality $(r, \delta)$ which are simultaneously $r$-optimal and $d$-optimal (with minimum distance $d \geq \delta > 2$, and dimension $k > r \geq 1$). We adopt an approach by investigating optimal quaternary $(r,\delta)$-LRC codes through their parity-check and generator matrices. Our study includes determining the structural properties of optimal $(r,\delta)$-LRC codes, their constructions, and their complete classification over $\F_4$ by browsing all possible parameters. We emphasize that the precise structure of optimal quaternary $(r,\delta)$-LRC codes and their classification are obtained via the parity-check matrix, and generator matrix approaches. Besides, arguments involving projective space and related objects are intensively used in this paper to browse all the possible cases to obtain the desired constructions. Finally, compared to the recent literature, our structural and classification results about those optimal quaternary $(r,\delta)$-LRC codes over $\mathbb{F}_{4}$ are complete. Notably, we highlight that our proofs-techniques are different from those used recently for optimal binary and ternary $(r,\delta)$-LRC codes obtained by Hao et al. in 2017 and 2019, respectively.

\keywords{ Distributed storage system \and linear code \and locally repairable code \and $(r,\delta)$-LRC \and Singleton-like bound.}
\end{abstract}

\subclass{94B05 \and 94B15 \and 94B25 \and 05B05}

\section{Introduction}

Distributed storage systems store data on several distributed nodes and are widely used in file system storage, ample database storage, backup file, cloud storage, etc.
The need for highly scalable and reliable extensive data storage systems is because of the explosive growth in data.
Distributed storage systems provide reliable access to data through redundancy spread over individually unreliable nodes, where the replication scheme and coding mechanism are two
widespread techniques for ensuring reliability. The replication scheme is very simple, but it will be
highly inefficient with data growth since its large storage overhead. They are namely two design objectives for a distributed storage system. The first one is to never lose user data in the event of crashes (or at least make it highly improbable). The second is to serve user requests with low latency despite some temporarily unavailable servers. Due to their theoretical appeal and the motivations for their applications in large-scale distributed storage systems, locally recoverable (LRC) codes have been introduced via the concept of locality introduced by P. Gopalan et al. \cite{Gopalan2012}. In recent years, they have been proliferated to bring and develop more robust solutions to those problems related to DSSs than replication schemes. Since their introduction, LRC codes have recently been an attractive subject in research in coding theory, where a single storage node erasure is considered a frequent error event. Such codes form a family of erasure codes that support the repair of a failed storage node by contacting a small number of other nodes in the cluster.

%An LRC code is said to have \emph{locality} $r$ if the value at any codeword coordinate can be recovered by accessing at most $r$ other coordinates. By an $(n, k, r)$ LRC code over finite field $\mathbb {F}_q$, we mean a code of length $n$, which has $q^k$ codewords and locality $r$.
For LRC codes, if a symbol is lost due to a node failure, its value can be recovered if every coordinate of the codeword $\vc=(c_1,\ldots,c_n)\in \mathcal{C}$ can be recovered from a subset of $r$ other coordinates of $\vc$. Mathematically, it gives the following definition.

\begin{definition}[LRC codes]
A code $\mathcal {C}$ has locality $r$ if for every $i\in[n]:=\{1, 2, \cdots , n\}$ there exists a subset $R_i \subseteq [n]\setminus\{i\}$, $\# R_i\leq r$ and a function $\phi_i$ such that for every codeword $\vc\in\mathcal C$,
$$c_i=\phi_i(\{c_j, j\in R_i\}).$$
\end{definition}
An $(n,k,r)$ LRC code $\mathcal{C}$ over (the finite field of $q$ elements)  $\mathbb {F}_q$ is of code length $n$, cardinality $q^k$, and locality $r$. The parameters of an $(n,k,r)$ LRC code have been intensively studied.

\begin{theorem}[\cite{Gopalan2012,Papailiopoulos2012}]
Let $\mathcal {C}$ be an $(n,k,r)$ LRC code over $\mathbb {F}_q$, then
the minimum distance of $\mathcal {C}$ satisfies
\begin{equation}\label{r_Singleton}
d \leq n-k-\lceil\frac{k}{r}\rceil+2.
\end{equation}
%The rate of $\mathcal {C}$ satisfies
%$$\frac{k}n\leq \frac{r}{r+1}.$$
\end{theorem}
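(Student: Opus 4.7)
The plan is to extend the classical proof of the Singleton bound $d\le n-k+1$ by exploiting the local constraints. The target is a set $S\subseteq [n]$ with $\mathrm{rank}(\mathcal{C}|_S)\le k-1$ and $|S|$ as large as possible: indeed, as in the usual Singleton argument, if $\mathrm{rank}(\mathcal{C}|_S)\le k-1<\dim\mathcal{C}$ then two distinct codewords must agree on $S$, so their nonzero difference has support in $[n]\setminus S$ and weight at most $n-|S|$, giving $d\le n-|S|$. Hence the whole task reduces to showing that we can build $S$ with $|S|\ge k+\lceil k/r\rceil-2$.

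I would build $S$ greedily. Set $S_0=\emptyset$ and $\rho_0=\mathrm{rank}(\mathcal{C}|_{S_0})=0$. While $\rho_t<k-1$, pick any index $i\notin S_t$ whose projection still contributes new rank and enlarge $S_t$ to $S_{t+1}=S_t\cup\{i\}\cup R_i$. The key local-dependency observation is that on $\{i\}\cup R_i$ the coordinate $c_i$ is a function of $\{c_j:j\in R_i\}$; for a linear code this function is linear, so $\mathrm{rank}(\mathcal{C}|_{\{i\}\cup R_i})\le |R_i|\le r$. Consequently, each step enlarges $|S|$ by at most $r+1$ while enlarging $\rho$ by at most $r$: informally, every block of $r+1$ new coordinates yields at most $r$ new dimensions, which is exactly the gain of locality over the MDS baseline.

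After $\lceil k/r\rceil-1$ such blocks one has $\rho\le (\lceil k/r\rceil-1)r$; running one more partial block brings $\rho$ up to $k-1$ while adding exactly one extra coordinate per unit of rank (plus one extra ``free'' coordinate for the local dependency). Summing over blocks yields $|S|\ge (k-1)+(\lceil k/r\rceil-1)=k+\lceil k/r\rceil-2$, and combining with the Singleton-type inequality above produces
\[
d\le n-|S|\le n-k-\lceil k/r\rceil+2.
\]

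I expect the main obstacle to be the careful bookkeeping in the last, possibly partial, block: when $k$ is not a multiple of $r$ one must stop in the middle of a block and show that the ceiling $\lceil k/r\rceil$ is exactly what comes out, rather than $\lfloor k/r\rfloor$ or $k/r+1$. A cleaner alternative, which I would mention as a sanity check, is the dual point of view: local recoverability forces the parity-check matrix to contain at least $\lceil k/r\rceil$ ``local'' parity checks of weight at most $r+1$, and one argues directly that these dependencies inflate $n-k$ beyond the MDS minimum $d-1$ by precisely $\lceil k/r\rceil-1$.
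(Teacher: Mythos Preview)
The paper does not give its own proof of this theorem; it is stated with a citation to \cite{Gopalan2012,Papailiopoulos2012} and used as background. So there is no in-paper argument to compare against.

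That said, your outline is precisely the Gopalan--Huang--Simitci--Yekhanin argument, and the overall strategy is correct. Two points are worth tightening. First, the sentence ``each step enlarges $|S|$ by at most $r+1$ while enlarging $\rho$ by at most $r$'' gives two separate upper bounds, whereas what you actually need is the relative statement
\[
|S_{t+1}|-|S_t|\ \ge\ \bigl(\rho_{t+1}-\rho_t\bigr)+1.
\]
This does hold, but it requires the observation that since $i$ was chosen with $g_i\notin\mathrm{span}(S_t)$ and $g_i$ is a linear combination of $\{g_j:j\in R_i\}$, not all of $R_i$ can already lie in $S_t$; hence the local dependency descends to a genuine nontrivial relation among the \emph{new} columns modulo $\mathrm{span}(S_t)$. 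That is the step that guarantees one ``free'' coordinate per block even when repair sets overlap with $S_t$. Second, the loop condition ``while $\rho_t<k-1$'' as written could overshoot: one full block can push $\rho$ past $k-1$. The clean fix is exactly the two-phase structure you allude to---run $\lceil k/r\rceil-1$ full blocks (after which $\rho\le r(\lceil k/r\rceil-1)\le k-1$), then top up $\rho$ to $k-1$ one coordinate at a time. With these two clarifications the count $|S|\ge(k-1)+(\lceil k/r\rceil-1)$ goes through as you wrote.

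Finally, note that your argument, via column ranks of a generator matrix, is the linear-code version. The theorem as stated (and as proved in \cite{Papailiopoulos2012}) covers arbitrary codes with arbitrary recovery functions $\phi_i$; the nonlinear case needs an entropy/counting argument in place of rank.
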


Note that if $r=k$, the upper bound (\ref{r_Singleton}) coincides with the well-known Singleton bound $d\leq n-k+1$.

Given the above upper bound on the minimum distance, optimal LRC codes have been defined as follows.
\begin{definition}[Optimal LRC codes]
LRC codes for which $d= n-k-\lceil k/r \rceil+2$ are called optimal codes.
\end{definition}

Optimal LRC codes have been extensively studied in recent years (see. e.g., \cite{Tamo2014,Rawat2014,Tamo2016,Tamo2016B}), notably, exceptional results about their design using powerful algebraic techniques of coding theory were provided and developed by Tamo and Barg \cite{Tamo2014} in 2014 for constructing families of LRC codes including algebraic-geometric codes. LRC codes can be constructed in several ways, and many constructions have been proposed in the recent literature (see. e.g., \cite{jin2019constructions,cai2019optimal,wang2021construction,zhang2020locally}).
One of the most interesting constructions of optimal LRC codes is due to Tamo and Barg \cite{Tamo2014} realized via constructing polynomials (called good polynomials) of degree $r + 1$ which are constant on subsets of $\mathbb {F}_q $ of cardinality $r + 1$. Besides, all the possible classifications of optimal binary, ternary and quaternary LRC codes attaining the bound were presented by Hao et al. in \cite{Hao2020}, \cite{Hao2017} and \cite{Hao2021}, respectively. The original concept of locality only works when exactly one erasure occurs (that is, one node fails). Prakash et al. \cite{Prakash2012} introduced the concept of $(r, \delta)$-locality for linear codes aiming local recovery in case of more than one node failure. More precisely,

\begin{definition}[$(r,\delta)$-LRC codes]\label{LRC}
 The $i-$th code symbol $\textbf{c}_i$, $i \in [n]$, in an $[n, k, d]$ linear code $\mathcal C$, will be said to have locality $(r,\delta)$ if there exists a punctured subcode of $\mathcal C$ with support containing $i$, whose length $\leq r +\delta - 1$ and whose minimum distance $\geq \delta$. An $(r, \delta)$-LRC code is a linear code all of whose $n$ code symbols have locality $(r,\delta)$.
\end{definition}

It was proved in \cite{Prakash2012} that the minimum distance of an $(r, \delta)$-LRC codes is upper bounded by
%\begin{equation}\label{rd_Singleton}
$$
d \leq n-k+1-(\lceil\frac{k}{r}\rceil-1)(\delta-1).
$$
%\end{equation}

Optimal $(r, \delta)$-LRC codes with minimum distance meeting this bound given in \cite{Chen2018}, \cite{Ernvall2016}, \cite{Kamath2014} and \cite{Song2014}. In particular, Hao et al. proposed in \cite{Hao2017_2} and \cite{Hao2019} an approach based on analyzing parity-check matrices to study $(r, \delta)$-LRC codes in the binary and ternary cases with $\delta > 2$ and enumerated all possibilities for obtaining optimal binary and ternary $(r, \delta)$-LRC codes.

In this paper, we employ the parity-check matrix and generator matrix approaches to study the classifications of $(r, \delta)$-LRCs $(\delta > 2)$ over the quaternary field. Note that parity-check-matrix-based methods to study LRCs have been extensively used in literature. Many existing constructions of optimal LRC codes were given by presenting their explicit parity-check matrices, e.g.,\cite{Guruswami-et-al-2019,Jin2019,Prakash2012,Xing-Yuan-2018,Wang-et-al-2015,Hao2020}. The parity-check matrix can indeed characterize the locality property of an LRC code.
However, we highlight that the precise structure of optimal quaternary $(r,\delta)$-LRC codes and their classification are obtained use proofs-techniques pretty different from those used recently for optimal binary and ternary $(r,\delta)$-LRC codes obtained by Hao et al. in \cite{Hao2017_2} and \cite{Hao2019}. We combine parity-check matrix and generator matrix and use finite geometry as an essential tool in the classification process.

The paper is organized as follows. Section \ref{Sec:Preliminaries} fixes our notations and introduces the necessary background, and briefly recalls main results for general linear codes and the specific case of linear $(r,\delta)$-LRC codes that we need in the article.
Section \ref{Sec: LRC-prop} derives some crucial properties of optimal $4$-ary $(r, \delta)$-LRC codes. Next, analyze all possible parameters of optimal quaternary $(r, \delta)$-LRC codes entirely, and proposes explicit constructions of parity-check matrices for all these parameters. We shall precisely study in Sections \ref{Sec:LRC-$d=3$}, \ref{Sec:LRC-$d=4$} and \ref{Sec:LRC-$d=5$} optimal quaternary $(r, \delta)$-LRC codes with specific possible values of the minimum distance $d=3,~4,$ or $d\geq 5$ and derive explicitly the construction of the corresponding optimal $4$-ary $(r, \delta)$-LRC codes. Finally, Section \ref{Sec-Conclusions} concludes the article.

\section{Preliminaries}\label{Sec:Preliminaries}

Given a finite set $E$, $\# E$ will denote its cardinality. %Given a complex number $x$, $|x|$ will denote its modulus.
 $A^{T}$ denotes the transposed matrix of matrix $A$, $\text{Rank}(A)$ denotes the rank of $A$, $A \otimes B$ denotes the Kronecker product of matrices $A$ and $B$ and $I_m$ denotes the $m \times m$ identity matrix.

%\subsection{Some background on Linear codes}
Let $\mathbb{F}_{q}$ be the finite field of order $q$ where $q$ is a prime power %($q=p^m$ where $p$ is a prime and $m$ is a positive integer).
Furthermore, $\mathbb F_q^*$ be the multiplicative cyclic group of $\mathbb{F}_{q}$.
In this paper, we are interested in investigating LRC quaternary codes. Consequently, we will focus on the finite field $\mathbb{F}_4$ of characteristic $2$ consisting of the elements $\{0,1,w,w^2\}$, with $w^2=w+1$, $w^3=1$ (where $w$ denotes the primitive element in $\mathbb{F}_4$).
The \emph{Hamming weight}, denoted by wt$(\textbf{a})$, of a vector $ \textbf{a} \in \F_{q}^n$ is the cardinality of its \emph{support} defined as $Supp( \textbf{a})=\{i \in [n] : a_i\neq 0\}$ (that is, $wt(\textbf{a}):=\#Supp(\textbf{a})$), where $[n] = \{1, 2, \cdots , n\}$, $a_i$ is $i$-th component of $\textbf{a}$. An $[n, k, d]_q$ linear code $\mathcal C$ is a $k$-dimensional vector subspace of $\mathbb{F}_{q}$ with minimum Hamming distance $d$, where $d=\min_{\textbf{a},\textbf{b}\in\mathcal{C},\textbf{a}\neq \textbf{b}}d_H(\textbf{a},\textbf{b})$ in which $d_H$ denotes the Hamming distance
between vectors (called codewords), i.e., $d_H(\textbf{a},\textbf{b})=\#\{i \in [n] : a_i\not=b_i\}$. Usually, if the context is clear, we omit the subscript $q$ of $q$-ary linear code by convention in the sequel (we shall write $[n, k, d]$ instead of $[n, k, d]_q$).  %The value $n-k$ is called the \emph{codimension} of $\mathcal C$.

Let $\mathcal C$ be an $[n, k, d]_q$ linear code. Then its (Euclidean) dual code is denoted by $\mathcal C^ {\perp}$ and defined as
$$\mathcal C^ {\perp}=\{ (b_1, b_2, \ldots, b_{n})\in \mathbb F_{q}^n: \sum _{i=1}^{n} b_i  c_i=0, \forall (c_1, c_2, \ldots, c_{n}) \in \mathcal C \}.$$

The $k\times n$ generator matrix $G$ and $(n-k)\times n$ parity-check matrix  $H$ of code $\mathcal C$ are respectively composed of $k$ linearly independent codewords in $\mathcal C$ and $n-k$ linearly independent codewords in $C^{\bot}$. They satisfy  $GH^{T}=0$. It is well-known that a linear code has minimum distance $d$ if and only if its parity check matrix has a set of $d$ linear dependent columns but no set of $d-1$ linear dependent columns.

%Let $[n] = \{1, 2, \cdots , n\}$, the support of a vector $\textbf{c} \in \mathbb{F}_q^n$ is $supp(\vc)= \{i \in [n] \mid c_i \neq 0\}$, where $c_i$ is $i$-th component of $\vc$.

For an $[n, k, d]_q$ linear code $\mathcal C$, the classical Singleton Bound says that $d \leq n - k + 1$. A code for which equality holds in the Singleton Bound is called maximum distance separable, abbreviated \emph{MDS}. If $\mathcal C$ is an $[n,k,n-k+1]$ MDS code, then $C^{\bot}$ is an $[n, n-k, k+1]$ MDS code (see. e.g. \cite{Huffman2003}). MDS codes are desirable for distributed storage applications because of their optimal storage versus reliability tradeoff.

%$def(C) := n + 1 - k - d$ is called the $Singleton~defect$ of $\mathcal C$. The defect $def(C)$ is equal to zero if and only if $\mathcal C$ is MDS.
%{\lem \label{defect}\cite{Faldum1997} Let $\mathcal C$ be a $q$-ary $[n, k, d]$ linear code with defect $s$. If $k \geq 2$, then $d \leq q(s + 1)$.}

The linear codes with parameters $[n, 1, n]$ or $[n, n-1, 2]$ are trivial MDS codes. For a $q$-ary non-trivial MDS code with dimension $2 \leq k \leq n - 2$, the following result holds.

\begin{lemma}[\cite{MacWilliams1977}] \label{MDS} Let $\mathcal C$ be a $q$-ary $[n, k, n-k+1]$ MDS code. If $k \geq 2$, then $q \geq n - k + 1$. If $k \leq n - 2$, then $q \geq k + 1$.
\end{lemma}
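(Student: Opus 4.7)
The plan is to reduce the second bound to the first via duality, and then establish the first bound by a classical analysis of the systematic generator matrix together with the MDS column-independence property.

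First, I would recall the key equivalent characterization: a linear $[n,k]_q$ code is MDS if and only if every $k$ columns of any generator matrix $G$ are linearly independent (equivalently, every $n-k$ columns of a parity-check matrix are linearly independent), which is immediate from the fact stated in the preliminaries that a code has distance $d$ iff its parity-check matrix has a minimal dependent set of size $d$ in its columns. Using this, I would put $G$ in systematic form $G=[I_k \mid A]$ where $A$ is a $k\times(n-k)$ matrix, and then observe that for any $\ell \le \min(k,n-k)$, every $\ell\times\ell$ submatrix of $A$ must be nonsingular: replacing the $\ell$ columns of $I_k$ corresponding to the chosen rows by the $\ell$ chosen columns of $A$ yields a $k\times k$ submatrix of $G$ whose determinant (up to sign) equals that $\ell\times\ell$ minor.

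To extract the bound $q\ge n-k+1$ under $k\ge 2$, I would exploit this in the smallest nontrivial case $\ell=1,2$. From $\ell=1$, every entry of $A$ is nonzero, so I can rescale the $n-k$ columns of $A$ and assume its first row is the all-ones vector (column scaling preserves the MDS property, being equivalent to a monomial change of coordinates on $\mathcal{C}$). Now I would look at the second row, which exists because $k\ge 2$: each of its entries is nonzero, and each $2\times 2$ minor built from rows $1,2$ and two distinct columns has the form $b-a$, so all $n-k$ entries of the second row must be pairwise distinct nonzero elements of $\mathbb{F}_q$. Therefore $n-k \le |\mathbb{F}_q^*|=q-1$, which gives $q\ge n-k+1$.

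For the statement under $k\le n-2$, my plan is to apply the bound just established to the dual code $\mathcal{C}^\perp$, which, as recalled in the excerpt, is an $[n,n-k,k+1]$ MDS code. The hypothesis $k\le n-2$ is exactly the condition that the dual dimension $n-k$ is at least $2$, so the first bound applies to $\mathcal{C}^\perp$ and yields $q\ge n-(n-k)+1 = k+1$. The main obstacle, if any, is to justify cleanly that column rescaling preserves the MDS property in step two; this is routine but is the point where one has to be careful, since without the reduction to first row equal to $\mathbf{1}$, the $2\times 2$-minor argument does not immediately output distinct field elements. Everything else is bookkeeping with submatrices of $[I_k\mid A]$.
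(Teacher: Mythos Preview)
Your argument is correct and is essentially the classical proof of this result (normalize the first row of the non-identity block of a systematic generator matrix, then force the second row to have pairwise distinct nonzero entries via the $2\times 2$ minors, and pass to the dual for the second inequality). The paper itself does not supply a proof of this lemma; it simply cites the statement from MacWilliams--Sloane \cite{MacWilliams1977}, so there is no in-paper argument to compare against, and your proposal stands on its own as a complete justification.
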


By the above lemma, a non-trivial $q$-ary MDS code has dimension $2 \leq k \leq q - 1$ and length $n \leq q + k - 1$. Let $q=4$, the following proposition gives the parameters of all possible quaternary MDS codes.
\begin{proposition} \label{MDS4} The code parameters of all possible quaternary MDS codes are
$$[4, 2, 3],~ [5, 2, 4],~ [5, 3, 3],~[6, 3, 4],~[n, 1, n], ~\mbox{and}~~[n, n-1, 2], ~n \geq 2.$$
\end{proposition}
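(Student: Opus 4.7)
The plan is to combine Lemma~\ref{MDS} with a short case-by-case verification of existence. First I would dispense with the trivial MDS families: the repetition code yields $[n,1,n]$ for every $n\ge 2$, and its dual, the single parity-check code, yields $[n,n-1,2]$. These account for the two infinite families in the statement, so the remaining task is to classify the non-trivial MDS codes, i.e.\ those with $2 \leq k \leq n-2$.

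In the non-trivial range I would substitute $q=4$ directly into Lemma~\ref{MDS}. The two inequalities $4 \geq n-k+1$ and $4 \geq k+1$ force $n \leq k+3$ and $k \leq 3$, while $2 \leq k \leq n-2$ forces $k \geq 2$ and $n \geq k+2$. Intersecting these constraints leaves exactly the four pairs
\[
(n,k) \in \{(4,2),\,(5,2),\,(5,3),\,(6,3)\},
\]
which correspond precisely to the four sporadic triples $[4,2,3]$, $[5,2,4]$, $[5,3,3]$, $[6,3,4]$ listed in the proposition. This shows that no other parameters are possible.

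For existence I would exhibit a classical witness in each case: the $[4,2,3]$ code is the Reed-Solomon code of dimension $2$ evaluated at the four elements of $\mathbb{F}_4$; adjoining an evaluation-at-infinity column produces the extended Reed-Solomon code with parameters $[5,2,4]$; the $[5,3,3]$ code arises as the dual of the previous one, using the standard fact (recalled right after Lemma~\ref{MDS}) that the dual of an MDS code is MDS; and the $[6,3,4]$ code is realised by the doubly-extended Reed-Solomon construction over $\mathbb{F}_4$, i.e.\ the classical hexacode. Each admits an explicit generator matrix, so verifying the stated parameters is routine.

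I do not anticipate a genuine obstacle in this argument: the enumeration step is a one-line consequence of Lemma~\ref{MDS} and the four sporadic codes are all completely standard objects. The only point that deserves attention is making the duality step in the $[5,3,3]$ case rigorous, but this is immediate from the MDS-dual property quoted in the text; no deeper algebraic-geometric input is needed.
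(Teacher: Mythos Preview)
Your proposal is correct and follows the same line as the paper, which does not give a formal proof but simply precedes the proposition with the sentence ``By the above lemma, a non-trivial $q$-ary MDS code has dimension $2 \leq k \leq q-1$ and length $n \leq q+k-1$,'' leaving both the enumeration and the existence of the four sporadic codes implicit. Your argument spells out exactly this consequence of Lemma~\ref{MDS} and adds the (standard) existence verification that the paper omits; the only minor terminological quibble is that the $[6,3,4]$ hexacode has length $q+2$ rather than $q+1$, so ``doubly-extended Reed--Solomon'' is slightly loose, but the paper itself exhibits an explicit generator matrix for the hexacode, so existence is in any case not in doubt.
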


We recall important equivalence notions of codes over the finite field $\mathbb F_q$ (see \cite{Huffman2003} Sections 1.6 and 1.7).

\begin{definition}
 Let $\mathcal C$ and $\mathcal C^\prime$ be two linear codes of the same length over $\mathbb F_q$. $\mathcal C$ and $\mathcal C^\prime$ are \emph{permutation equivalent} if there is a permutation matrix $P$ such that $G$ is a generator matrix of $\mathcal C$ if and only if $GP$ is a generator matrix of $\mathcal C^\prime$.
\end{definition}

Recall that a monomial matrix is a square matrix with exactly one nonzero entry in each row and column. A monomial matrix M can be written either in the form of $DP$ or the form of $PD^\prime$ where $D$ and $D^\prime$ are diagonal matrices, and $P$ is a permutation matrix.

\begin{definition}
 Let $\mathcal C$ and $\mathcal C'$ be two linear codes of the same length over $\mathbb F_q$, and let $G$ be a generator matrix of $\mathcal C$. Then $\mathcal C$ and $\mathcal C'$ are \emph{monomially equivalent} if there is a monomial matrix $M$ such that $GM$ is a generator matrix of $\mathcal C^\prime$.
\end{definition}

\begin{definition}[Equivalence of codes]
Two codes $\mathcal C$ and $\mathcal C'$ in $\mathbb F_q^n$ are called \emph{equivalent} if $\mathcal C'=\sigma(\mathcal C_{\mathbf a})$ for some permutation $\sigma$ of $\{1,2,\ldots, n\}$ and $\mathbf a \in \mathbb (\mathbb F_q^*)^n$. Any $[n,k]$ linear code over a finite field is equivalent to a code generated by a matrix of the form $[I_k: P]$ where $I_k$ denotes the $k\times k$ identity matrix.
\end{definition}

%\[\begin{array}{ccc}
%$[6, 3, 4]$ \mbox{MDS} &\stackrel{delet a column}{\longrightarrow} & $[5, 3, 3]$ \mbox{MDS}\\
%\big\downarrow & & \big\downarrow\vcenter{\rlap{shorten on {i}}}\\
%$[6, 3, 4]$ \mbox{MDS} & \longrightarrow & $[6, 3, 4]$ \mbox{MDS}.
%\end{array}\]

%\[\begin{array}{ccc}
%$[6, 3, 4]$ \mbox{MDS} &\xrightarrow{\quad delet a column\quad} & $[5, 3, 3]$ \mbox{MDS}\\
%\downarrow & & \big\downarrow\vcenter{\rlap{shorten on {i}}}\\
%$[6, 3, 4]$ \mbox{MDS} &\xrightarrow{\quad delet a column\quad} & $[6, 3, 4]$ \mbox{MDS}.
%\end{array}\]

Notably, the quaternary code with parameters $[6, 3, 4]$ is often called the \emph{Hexacode}. In the sense of equivalence of codes, this code is unique (see. e.g., \cite{Huffman2003}).
In this paper, we shall choose $G$ as the generator matrix of the Hexacode,

$$G=\left(\begin{array}{llllll}
1 & 0 & 0 & 1 & 1 & 1\\
0 & 1 & 0 & 1 & w & w^2\\
0 & 0 & 1 & 1  & w^2 & w
\end{array}\right).$$

In general, we can \emph{puncture} a linear code $\mathcal C$ on the coordinate set $S$ by deleting components indexed by the set $S$ in all codewords of $\mathcal C$ and denote $C^S$ the resulting code.  Consider the set $\mathcal{C}(S)= \{\vc \in \mathcal{C} : c_i=0,~i\in S\}$, this set is a subcode of $\mathcal C$.
Puncturing $\mathcal{C}(S)$ on $S$ gives a code of length ($n-\#S$) called the code \emph{shortened} on $S$ and denoted $\mathcal C_S$. The dual of a punctured code is a shortened code, that is, $(C^S)^{\perp}=(\mathcal C^{\bot})_S$ and $(\mathcal C_S)^{\bot} = (\mathcal C^{\bot})^S$ (see. e.g. \cite{Huffman2003}).

We now recall the crucial notion of projective space and related objects and properties, which will be intensively used in this paper. First, we can define the projective space $PG(m-1, \F_q)$ from the $k$-dimensional vector space $V_m(\F_q)$ over $\F_q$ in the following way. The \emph{points} of $PG(m-1, \F_q)$ are the one-dimensional subspaces of $V_m(\F_q)$, the \emph{lines} of $PG(m-1, \F_q)$ are the two-dimensional subspaces of $V_m(\F_q)$, the \emph{planes} of $PG(m-1, \F_q)$ are the three-dimensional subspaces of $V_m(\F_q)$.
 %and in general the $(i - 1)$-dimensional subspaces of $PG(m-1, \F_q)$ are the $i$-dimensional subspaces of $V_m(\F_q)$.
We can think of the subspace of projective space as a collection of the points it contains, and its intersection determines the intersection of subspaces in the vector space. We also have the following lemmas about the number of subspaces of dimension $i$ in $V_m(\F_q)$.
 %which by definition is the number of $(i - 1)$-dimensional subspaces of $PG(m-1, \F_q)$.

\begin{lemma}[\cite{Projectivespace}] \label{i-space}
The number of $i$-dimensional subspaces of $V_m(\F_q)$ is
$$
\frac{\left(q^{m}-1\right)\left(q^{m-1}-1\right) \cdots\left(q^{m-i+1}-1\right)}{\left(q^{i}-1\right)\left(q^{i-1}-1\right) \cdots(q-1)}.
$$

\end{lemma}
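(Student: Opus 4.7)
The plan is to compute the number of $i$-dimensional subspaces by double-counting ordered linearly independent $i$-tuples of vectors in $V_m(\F_q)$, a standard combinatorial approach for Gaussian binomial identities. The key observation is that every $i$-dimensional subspace is determined by choosing one of its ordered bases, so if I count ordered bases of $V_m(\F_q)$ of size $i$ in two ways (once globally, once by first picking a subspace and then a basis inside it), the ratio will give the number of subspaces.

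First I would count the number of ordered linearly independent $i$-tuples $(\bv_1,\ldots,\bv_i)$ in $V_m(\F_q)$. The vector $\bv_1$ can be any of the $q^m-1$ nonzero vectors; having chosen $\bv_1,\ldots,\bv_j$, the next vector $\bv_{j+1}$ must avoid the $j$-dimensional subspace they span, which has $q^j$ elements, giving $q^m-q^j$ choices. Hence the total number of such tuples is
\[
\prod_{j=0}^{i-1}(q^m - q^j).
\]
Next I would apply exactly the same argument inside a fixed $i$-dimensional subspace $W$ to count its ordered bases; replacing $m$ by $i$ yields
\[
\prod_{j=0}^{i-1}(q^i - q^j).
\]

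The number $N_i$ of $i$-dimensional subspaces is then the ratio of these two counts. To recover the stated form, I would factor $q^j$ out of each term: $q^m - q^j = q^j(q^{m-j}-1)$ and $q^i - q^j = q^j(q^{i-j}-1)$, so the factors $q^{0+1+\cdots+(i-1)} = q^{\binom{i}{2}}$ cancel between numerator and denominator. What remains is
\[
N_i \;=\; \frac{\prod_{j=0}^{i-1}(q^{m-j}-1)}{\prod_{j=0}^{i-1}(q^{i-j}-1)} \;=\; \frac{(q^m-1)(q^{m-1}-1)\cdots(q^{m-i+1}-1)}{(q^i-1)(q^{i-1}-1)\cdots(q-1)},
\]
which is precisely the claimed formula.

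No step is truly an obstacle here; this is classical. The only point that requires care is the inductive claim that, after choosing $j$ linearly independent vectors, the set of vectors lying in their span has exactly $q^j$ elements, which follows since any $j$-dimensional $\F_q$-subspace is isomorphic to $\F_q^j$. Everything else is bookkeeping: arrange the product, factor powers of $q$, and cancel.
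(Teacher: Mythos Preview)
Your argument is correct and is the standard double-counting proof of the Gaussian binomial coefficient formula. Note, however, that the paper does not supply its own proof of this lemma at all: it is simply quoted from the reference \cite{Projectivespace}, so there is nothing in the paper to compare your approach against.
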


\begin{lemma}[\cite{Projectivespace}] \label{i-j-space}
The number of $i$-dimensional subspaces of $V_m(\F_q)$ containing a fixed $j$-dimensional subspace is equal to the number of $(i - j)$-dimensional subspaces in $V_{m-j}(\F_q)$. This number equals
$$
\frac{\left(q^{m-j}-1\right)\left(q^{m-j-1}-1\right) \cdots\left(q^{m-i+1}-1\right)}{\left(q^{i-j}-1\right)\left(q^{i-j-1}-1\right) \cdots(q-1)}.
$$
\end{lemma}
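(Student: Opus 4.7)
The plan is to prove the lemma in two stages: first establish the claimed bijection between the two families of subspaces, and then apply Lemma \ref{i-space} to the quotient space to obtain the explicit count.

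First I would fix a $j$-dimensional subspace $W \subseteq V_m(\F_q)$ and consider the canonical quotient map $\pi: V_m(\F_q) \to V_m(\F_q)/W$. Since $\dim V_m(\F_q)/W = m - j$, the quotient is isomorphic as an $\F_q$-vector space to $V_{m-j}(\F_q)$. By the standard correspondence theorem for vector subspaces, the assignment $U \mapsto U/W = \pi(U)$ is a bijection between subspaces $U$ of $V_m(\F_q)$ that contain $W$ and subspaces of $V_m(\F_q)/W$, with inverse $\overline{U} \mapsto \pi^{-1}(\overline{U})$. Under this correspondence $\dim(U/W) = \dim U - \dim W$, hence $i$-dimensional subspaces $U$ with $W \subseteq U$ are in bijection with $(i-j)$-dimensional subspaces of $V_m(\F_q)/W \cong V_{m-j}(\F_q)$. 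This yields the first assertion of the lemma.

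Next I would substitute $m \leftarrow m-j$ and $i \leftarrow i-j$ into Lemma \ref{i-space}, which gives
$$\frac{(q^{m-j}-1)(q^{m-j-1}-1) \cdots (q^{m-j-(i-j)+1}-1)}{(q^{i-j}-1)(q^{i-j-1}-1) \cdots (q-1)}.$$
A routine simplification of the last factor in the numerator, $m - j - (i - j) + 1 = m - i + 1$, shows that this expression coincides with the stated formula, completing the proof.

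The argument is almost entirely formal, so I do not anticipate any real obstacle; the only points requiring a little care are verifying that the map $U \mapsto \pi(U)$ is indeed a bijection (this is the correspondence theorem) and that dimensions transform as claimed ($\dim U = \dim W + \dim(U/W)$ whenever $W \subseteq U$). Both are elementary facts from linear algebra, and Lemma \ref{i-space} does the rest of the counting work.
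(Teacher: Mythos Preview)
Your argument is correct and is in fact the standard proof: the quotient by the fixed $j$-dimensional subspace $W$ sets up the bijection via the correspondence theorem, and then Lemma~\ref{i-space} applied to $V_{m-j}(\F_q)$ yields the formula. Note that the paper itself does not supply a proof of this lemma; it is simply quoted from \cite{Projectivespace} as a known result, so there is no ``paper's own proof'' to compare against. Your write-up would serve perfectly well as a self-contained justification.
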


\section{On $(r,\delta)$-Locally Repairable Codes}\label{Sec: LRC-prop}
\subsection{Generalities}

%For a positive integer $n$, we use $[n]$ to denote the set $\{1,2, \cdots,n\}$.
A linear code $\mathcal C$ is a LRC code with locality $r$ if for any $i\in[n]$, there exists a subset
 $\mathcal R_i\subseteq [n]\backslash\{i\}$ with $\#\mathcal R_i\leq r$ such that the $i$-th symbol $\textbf{c}_i$ can be recovered
 by $\{\textbf{c}_j\}_{j\in \mathcal R_i}$.

 When an LRC code with locality $r$ much less than $k$ is employed, the repair cost is lower than MDS codes, as only a small number of storage nodes are involved in the repair process.

A set $\mathcal R_i$ is called a recovery or repair set for $\textbf{c}_i$.
Furthermore, if for any $i\in[n]$, there is a repair sets of size at most $r$ even if $\delta -2$ other symbols fail, we refer to such a code as an $(r,\delta)$-LRC code.

 In order to maximize the reliability of storage systems, it is desirable to obtain codes where lost data can be repaired by contacting a small number of nodes $r$.

\subsection{Equivalent concepts of $(r, \delta)$-locality and properties}

The concept of $(r, \delta)$-locality can be reformulated mathematically with different equivalent ways as follows.

\begin{itemize}

\item $(r,\delta)$-LRC codes by puncturing:

a code $\mathcal C$ is an $(r,\delta)$-LRC code if for any $i \in [n]$ there exists a subset $\mathcal R_i \subseteq [n]$ such that
 \begin{enumerate}
\item [(1)] $i \in \mathcal R_i$ and $\#\mathcal R_i \leq r + \delta - 1$;
\item [(2)] the minimum distance of the code $\mathcal{C}|_{\mathcal R_i}$ is at least $\delta$,
 \end{enumerate}
where $\mathcal{C}|_{\mathcal R_i}$ is obtained by puncturing $\mathcal C$ on the coordinates set $[n]\setminus \mathcal R_i$.

\item $(r,\delta)$-LRC codes by considering the parity-check matrix:

a code $\mathcal C$ is an $(r,\delta)$-LRC code if for any $i \in [n]$ the parity-check matrix $H$ of $\mathcal C$ contains a ($v_i \times n$) submatrix $H_i$, $1 \leq v_i \leq n - k$, such that
 \begin{enumerate}
\item [(1)] $H_i$ has support $\mathcal R_i$;
\item [(2)] any $\delta - 1$ columns of $H_i$ with indices drawn from $\mathcal R_i$ are linearly independent.
 \end{enumerate}

 \end{itemize}
 Let $H_{i}'$ denote the submatrix of $H_i$ consisting of all nonzero columns of $H_i$, then the parameters of linear code $\mathcal{C}|_{\mathcal R_i}=\mathcal C^{[n]\setminus \mathcal R_i}$ is
  $$[\#\mathcal R_i \leq r + \delta - 1,~ k_i \leq r, ~d_i \geq \delta],$$
  and the parameters of linear code $\langle H_{i}' \rangle=(\mathcal C^{\bot})_{[n]\setminus \mathcal R_i}$ is
   $$[\#\mathcal R_i \leq r + \delta - 1, ~\geq \delta-1,~ \leq r+1].$$

Let $\mathcal C$ be an $(r,\delta)$-LRC code, we can find a full-rank parity-check matrix $H$ of $\mathcal C$ that is divided into $l+1$ parts
\begin{equation}\label{h}
H=\left[\begin{array}{c}H_1\\H_2\\...\\H_l\\H^*\end{array}\right].
\end{equation}
The rows in $H_i$, $i\in [l]$, are called \emph{local rows}, $H_i$ is said to be a \emph{local group}; the rows in $H^*$ are called \emph{global rows}, and $H^*$ is called the \emph{global group}.

Let $S_i$ be the support of local group $H_i$, $i\in[l]$, then
 \begin{enumerate}
\item [(1)] $\#S_i \leq r+\delta -1$;
\item [(2)] The union of the supports of $l$ local groups $\#(S_1\cup ... \cup S_l) = n$, and any $l-1$ local groups can not cover all the $n$ coordinates.
  \end{enumerate}

Note that, the number of rows in each local group $\geq \delta-1$, we get the following proposition.
\begin{proposition}\label{l}
If $\mathcal C$ is an $(r,\delta)$-LRC code, then $l$ is in the following range
\begin{equation}\label{ll}
\lceil\frac{k}{r}\rceil \leq l \leq \lfloor \frac{n-k}{\delta - 1}\rfloor.
\end{equation}
\end{proposition}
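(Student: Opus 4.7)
The plan is to establish the two inequalities separately, extracting from the decomposition (\ref{h}) two facts: (i) the $n-k$ rows of $H$ are linearly independent, because $H$ has full rank $n-k$; and (ii) the local-group supports satisfy $S_1\cup\cdots\cup S_l=[n]$, as stated in item (2) immediately following (\ref{h}).

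For the upper bound $l \leq \lfloor (n-k)/(\delta-1)\rfloor$, I would proceed by a row count. Each local group $H_i$ is a $v_i\times n$ submatrix in which, by the parity-check reformulation of $(r,\delta)$-locality, any $\delta-1$ columns indexed by $S_i$ are linearly independent as vectors of $\mathbb{F}_q^{v_i}$; this forces $v_i\geq \delta-1$. Summing the row counts of the $l$ local groups together with the (possibly empty) global group $H^*$ and using linear independence of all rows of $H$ yields
$$l(\delta-1)\;\leq\;\sum_{i=1}^{l}v_i\;\leq\;n-k,$$
and the bound follows by taking the integer floor.

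For the lower bound $l \geq \lceil k/r\rceil$, I would invoke the puncturing formulation of $(r,\delta)$-locality. Set $C_i := \mathcal{C}|_{S_i}$, the code obtained by puncturing $\mathcal{C}$ on $[n]\setminus S_i$. Since $\#S_i\leq r+\delta-1$ and the local-recovery condition guarantees that the minimum distance of $C_i$ is at least $\delta$, the classical Singleton bound gives $\dim C_i \leq \#S_i-(\delta-1)\leq r$. Because $S_1\cup\cdots\cup S_l=[n]$, the projection map
$$\pi:\mathcal{C}\longrightarrow C_1\oplus\cdots\oplus C_l,\qquad \mathbf{c}\longmapsto (\mathbf{c}|_{S_1},\ldots,\mathbf{c}|_{S_l})$$
is injective, since any codeword in its kernel has all coordinates equal to zero. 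Consequently $k=\dim\mathcal{C}\leq \sum_{i=1}^{l}\dim C_i\leq lr$, and integrality yields $l\geq \lceil k/r\rceil$.

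The core of the argument is just two counting inequalities, so no single step is a serious obstacle. The only delicate point lies in cleanly harvesting the two structural facts from the decomposition (\ref{h})---the linear independence of the local rows (used in the row count for the upper bound) and the covering of $[n]$ by the supports (used to see that the projection $\pi$ is injective for the lower bound)---both of which are built into the definition of a valid local-group decomposition of a full-rank parity-check matrix of an $(r,\delta)$-LRC code.
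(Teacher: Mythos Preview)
Your proof is correct. For the upper bound you use exactly the paper's argument: the paper's entire justification is the one-line remark immediately preceding the proposition, ``the number of rows in each local group $\geq \delta-1$'', from which $l(\delta-1)\leq n-k$ follows by the row count you give. For the lower bound the paper offers no argument at all, so your Singleton-plus-injective-projection proof (giving $k\leq \sum_i \dim \mathcal{C}|_{S_i}\leq lr$) is a genuine addition rather than a departure; it is the standard way to justify $l\geq\lceil k/r\rceil$ and relies precisely on the two structural properties the paper records just before the proposition, namely $\#S_i\leq r+\delta-1$ and $\bigcup_i S_i=[n]$.
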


The following generalization of the Singleton bound for LRC codes was among others proven in \cite{Kamath2014} (see. [Theorem 3.1]), \cite{Tamo2014} (see. [Construction 8] and [Theorem 5.4]), and \cite{Prakash2012} (see.[Theorem 2]).

\begin{theorem} [Singleton-like Bound] \label{$d$-optimal}
The minimum distance $d$ of an $[n,k,d]_q$ linear $(r, \delta)$-LRC code $\mathcal C$ is upper bounded by
\begin{equation}\label{rd_Singleton}
d \leq n-k+1-(\lceil\frac{k}{r}\rceil-1)(\delta-1).
\end{equation}
\end{theorem}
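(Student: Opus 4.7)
The plan is to establish the Singleton-like bound via a greedy puncturing argument on the dimension of restricted codes, analogous to the classical Singleton-bound proof for pure $r$-LRC codes. The strategy reduces to exhibiting a subset $T \subseteq [n]$ with $\dim(\mathcal{C}|_T) \leq k - 1$ and $|T| \geq k - 1 + (\lceil k/r \rceil - 1)(\delta - 1)$. Given such a $T$, the restriction map $\mathcal{C} \to \mathcal{C}|_T$ has nontrivial kernel, producing a nonzero codeword supported on $[n] \setminus T$ of weight at most $n - |T|$, and hence $d \leq n - |T| \leq n - k + 1 - (\lceil k/r \rceil - 1)(\delta - 1)$.

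I would build $T$ iteratively in two stages. Starting from $T_0 = \emptyset$, the first stage performs $t = \lceil k/r \rceil - 1$ steps; at step $i$, I pick a coordinate $j_i \notin T_{i-1}$ whose singleton already contributes new dimension to $\dim(\mathcal{C}|_{\cdot})$ (such a coordinate exists as long as $\dim(\mathcal{C}|_{T_{i-1}}) \leq k - 1 < k$), and set $T_i := T_{i-1} \cup R_{j_i}$ for a repair set $R_{j_i}$ provided by Definition~\ref{LRC}. Writing $\Delta_i := \dim(\mathcal{C}|_{T_i}) - \dim(\mathcal{C}|_{T_{i-1}})$, the whole argument rests on two elementary estimates: first, $\Delta_i \leq r$, because the local code $\mathcal{C}|_{R_{j_i}}$ has length $\leq r + \delta - 1$ and minimum distance $\geq \delta$, hence dimension $\leq r$ by the ordinary Singleton bound; second, $|T_i| - |T_{i-1}| \geq \Delta_i + \delta - 1$, because the kernel of the natural projection $\mathcal{C}|_{T_i} \to \mathcal{C}|_{T_{i-1}}$ embeds into $\mathcal{C}|_{R_{j_i}}$ as a subcode of dimension $\Delta_i$ supported in $R_{j_i} \setminus T_{i-1}$ which still has minimum distance $\geq \delta$, and Singleton applied to this auxiliary subcode then forces its support to have size at least $\Delta_i + \delta - 1$.

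The first estimate guarantees that after $t = \lceil k/r \rceil - 1$ iterations one has $\dim(\mathcal{C}|_{T_t}) \leq (\lceil k/r \rceil - 1) r \leq k - 1$ (by checking separately the cases $r \mid k$ and $r \nmid k$), so the loop is executable for that many steps. A second stage then adjoins individual coordinates that each raise the dimension by $1$, until $\dim(\mathcal{C}|_T) = k - 1$; this preserves the excess quantity $|T| - \dim(\mathcal{C}|_T)$. Summing the second estimate over the $t$ iterations yields $|T_t| \geq \dim(\mathcal{C}|_{T_t}) + t(\delta - 1)$, and after the second stage one obtains $|T| \geq (k - 1) + t(\delta - 1)$, which is exactly the lower bound needed in the reduction above.

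The main obstacle I anticipate is the bookkeeping around the greedy loop: one must guarantee that $\Delta_i \geq 1$ at each step so that Singleton legitimately applies to the auxiliary subcode, verify that the process cannot exhaust $[n]$ prematurely (since $T_{i-1} = [n]$ would force $\dim(\mathcal{C}|_{T_{i-1}}) = k$, contradicting $\dim \leq k - 1$), and carefully justify the embedding of the projection kernel into the local code $\mathcal{C}|_{R_{j_i}}$ together with the minimum-distance inheritance used in the second estimate. Once these verifications are in place, the remainder reduces to routine counting.
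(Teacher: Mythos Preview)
The paper does not supply its own proof of this theorem; it merely cites \cite{Kamath2014}, \cite{Tamo2014}, and \cite{Prakash2012} for the result. Your proposal is correct and is essentially the standard argument from those references (the Gopalan--Huang--Simitci--Yekhanin greedy puncturing argument for $r$-locality, extended to $(r,\delta)$-locality as in Prakash et al.\ and Kamath et al.): iteratively absorb repair sets to build a set $T$ with $\dim(\mathcal{C}|_T)\le k-1$ and $|T|\ge (k-1)+(\lceil k/r\rceil-1)(\delta-1)$, then exhibit a nonzero codeword supported on the complement. Your two key estimates and the bookkeeping you flag (ensuring $\Delta_i\ge1$, surjectivity of $\mathcal{C}|_{T_i}\to\mathcal{C}|_{T_{i-1}}$, and the embedding of its kernel into $\mathcal{C}|_{R_{j_i}}$ via $R_{j_i}\subseteq T_i$) are all sound, so there is nothing to correct.
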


For $\delta=2$ and $r=k$ it coincides with the classical standard Singleton bound. Throughout this contribution we call a code \emph{$d$-optimal} if its minimum distance  meets the bound (\ref{rd_Singleton}) with equality. The parameter $r$ is related to the repair cost of locally repairable code. The smaller $r$ means fewer storage nodes need to be downloaded to repair the failed storage nodes.
%A code $\mathcal C$ is called $r$-local if it has $(r,2)$-locality.
If $\mathcal C$ is an $(r, \delta)-$LRC code and we cannot decrease the value of $r$ while the other parameters remain unchanged, then we say that $\mathcal C$ is \emph{$r$-optimal}.
In this paper, we mainly focus on the $r-$optimal $(r, \delta)-$LRC codes, that is, all the optimal $(r, \delta)-$LRC codes are both $r$-optimal and $d$-optimal, with minimum distance  $d \geq \delta > 2$, and dimension $k > r \geq 1$.

%In this paper, we mainly focus on the $r-$optimal $(r, \delta)-$LRC with minimal distance $d \geq \delta > 2$, and dimension $k > r \geq 1$.
%The parity-check matrix $H$ must have $n-k+1-(\lceil\frac{k}{r}\rceil-1)(\delta-1)$ linearly dependent columns.

%\textbf{Proposition 3 [1]:} The minimum distance $d$ of an $(r, \delta)$ code $\mathcal C$ is upper bounded by
%$$d \leq n-k+1-(\lceil\frac{k}{r}\rceil-1)(\delta-1).~~~~~(2)$$ \\

%\textbf{Pf:}
%Consider the first $\lceil\frac{k}{r}\rceil-1$ local~parity-check~groups of $H$.\\
%The total number of the rows in these $\lceil\frac{k}{r}\rceil-1$ groups is $m_1 \geq (\lceil\frac{k}{r}\rceil-1)(\delta-1)$,\\
%the total number of the non-zero columns in these $\lceil\frac{k}{r}\rceil-1$ groups is $n_1 \leq (\lceil\frac{k}{r}\rceil-1)(r+\delta-1)$.\\
%Remove the first $\lceil\frac{k}{r}\rceil-1$ local~parity-check~groups and their corresponding non-zero columns, then \\
%the number of the remaining rows is $m_2=n-k-m_1 \leq n-k-(\lceil\frac{k}{r}\rceil-1)(\delta-1)$,\\
%the number of the remaining columns is $n_2=n-n_1 \geq n-(\lceil\frac{k}{r}\rceil-1)(r+\delta-1)$.\\
%By
%$$n_2 \geq n-(\lceil\frac{k}{r}\rceil-1)(r+\delta-1) = n-r(\lceil\frac{k}{r}\rceil-1)- (\lceil\frac{k}{r}\rceil-1)(\delta-1) > n-k-(\lceil\frac{k}{r}\rceil-1)(\delta-1) \geq m_2,$$
%we have
%$$n_2 \geq  n-k-(\lceil\frac{k}{r}\rceil-1)(\delta-1)+1 > m_2. $$
%The first $n-k-(\lceil\frac{k}{r}\rceil-1)(\delta-1)+1$ columns in the remaining $n_2$ columns must be linearly dependent, since the non-zero entries of these are contained in only $m_2$ rows.
The following lemma gives an important property of optimal $(r, \delta)-$LRC code. This property is proved in \cite{Hao2017_2}, we still present a brief proof for completeness.
%The following property of $(r, \delta)-$LRC is of great help to the follow-up research. This property has been proved in \cite{Hao2017_2}, but we present proof for the sake of completeness.

\begin{lemma}[\cite{Hao2017_2}]\label{H'}
 Let $\mathcal C$ be an optimal $(r,\delta)-$LRC code, and the parity-check matrix is shown in (\ref{h}).
 Pick any $\lceil\frac{k}{r}\rceil-1$ local groups in $H$, let $H'$ be the $m' \times n'$ matrix obtained from $H$ by deleting the rows in these $\lceil\frac{k}{r}\rceil-1$ groups and all the columns whose coordinates are covered by the supports of these groups. Then $H'$ has full rank and the $[n',k',d']$ linear code $\mathcal C^\prime$ defined by the parity-check matrix $H'$ is an MDS code with $d' = d$.
 \end{lemma}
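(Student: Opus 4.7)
The plan is to reinterpret the code $\mathcal{C}'$ defined by $H'$ as the shortening $\mathcal{C}_T$ of $\mathcal{C}$ on the set $T = S_1 \cup \cdots \cup S_t$ (where $t = \lceil k/r \rceil - 1$ is the number of selected local groups and $S_i$ is the support of the $i$th such group), and then to pin down its parameters by squeezing the Singleton bound for $\mathcal{C}'$ against the $d$-optimality of $\mathcal{C}$. I would let $H_0$ be the stack of the $t$ chosen local groups (containing $m_0$ rows in total) and $\tilde{H}$ the stack of the remaining rows, so that $H' = \tilde{H}|_{[n]\setminus T}$. Since each $H_i$ is supported on $S_i \subseteq T$, for any $c' \in \mathbb{F}_q^{[n]\setminus T}$ the zero-extension $\tilde{c}$ to all of $[n]$ satisfies $H_0 \tilde{c}^T = 0$ automatically, so the condition $H'(c')^T = 0$ translates into $H \tilde{c}^T = 0$. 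Hence $c' \in \mathcal{C}'$ if and only if $\tilde{c} \in \mathcal{C}$ with $\tilde{c}|_T = 0$, i.e., $\mathcal{C}' = \mathcal{C}_T$, and consequently $d' \geq d$ because every nonzero codeword of $\mathcal{C}'$ lifts to a nonzero codeword of $\mathcal{C}$ of the same Hamming weight.

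Next I would compute $\dim \mathcal{C}'$ by duality. Setting $Z_T := \{v \in \mathcal{C}^\perp : \support(v) \subseteq T\}$ and using $(\mathcal{C}_T)^\perp = (\mathcal{C}^\perp)^T$ together with the standard formula $\dim (\mathcal{C}^\perp)^T = (n-k) - \dim Z_T$, I get
\begin{equation*}
\dim \mathcal{C}' \;=\; (n - n_0) - [(n - k) - \dim Z_T] \;=\; k - n_0 + \dim Z_T,
\end{equation*}
where $n_0 = \# T$. The rows of $H_0$ are $m_0$ linearly independent elements of $Z_T$, so $\dim Z_T \geq m_0 \geq t(\delta - 1)$, the last inequality using that each local group contains at least $\delta - 1$ rows (as noted just before Proposition~\ref{l}).

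Finally, I would feed $\mathcal{C}'$ into the ordinary Singleton bound: $d' \leq n' - \dim \mathcal{C}' + 1 = n - k + 1 - \dim Z_T$. Combining with $d' \geq d = n - k + 1 - t(\delta - 1)$ forces $\dim Z_T \leq t(\delta - 1)$, hence $\dim Z_T = m_0 = t(\delta - 1)$. Consequently $Z_T$ is exactly the row span of $H_0$, so no nonzero combination of the rows of $\tilde{H}$ is supported on $T$; restricting to $[n] \setminus T$ therefore preserves linear independence and yields $\rank H' = m'$, while the chain of equalities $d' = d = n' - \dim \mathcal{C}' + 1$ exhibits $\mathcal{C}'$ as MDS. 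The only real obstacle is the bookkeeping in the dual shortened/punctured interplay needed to produce the formula $\dim \mathcal{C}_T = k - n_0 + \dim Z_T$; once it is in hand, $d$-optimality and the row-count lower bound $m_0 \geq t(\delta - 1)$ mesh perfectly to deliver both conclusions of the lemma simultaneously.
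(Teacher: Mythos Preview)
Your proof is correct and follows essentially the same strategy as the paper: identify $\mathcal{C}'$ with the shortened code $\mathcal{C}_T$ (so $d' \geq d$), then squeeze the Singleton bound for $\mathcal{C}'$ against the $d$-optimality of $\mathcal{C}$ to force equality and full rank. The paper's version is slightly leaner because it reads $n' - k' = \rank(H') \leq m' \leq (n-k) - t(\delta - 1)$ directly from the fact that $H'$ is by construction a (possibly redundant) parity-check matrix of $\mathcal{C}'$, thereby bypassing the $Z_T$-based duality bookkeeping entirely.
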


\begin{proof}
The number of the rows in $H'$ is $$ m' \leq n-k-(\lceil\frac{k}{r}\rceil-1)(\delta-1),$$
and the number of the columns is $$ n' \geq n-(\lceil\frac{k}{r}\rceil-1)(r+\delta-1).$$
Since $r(\lceil\frac{k}{r}\rceil-1) < k$, $ n'>m'$, we have $\text{Rank}(H') \leq m'$.

Since $H'$ is a parity-check matrix of the linear code $\mathcal C^\prime$, we can get
\begin{equation}\label{d1}
d' \leq n'-k'+1 =\text{Rank}(H')+1 \leq m'+1 \leq n-k-(\lceil\frac{k}{r}\rceil-1)(\delta-1)+1.
\end{equation}
Because $\mathcal C^\prime$ is a shortened code of $\mathcal C$,
\begin{equation}\label{d2}
d'\geq d = n-k+1-(\lceil\frac{k}{r}\rceil-1)(\delta-1).
\end{equation}
Combining inequalities (\ref{d1}) and (\ref{d2}), we have
$$d' = n'-k'+1 = m'+1 = n-k-(\lceil\frac{k}{r}\rceil-1)(\delta-1)+1=d,$$
which implies that $\mathcal C^\prime$ is an MDS code with $d' = d$ and $\text{Rank}(H') = m' = n-k-(\lceil\frac{k}{r}\rceil-1)(\delta-1)$.
\end{proof}

Let $\mathcal C$ be an $[n, k]_q$ linear codes with $(r, \delta)-$locality ($\delta > 2$) and optimal minimum distance $d = n-k+1-(\lceil\frac{k}{r}\rceil-1)(\delta-1)$, the parity-check matrix $H$ is shown in (\ref{h}). There are some significant and practical conclusions about code $\mathcal C$.

\begin{corollary} \label{Hi}
  Let $\mathcal C$ be an optimal $(r, \delta)-$LRC code. Then there are exactly $\delta-1$ rows in each local group $H_i$, $i\in [l]$.
 \end{corollary}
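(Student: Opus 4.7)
The plan is to derive the corollary as an immediate consequence of Lemma~\ref{H'}. Denote by $m_i$ the number of rows in the local group $H_i$. As noted just before Proposition~\ref{l}, each local group must contribute at least $\delta-1$ parity checks in order to endow the corresponding punctured code $\mathcal{C}|_{\mathcal{R}_i}$ with minimum distance at least $\delta$ (apply the classical Singleton bound to $(\mathcal{C}^{\perp})_{[n]\setminus \mathcal R_i}$), so $m_i \geq \delta-1$ for every $i\in[l]$.

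Next, I would fix any subset $S\subseteq [l]$ of cardinality $t := \lceil k/r\rceil - 1$. The hypothesis $k>r$ guarantees $t\geq 1$, and Proposition~\ref{l} gives $l\geq \lceil k/r\rceil > t$, so such subsets exist. Applying Lemma~\ref{H'} to this $S$, the matrix $H'$ obtained by deleting the rows of the groups indexed by $S$ (and the columns in their combined support) has precisely $n-k-\sum_{i\in S} m_i$ rows; at the same time Lemma~\ref{H'} establishes that this row count equals $n-k-(\lceil k/r\rceil-1)(\delta-1)$. Equating the two expressions yields
$$\sum_{i\in S} m_i \;=\; t(\delta-1).$$

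Since each of the $t$ summands is at least $\delta-1$, the only way the sum can equal $t(\delta-1)$ is if every summand is exactly $\delta-1$. Letting $S$ range over all $t$-subsets of $[l]$ (each index $i\in[l]$ belongs to some such $S$ because $l>t$), we conclude that $m_i=\delta-1$ for every $i\in[l]$, which is exactly the statement of the corollary.

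I do not expect a substantial obstacle: the argument is essentially a book-keeping reduction to Lemma~\ref{H'}. The only subtlety worth double-checking is the existence of enough local groups to place each index into a deletion subset of the required size, which is handled by combining the standing hypothesis $k>r$ with the lower bound $l\geq \lceil k/r\rceil$ from Proposition~\ref{l}.
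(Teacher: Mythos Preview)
Your proposal is correct and follows essentially the same approach as the paper's own proof: both use the row count identity $m' = n-k-(\lceil k/r\rceil-1)(\delta-1)$ from (the proof of) Lemma~\ref{H'} to conclude that the $\lceil k/r\rceil-1$ removed groups contribute exactly $(\lceil k/r\rceil-1)(\delta-1)$ rows in total, then combine this with $m_i\geq\delta-1$ and the arbitrariness of the choice of removed groups. Your write-up is slightly more explicit about verifying $l>\lceil k/r\rceil-1$ so that every index can appear in some deletion set, but otherwise the arguments coincide.
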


\begin{proof}
From the proof of Lemma \ref{H'}, we have $m' = n-k-(\lceil\frac{k}{r}\rceil-1)(\delta-1)$, which means that the total number of the rows in the $\lceil\frac{k}{r}\rceil-1$ removed local groups is $n-k-m'= (\lceil\frac{k}{r}\rceil-1)(\delta-1)$. Since each local group contains at least $\delta-1$ rows, we get each of the $\lceil\frac{k}{r}\rceil-1$ groups contains exactly $\delta-1$ rows. Because the removed groups are randomly selected, each of the $l$ groups in $H$ contains exactly $\delta-1$ rows.
\end{proof}

\begin{corollary}\label{HMDS} Let $\mathcal C$ be an optimal $(r, \delta)-$LRC code. Then for any $i\in [l]$, the code $\langle H_{i}' \rangle $ is an $[\#S_i \leq r+\delta -1, \delta-1, \#S_i-\delta+2]$ MDS code, that is, the punctured code $\mathcal{C}|_{S_i}$ of $\mathcal C$ is an $[\#S_i \leq r+\delta -1, \#S_i-\delta+1, \delta]$ MDS code.
 \end{corollary}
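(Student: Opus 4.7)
The plan is to combine Corollary~\ref{Hi} with the parity-check-matrix formulation of $(r,\delta)$-locality recalled earlier in the excerpt, then invoke the standard MDS criterion on columns of a generator matrix.

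First I would fix notation: by Corollary~\ref{Hi}, each local group $H_i$ has exactly $\delta-1$ rows, so $H_i'$ (the submatrix consisting of the $\#S_i$ nonzero columns of $H_i$) is a $(\delta-1)\times \#S_i$ matrix. From the parity-check-matrix characterization of $(r,\delta)$-locality stated before Proposition~\ref{l}, any $\delta-1$ columns of $H_i$ indexed by $\mathcal R_i=S_i$ are linearly independent. In particular, $H_i'$ itself has rank $\delta-1$, so $\langle H_i'\rangle$ is an $[\#S_i,\delta-1]$ linear code, and $H_i'$ is a generator matrix for it.

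Next I would apply the well-known MDS criterion: an $[N,K]$ code is MDS if and only if every $K$ columns of one (equivalently any) generator matrix are linearly independent. Since any $\delta-1=\dim \langle H_i'\rangle$ columns of $H_i'$ are linearly independent by the locality condition, the code $\langle H_i'\rangle$ is MDS, with minimum distance
\[
\#S_i-(\delta-1)+1=\#S_i-\delta+2.
\]
This gives the first asserted parameter set $[\#S_i\le r+\delta-1,\,\delta-1,\,\#S_i-\delta+2]$.

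Finally, I would transfer the MDS property to the punctured code $\mathcal C|_{S_i}$. As recalled just before the projective-space subsection, duality of puncturing and shortening yields $(\mathcal C|_{S_i})^{\perp}=(\mathcal C^{\perp})_{[n]\setminus S_i}=\langle H_i'\rangle$; since the dual of an MDS code is MDS (as stated in the paragraph following Lemma~\ref{MDS}), the punctured code $\mathcal C|_{S_i}$ is itself $[\#S_i,\,\#S_i-\delta+1,\,\delta]$ MDS, completing the proof. I do not foresee any real obstacle here: the only subtle point is to verify that the rank of $H_i'$ is exactly $\delta-1$ rather than strictly smaller, but this is immediate from the $(r,\delta)$-locality property once Corollary~\ref{Hi} has pinned the row count.
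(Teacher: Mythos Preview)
Your argument is correct and is precisely the reasoning the paper leaves implicit: Corollary~\ref{HMDS} is stated there without proof, as a direct consequence of Corollary~\ref{Hi} (each $H_i$ has exactly $\delta-1$ rows), the locality condition that any $\delta-1$ columns of $H_i'$ are independent, and the identification $\langle H_i'\rangle=(\mathcal C^{\bot})_{[n]\setminus S_i}$ recorded just before Proposition~\ref{l}. The passage to $\mathcal C|_{S_i}$ via the puncturing/shortening duality and the MDS-dual fact is exactly how the paper's ``that is'' clause is meant to be read.
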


\begin{corollary}
 \label{delta} If $\mathcal C$ is an $r-$optimal $(r, \delta)-$LRC code with minimum distance $d$ attaining the Singleton-like bound. Then there is an $i\in [l]$, such that the code $\langle H_{i}' \rangle $ is an $[r+\delta -1, \delta-1, r+1]$ MDS, that is, the code $\mathcal{C}|_{S_i}$ is an $[ r+\delta -1, r, \delta]$ MDS code.
 \end{corollary}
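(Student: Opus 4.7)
The plan is to argue by contradiction, exploiting the $r$-optimality hypothesis. Suppose, for contradiction, that $\#S_i \leq r+\delta-2$ for every $i\in[l]$; I will show that this forces $\mathcal C$ to be an $(r-1,\delta)$-LRC code with the same parameters $[n,k,d]$, contradicting the definition of $r$-optimality.

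First, I would invoke Corollary \ref{HMDS}, which tells us that for every $i\in[l]$ the punctured code $\mathcal C|_{S_i}$ is an $[\#S_i,\,\#S_i-\delta+1,\,\delta]$ MDS code, so it has minimum distance exactly $\delta$ and full support $S_i$. Under the standing assumption, each such code has length $\#S_i \leq (r-1)+\delta-1$, so for every coordinate $j\in S_i$ the subcode $\mathcal C|_{S_i}$ serves as a punctured subcode of $\mathcal C$ witnessing $(r-1,\delta)$-locality for $c_j$ in the sense of Definition \ref{LRC}. Because the supports $S_1,\ldots,S_l$ cover $[n]$ (by the discussion following equation (\ref{h})), every coordinate of $\mathcal C$ inherits $(r-1,\delta)$-locality, so $\mathcal C$ is indeed an $(r-1,\delta)$-LRC code.

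To close the contradiction, I would compare the Singleton-like bound (\ref{rd_Singleton}) for the two pairs $(r,\delta)$ and $(r-1,\delta)$: the hypothesis that $d$ attains the bound for $(r,\delta)$, together with the fact that $d$ must also satisfy the bound for $(r-1,\delta)$, forces $\lceil k/r\rceil = \lceil k/(r-1)\rceil$ and hence $d$-optimality also for locality parameter $r-1$. Thus $\mathcal C$ is an $(r-1,\delta)$-LRC code with the same triple $[n,k,d]$, directly contradicting the $r$-optimality of $\mathcal C$. Consequently some $\#S_i$ must equal $r+\delta-1$, and for that index Corollary \ref{HMDS} immediately yields that $\langle H_i' \rangle$ is an $[r+\delta-1,\delta-1,r+1]$ MDS code, equivalently that $\mathcal C|_{S_i}$ is an $[r+\delta-1,r,\delta]$ MDS code. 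The main obstacle I anticipate is the clean transition from ``every $\#S_i \leq r+\delta-2$'' to ``$\mathcal C$ is $(r-1,\delta)$-LRC''; this step rests crucially on both the MDS structure provided by Corollary \ref{HMDS} (to guarantee the required minimum distance $\delta$) and on the coverage property of the local supports (to hit every coordinate of $[n]$), after which the contradiction with $r$-optimality is automatic.
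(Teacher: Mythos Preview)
Your proposal is correct and follows essentially the same route as the paper: both argue by contradiction, invoke Corollary~\ref{HMDS} to see that every $\mathcal C|_{S_i}$ is an MDS code of length $\#S_i\le r+\delta-2$, observe that the $S_i$ cover $[n]$, and conclude that $\mathcal C$ already has $(r',\delta)$-locality for some $r'<r$, contradicting $r$-optimality. The one superfluous piece in your write-up is the detour through the Singleton-like bound to force $\lceil k/r\rceil=\lceil k/(r-1)\rceil$: by the paper's definition, $r$-optimality simply means one cannot decrease $r$ while keeping $[n,k,d]$ and $\delta$ fixed, so the contradiction is immediate once you have established $(r-1,\delta)$-locality---you do not need $d$-optimality at the smaller locality parameter.
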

\begin{proof}
 By Corollary \ref{HMDS}, we only need to prove that there is a local group of length $r+\delta -1$.
 Assuming that the length of each local group is less than $r+\delta -1$, then for any $i\in [l]$, the code $\mathcal{C}|_{S_i}$ is an $[\#S_i < r+\delta -1, \#S_i-\delta+1, \delta]$ MDS code. We can find an $r_i < r$ such that $n_i = r_i+\delta -1$, and $\mathcal{C}|_{S_i}$ is an $[\#S_i = r_i+\delta -1, r_i, \delta]$ MDS code. This implies that the $i-$th code symbol $\textbf{c}_i$ has locality $(r_i, \delta)$. Further, we can get that $\mathcal C$ has locality $(r', \delta)$ for some $r'<r$, this contradicts that $\mathcal C$ is $r-$optimal.
\end{proof}

\begin{lemma}[\cite{Song2014}]\label{k=rs}
Assume $\mathcal C$ is an optimal $(r, \delta)$-LRC code. If $r\mid k$ and $r < k$, then the following conditions hold:
 \begin{enumerate}
\item [(1)] $S_1$,..., $S_l$ are mutually disjoint;
\item [(2)] for any $i \in [l]$, $\#S_i = r + \delta - 1$ and the punctured code $\mathcal{C}|_{S_i}$ is an $[r + \delta - 1, r, \delta]$ MDS code.
 \end{enumerate}
In particular, we have $(r + \delta - 1) \mid  n$.
\end{lemma}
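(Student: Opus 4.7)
The plan is a dimension-counting argument combining (i) the natural projection $\pi : \mathcal{C} \to \bigoplus_{i=1}^{l}\mathcal{C}|_{S_i}$, $c\mapsto(c|_{S_1},\ldots,c|_{S_l})$, and (ii) the MDS rigidity of the local punctured codes together with Lemma~\ref{H'}. Let $s=k/r\geq 2$. By Corollary~\ref{Hi} each local group $H_i$ contains exactly $\delta-1$ rows, and by Corollary~\ref{HMDS} each $\mathcal{C}|_{S_i}$ is an $[\#S_i,\#S_i-\delta+1,\delta]$ MDS code; in particular $\#S_i\leq r+\delta-1$.

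Since $\bigcup_{i=1}^{l} S_i = [n]$, the map $\pi$ is injective (a codeword in $\ker\pi$ vanishes on every coordinate). Passing to dimensions gives
$$k \;\leq\; \sum_{i=1}^{l}(\#S_i-\delta+1) \;\leq\; l(r+\delta-1)-l(\delta-1)\;=\;lr,$$
which recovers the lower bound $l\ge s$ from Proposition~\ref{l}. The covering bound $\sum_i \#S_i\ge\#\bigcup_i S_i=n$ together with $\sum_i \#S_i\le l(r+\delta-1)$ yields $n\le l(r+\delta-1)$. The heart of the proof is to upgrade this to the equality $n=l(r+\delta-1)$. Once equality is in hand, the chain $n\le\sum_i\#S_i\le l(r+\delta-1)=n$ collapses to equalities throughout, which simultaneously forces $\#S_i=r+\delta-1$ for every $i$ (the upper bound is saturated) and, because $\sum_i \#S_i=\#\bigcup_i S_i$, pairwise disjointness of the $S_i$'s. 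The divisibility $(r+\delta-1)\mid n$ then follows from $n=l(r+\delta-1)$, and the MDS parameters of each $\mathcal{C}|_{S_i}$ are restated from Corollary~\ref{HMDS}.

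The main obstacle is establishing the equality $n=l(r+\delta-1)$; this is where the hypothesis $r\mid k$ (i.e.\ $k=rs$) is used decisively. My plan is to apply Lemma~\ref{H'} to an arbitrary choice of $s-1$ local groups with support union $T$: the shortened code $\mathcal{C}_T$ is MDS with minimum distance $d$ and dimension
$$k'\;=\;rs-\#T+(s-1)(\delta-1).$$
Combined with $\#T\leq (s-1)(r+\delta-1)$, a direct computation gives $k'\geq r$, so $\#T\le (s-1)(r+\delta-1)$ with equality iff $k'=r$. Corollary~\ref{delta}, which exploits $r$-optimality to furnish at least one local group of size exactly $r+\delta-1$, provides the base case. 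Running the MDS-rigidity of Lemma~\ref{H'} over every choice of $s-1$ local groups, and cross-referencing with the injective-projection inequality already established, forces $\#T=(s-1)(r+\delta-1)$ for every such sub-collection, i.e.\ every $s-1$ of the $S_i$'s are disjoint and of maximal size. Propagating this to the full collection $\{S_1,\ldots,S_l\}$ gives both $\#S_i=r+\delta-1$ and pairwise disjointness, whence $n=l(r+\delta-1)$ and the lemma's conclusions follow.
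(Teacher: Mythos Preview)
The paper does not supply its own proof of this lemma; it is quoted from \cite{Song2014}. So there is no in-paper argument to compare against, and your proposal must stand on its own.

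Your overall architecture is sound: reduce everything to the single equality $n=l(r+\delta-1)$, then read off disjointness and maximal block size from the collapsed chain $n\le\sum_i\#S_i\le l(r+\delta-1)=n$. The dimension formula $k'=rs-\#T+(s-1)(\delta-1)$ coming out of Lemma~\ref{H'} is correct, and it does give $k'\ge r$. The problem is the reverse direction. You assert that ``cross-referencing with the injective-projection inequality already established forces $\#T=(s-1)(r+\delta-1)$'', but the injective-projection bound you proved is $k\le lr$, which is a statement about $\mathcal{C}$, not about $\mathcal{C}'$; it does not by itself cap $k'$ at $r$. What actually closes the gap is the observation that the shortened code $\mathcal{C}'$ \emph{inherits $(r,\delta)$-locality} from the $l-s+1$ surviving local groups (the rows of each remaining $H_i$, restricted to the columns outside $T$, still have any $\delta-1$ columns independent and support size $\le r+\delta-1$). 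Applying the Singleton-like bound~(\ref{rd_Singleton}) to the MDS code $\mathcal{C}'$ then forces $\lceil k'/r\rceil=1$, i.e.\ $k'\le r$. This is the missing ingredient, and your write-up should make it explicit rather than gesture at ``MDS-rigidity''.

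There is a second, more structural gap. Your propagation step claims that ``every $s-1$ of the $S_i$'s are disjoint and of maximal size'' implies pairwise disjointness of all $l$ sets. This is fine when $s\ge 3$ (any pair sits inside some $(s-1)$-tuple), but it fails outright when $s=2$: then $s-1=1$, and the conclusion $\#T=r+\delta-1$ for singletons tells you only that each $\#S_i=r+\delta-1$, with no disjointness information whatsoever. Since the lemma covers $r<k$ with $r\mid k$, the case $k=2r$ is genuinely in scope and your argument does not handle it. One fix is to pass instead through the global count $n=rs+l(\delta-1)+g$ (with $g$ the number of global rows) and show $g=(l-s)r$ directly, or to run the locality argument on $\mathcal{C}'$ more carefully to pin down $n'$ rather than just $k'$.
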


By the above lemma, and up to a rearrangement of the code coordinates, an optimal $(r, \delta)$-LRC code with $r\mid k$ has parity-check matrix $H$ of the following form:
$$
H=\left[\begin{array}{ccc|ccc}
I_{\delta-1} &       &             & Q_{1}  &        & \\
             &\ddots &             &        & \ddots & \\
             &       & I_{\delta-1}&        &        &Q_{l} \\
\hline
   &0_{(lr-k)\times l(\delta-1)}&   & A_{1}  &\cdots  & A_{l}
\end{array}\right]
$$
where the matrices $Q_i$ and $A_i$ appearing above are of sizes $(\delta - 1) \times r$ and $(lr-k)\times r$, respectively.

The following lemma reveals the relation between the minimum distance  $d$ of optimal $(r, \delta)$-LRC codes and the order $q$ of the finite field.

\begin{lemma} [\cite{Hao2019}] \label{distance} Let $\mathcal C$ be an optimal $q$-ary $(n, k, r, \delta)$-LRC with minimum distance $d$ and dimension $k > r \geq 1$, then one has
$$d \leq \begin{cases}q, & \text { if } r \nmid(k-1), \\ \delta q, & \text { if } r \mid(k-1).\end{cases}$$
\end{lemma}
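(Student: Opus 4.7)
The plan is to reduce the bound to the classical MDS existence bound of Lemma \ref{MDS} by way of Lemma \ref{H'}. Starting from the parity-check matrix $H$, I pick any $\lceil k/r\rceil-1$ local groups and shorten them away; Lemma \ref{H'} guarantees the resulting code $\mathcal C'$ is MDS with parameters $[n',k',d]$ and the same minimum distance $d'=d$. A direct count of deleted rows against deleted columns gives
$$k' \;\geq\; k-(\lceil k/r\rceil-1)\,r,$$
with equality precisely when the selected local groups have pairwise disjoint supports of the maximal size $r+\delta-1$. Once $k'\geq 2$, Lemma \ref{MDS} applied to $\mathcal C'$ delivers $q\geq n'-k'+1=d$.

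For the case $r\nmid(k-1)$ this is essentially all that is needed. Write $k=ar+b$ with $0\leq b<r$; the hypothesis is equivalent to $b\neq 1$ and forces $r\geq 2$ (since $r=1$ automatically satisfies $r\mid(k-1)$). If $b=0$ then $\lceil k/r\rceil=a$ and the estimate yields $k'\geq r\geq 2$; if $b\geq 2$ then $\lceil k/r\rceil=a+1$ and $k'\geq b\geq 2$. In either sub-case Lemma \ref{MDS} gives $d\leq q$ immediately.

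The case $r\mid(k-1)$ is where the real work lies. Here $k=ar+1$ and the dimension estimate only yields $k'\geq 1$. If some admissible choice of local groups produces $k'\geq 2$, the previous argument already gives $d\leq q\leq\delta q$, so I may assume $k'=1$ for every admissible choice. Equality in the dimension bound then forces the selected $a$ local groups to have pairwise disjoint supports of size exactly $r+\delta-1$, each carrying an $[r+\delta-1,r,\delta]$ MDS code (Corollary \ref{HMDS}), and $\mathcal C'$ degenerates to the repetition code $[d,1,d]$. In particular $\mathcal C$ contains a weight-$d$ codeword $\bc^\ast$ that is constant on $T=[n]\setminus(S_1\cup\cdots\cup S_a)$. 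Proposition \ref{l} guarantees at least $a+1$ local groups, so varying the $a$-subset used for shortening produces a family of such constant codewords over different complementary sets. I plan to take an $\mathbb F_q$-linear combination of two such codewords coming from swapped subsets $J_1=\{1,\dots,a\}$ and $J_2=\{1,\dots,a-1,a+1\}$, project onto $S_a\cup S_{a+1}$, and combine with the two local MDS codes supported there to extract a non-trivial MDS subcode of dimension at least $2$ whose length is controlled by $\delta q$, at which point Lemma \ref{MDS} upgrades the estimate to $d\leq\delta q$.

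The main obstacle is precisely this last step. The shortened MDS code in the degenerate sub-case has collapsed to a repetition, so Lemma \ref{MDS} no longer applies directly, and the non-trivial work is to isolate a genuine two-dimensional MDS subcode whose length forces $d\leq\delta q$ rather than merely $d\leq q$. The technical core is tracking how the supports $S_{a+1},\dots,S_l$ interact with $T$ and how the different constant codewords combine modulo the local MDS structure, in such a way that each local group contributes at most $q$ to the length of the extracted MDS subcode and there are at most $\delta$ such contributions.
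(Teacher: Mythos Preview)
The paper does not supply its own proof of Lemma~\ref{distance}; it is quoted from \cite{Hao2019} without argument, so there is nothing in the paper to compare your proposal against.

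On the merits of your proposal itself: the case $r\nmid(k-1)$ is handled cleanly and correctly. Lemma~\ref{H'} gives an MDS code $\mathcal C'$ with $d'=d$, the count $k'\geq k-(\lceil k/r\rceil-1)r$ is exactly what the proof of Lemma~\ref{H'} yields, and your case split on $b$ shows $k'\geq 2$, after which Lemma~\ref{MDS} finishes.

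The case $r\mid(k-1)$, however, is not a proof but a hope. You reduce correctly to the degenerate situation where every admissible shortening gives $k'=1$, so $\mathcal C'$ is the repetition code $[d,1,d]$ and the $a$ removed groups have disjoint supports of full size. From there your plan is to combine two constant codewords from overlapping shortenings and ``extract a non-trivial MDS subcode of dimension at least~$2$ whose length is controlled by $\delta q$''. But you give no mechanism for why the length of that putative subcode should be bounded by $\delta q$; the slogan ``each local group contributes at most $q$ \dots\ and there are at most $\delta$ such contributions'' is unsupported. Swapping one group for another touches only \emph{two} local supports $S_a\cup S_{a+1}$, not $\delta$ of them, and the local MDS codes there have length $r+\delta-1$, not $q$; it is unclear how $\delta$ and $q$ are supposed to enter multiplicatively. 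As you yourself flag, this is the step where the real content lies, and it is currently missing. You should either locate the argument in \cite{Hao2019} or find an alternative route---for instance, bounding $l-\lceil k/r\rceil+1$ directly via the structure of the global rows and the MDS constraints on the $\langle H_i'\rangle$, since in the degenerate regime $d=n'$ is a multiple of $r+\delta-1$ governed by the number of surviving local groups.
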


From Proposition \ref{MDS4} and Lemma \ref{H'},  we can distinguish only three possibilities  for the matrix $H'$.

\begin{proposition} \label{H'MDS} Let $\mathcal C$ be an optimal quaternary $(r, \delta)-$LRC code with minimum distance  $d$.
 \begin{enumerate}
\item [(1)]If $d = 3$, then $H'$ contains two rows and $H'$ is a parity-check matrix of a quaternary $[3, 1, 3]$, $[4, 2, 3]$ or $[5, 3, 3]$ MDS code.
\item [(2)]If $d =4 $, then $H'$ contains three rows and $H'$ is a parity-check matrix of a quaternary $[4, 1, 4]$, $[5, 2, 4]$, or $[6, 3, 4]$ MDS code.
\item [(3)]If $d \geq 5$, then $H'$ has more than three rows and $H'$ is a parity-check matrix of a quaternary $[d, 1, d]$ MDS code.
 \end{enumerate}
\end{proposition}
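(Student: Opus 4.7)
The statement is a direct corollary of Lemma~\ref{H'} combined with the classification of quaternary MDS codes in Proposition~\ref{MDS4}. The observation I would extract first from the proof of Lemma~\ref{H'} is that the chain of inequalities established there collapses to equalities, yielding $\rank(H')=m'=n-k-(\lceil k/r\rceil-1)(\delta-1)=d-1$. Consequently $H'$ is a full-rank $(d-1)\times n'$ matrix and is the parity-check matrix of a quaternary MDS code $\C'$ with minimum distance exactly $d$, equivalently with redundancy $n'-k'=d-1$. I would state this explicitly as the anchor of the argument, since it fixes both the row count of $H'$ and the parameter relation the resulting MDS code must obey.

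The remainder is a finite case-check driven by Proposition~\ref{MDS4}: for each fixed $d$ I would retain only those parameter triples $[n_0,k_0,d_0]$ on the list whose minimum distance equals $d$ (and therefore whose redundancy equals $d-1$). For $d=3$ this singles out the trivial $[3,1,3]$, which is the $n=3$ instance of the repetition family, together with $[4,2,3]$ and $[5,3,3]$; the remaining entries $[n,n-1,2]$, $[5,2,4]$, and $[6,3,4]$ are ruled out by the distance. For $d=4$ the surviving codes are $[4,1,4]$, $[5,2,4]$, and the Hexacode $[6,3,4]$.

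For $d\geq 5$ I would eliminate every non-trivial possibility by invoking Lemma~\ref{MDS}: over $\F_4$, any non-trivial MDS code with $2\leq k_0\leq n_0-2$ satisfies simultaneously $n_0-k_0+1\leq 4$ and $k_0+1\leq 4$, hence has minimum distance at most $4$, contradicting $d\geq 5$. The other trivial family $[n,n-1,2]$ has distance $2$, also excluded, so only the repetition code $[d,1,d]$ remains. I do not foresee any real obstacle: the only point requiring care is confirming that the rank identity in Lemma~\ref{H'} is a genuine equality (which the proof of that lemma already delivers), so that the row count of $H'$ is pinned to $d-1$ and the parameters of the candidate MDS parity-check matrix must match exactly.
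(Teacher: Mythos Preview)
Your proposal is correct and matches the paper's approach: the paper states the proposition as an immediate consequence of Lemma~\ref{H'} and Proposition~\ref{MDS4} without giving a separate proof, and your argument simply unpacks that deduction. Your explicit extraction of the row count $m'=d-1$ from the collapsed chain of equalities in the proof of Lemma~\ref{H'}, followed by the case-by-case filtering of Proposition~\ref{MDS4} (and the appeal to Lemma~\ref{MDS} for $d\geq 5$), is exactly the reasoning the paper leaves implicit.
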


\section{Analyzing optimal quaternary $(r, \delta)$-LRC codes with minimum distance $d=3$}\label{Sec:LRC-$d=3$}
In this section, we assume that $\mathcal C$ is an optimal quaternary $(r, \delta)$-LRC code with minimum distance $d = 3$.
In this case,
$$n-k=d-1+(\lceil\frac{k}{r}\rceil-1)(\delta-1)=(\lceil\frac{k}{r}\rceil-1)(\delta-1)+2.$$
From (\ref{ll}), the number of local groups in $H$ is
$$\lceil\frac{k}{r}\rceil \leq l \leq \lceil\frac{k}{r}\rceil -1 +\lfloor \frac{2}{\delta - 1}\rfloor.$$
Hence, $\delta - 1 \leq 2$. By $\delta > 2$, we have $\delta = 3$. Furthermore, with the above inequality we get $l=\lceil\frac{k}{r}\rceil$,
$n-k=2\lceil\frac{k}{r}\rceil=2l$.

By Corollary \ref{Hi}, in the parity-check matrix $H$, each local group $H_i$, $i\in [l]$, contains precisely two rows, and there is no global group.
%When $\mathcal C$ is an optimal $(r, \delta)$-LRC, by Corollary \ref{Hi},then the parity-check matrix $H$ has the following form:
%$$H=\left[\begin{array}{c}H_1\\H_2\\...\\H_l\end{array}\right],$$
%where each $H_i$ contains exactly two rows, $i=1,2,...l$.

Let $k = rs + t$ , where $s\geq 1$, $0 \leq t \leq r - 1$. In Lemma \ref{H'}, the number of the columns covered by the $\lceil\frac{k}{r}\rceil-1$ removed local groups is $\gamma \leq (\lceil\frac{k}{r}\rceil-1)(r + \delta - 1)$. Using Proposition \ref{H'MDS}(1), we get
\begin{equation}\label{n'1}
n- (\lceil\frac{k}{r}\rceil-1)(r + \delta - 1)\leq n' \leq 5,
\end{equation}
therefore, $t=0, r\leq3$ or $0< t\leq 3$.

\subsection{Case  $t=3$.}\label{3.1}

When $t=3$, from inequality (\ref{n'1}), the length of $H'$ is
$$n'=5=n- (\lceil\frac{k}{r}\rceil-1)(r + \delta - 1).$$
This implies that these $\lceil\frac{k}{r}\rceil-1=s$ removed local groups have uniform support size $\#S_j=r + \delta - 1$. Since these $s$ groups can be chosen arbitrarily, all local groups have supports size $\#S_i=r + \delta - 1$, $i\in[l]$. By Corollary \ref{HMDS}, the codes $\langle H_{i}' \rangle $, $i\in[l]$, are $[r+2, 2, r+1]$ MDS codes. Since $l=\lceil\frac{k}{r}\rceil$, the remaining matrix $H'$ is a submatrix of $ H_{i}'$, we have $r+2\geq n' =5$, all $\langle H_{i}' \rangle $ are $[5, 2, 4]$ MDS codes. Thus, $r = 3$. This contradicts  the assumption $t = 3 \leq r - 1$. Hence, there is no optimal $(r, \delta)$-LRC code in this case.

\subsection{Case  $t=2$.}\label{3.2}

When $t=2$, the length of $H'$ is
$$4 \leq n' \leq 5.$$

\subsubsection{If $n'= 5$.}
%\textbf{If $n'= 5$.}
$\langle H'\rangle$ is a quaternary $[5,2,4]$ MDS code. We have
$$n-n'= (\lceil\frac{k}{r}\rceil-1)(r + \delta - 1)-1,$$
%which implies that the supports of the $\lceil\frac{k}{r}\rceil-1=s$ removed groups intersect on at most one coordinate.
which implies that in the $\lceil\frac{k}{r}\rceil-1=s$ removed local groups, there are two groups whose supports intersect at most one coordinate, and the rest groups have disjoint supports.

The $s$ subcodes $\langle H_{i}' \rangle $ corresponding to the removed local groups $H_i$ are either all $[r+2, 2, r+1]$ MDS codes and there are two groups of them whose supports intersect at one coordinate,
or one of them is $[r+1, 2, r]$ MDS code, and $s-1$ of them are $[r+2, 2, r+1]$ MDS codes with disjoint supports. From Proposition \ref{MDS4}, $n' \leq r+2 \leq 5$, $r+2=5$.
%The subcodes $\langle H_{i}' \rangle $ corresponding to the removed groups $H_i$, $i=1,2,...,s$, are all $[r+2, 2, r+1]$ MDS code, and their supports intersect on only one coordinate.

The code $\langle H_{j}' \rangle $ corresponding to the remaining local group $H_j$ is an $[\#S_j, 2, \#S_j-1]$ MDS code with length $\#S_j \geq n'= 5$, then $\#S_j = 5$, $\langle H_{j}' \rangle = \langle H'\rangle$ is a quaternary $[5,2,4]$ MDS code.

\subsubsection{If $n'= 4$.}
%\textbf{If $n'= 4$.}
$\langle H'\rangle$ is a quaternary $[4,2,3]$ MDS code,
$$n-n'= (\lceil\frac{k}{r}\rceil-1)(r + \delta - 1),$$
 the supports of the $s$ removed groups are disjoint and each has support size exactly $r + \delta - 1$.

The $s$ subcodes $\langle H_{i}' \rangle $ corresponding to the removed groups $H_i$ are all $[r+2, 2, r+1]$ MDS codes, similar to above, $r+2=5$.

The code $\langle H_{j}' \rangle $ corresponding to the remaining group $H_j$ is an $[\#S_j, 2, \#S_j-1]$ MDS code with length $\#S_j \geq n'= 4$, then $\#S_j =4~\mbox{or}~5$, $\langle H_{j}' \rangle$ is a quaternary $[5,2,4]$ or $[4,2,3]$ MDS code.

In this case, the optimal $(r, \delta)$-LRC code has parameters
%$$n = 5l -1,~ k = 3l-1,~ d=3,~ r = 3, ~\delta = 3,~(l \geq 2).~~~~~(5)$$
\begin{equation}\label{1}
n = 5l -1,~ k = 3l-1,~ d=3,~ r = 3, ~\delta = 3,~(l \geq 2).
\end{equation}

\begin{construction}\label{construction1}

%\textbf{Construction 1.}

The following parity-check matrix $H$ gives the optimal quaternary $(r, \delta)$-LRC code for the parameters in (\ref{1}).

$$
H=\left(\begin{array}{c|c}
\tilde{H} & \mathbf{0}_{4 \times(5(l-2))} \\
\hline \mathbf{0}_{(2(l-2)) \times 9} & I_{l-2} \otimes\left(\begin{array}{lllll}
1 & 0 & 1 & 1 & 1\\
0 & 1 & 1 & w & w^2
\end{array}\right)
\end{array}\right),
$$
where
$$
\tilde{H}=\left(\begin{array}{ccccccccc}
1 & 0 & 1 & 1 & \alpha & 0  & 0 & 0 & 0\\
0 & 1 & 1 & w & \beta  & 0  & 0 & 0 & 0\\
0 & 0 & 0 & 0 & 1      & 0  & 1 & 1 & 1\\
0 & 0 & 0 & 0 & 0      & 1  & 1 & w & w^2
\end{array}\right),
$$
and $(\alpha,\beta)^{T}=(0,0)^{T}$ or $(1, w^2)^{T}$.
%where $A \otimes B$ denotes the Kronecker product of matrices $A$ and $B$.
\end{construction}

\subsection{Case $t=1$.}\label{3.3}
When $t=1$, the length of $H'$ is
$$3 \leq n' \leq 5.$$
By $r-1 \geq t$, we have $r \geq 2.$

\subsubsection{If $r=2$.}\label{3.3.1}

By $n' \leq r+ \delta -1$, $r+ \delta -1=4$, we have $n'=3~ \mbox{or}~ 4$.
Similar to case \ref{3.2}, the codes $\langle H_{i}' \rangle$, $i\in[l]$, are either all $[4, 2, 3]$ MDS code and there are two groups in which their supports intersect at one coordinate, or one of them is $[3, 2, 2]$ MDS code, and $l-1$ of them are $[4, 2, 3]$ MDS codes with disjoint supports.

In this case, the optimal $(r, \delta)$-LRC code has   the following parameters
%$$n = 4l -1,~ k = 2l-1,~ d=3,~ r = 2, ~\delta = 3,~(l \geq 2).~~~~~(6)$$
\begin{equation}\label{2}
n = 4l -1,~ k = 2l-1,~ d=3,~ r = 2, ~\delta = 3,~(l \geq 2).
\end{equation}

%\textbf{Construction 2.}

\begin{construction}\label{construction2}

 By puncturing exactly one column from each local group of the matrix in Construction \ref{construction1}, we can get the parity-check matrix $H$ of the optimal quaternary $(r, \delta)$-LRC code for the parameters in (\ref{2}), as follows.

$$
H=\left(\begin{array}{c|c}
\tilde{H} & \mathbf{0}_{4 \times(4(l-2))} \\
\hline \mathbf{0}_{(2(l-2)) \times 7} & I_{l-2} \otimes\left(\begin{array}{llll}
1 & 0 & 1  & 1\\
0 & 1 & 1  & w^2
\end{array}\right)
\end{array}\right),
$$
where
$$
\tilde{H}=\left(\begin{array}{ccccccc}
1 & 0 & 1  & \alpha & 0  & 0 &  0\\
0 & 1 & 1  & \beta  & 0  & 0 &  0\\
0 & 0 & 0  & 1      & 0  & 1 &  1\\
0 & 0 & 0  & 0      & 1  & 1 &  w^2
\end{array}\right),
$$
and $(\alpha,\beta)^{T}=(0,0)^{T}$ or $(1, w^2)^{T}$.

\end{construction}

\subsubsection{If $r=3$.}\label{3.3.2}

By $n' \leq r+ \delta -1=5$, we have $n'=3, 4 ~\mbox{or} ~5$.

In this case, the optimal $(r, \delta)$-LRC code has parameters
%$$n = 5l -2,~ k = 3l-2,~ d=3,~ r = 3, ~\delta = 3,~(l \geq 2).~~~~~(7)$$
\begin{equation}\label{3}
n = 5l -2,~ k = 3l-2,~ d=3,~ r = 3, ~\delta = 3,~(l \geq 2).
\end{equation}

Comparing the parameters in (\ref{3})  with the ones in (\ref{1}), it is easy to get the following construction.

%\textbf{Construction 3.}
\begin{construction}\label{construction3}

 By puncturing the first column from the matrix $H$ in Construction \ref{construction1}, we can get the parity-check matrix of the optimal quaternary $(r, \delta)$-LRC code for the parameters in (\ref{3}).
\end{construction}

\subsubsection{If $r\geq 4$.}\label{3.3.3}

In this case, $r+ \delta -1 \geq 6$. From Corollary \ref{delta}, there is an $ i\in [l]$, such that the code $\langle H_{i}' \rangle $ is an $[\tilde{n}\geq 6, 2, \tilde{n}+1]$ MDS code, this contradicts Proposition \ref{MDS4}. Therefore, there is no optimal $(r, \delta)$-LRC code in this case.

\subsection{Case $t=0, ~r\leq3$.}\label{3.4}

In this case, we have $r \mid k$. From Lemma \ref{k=rs}, the codes $\langle H_{i}' \rangle$, $i\in [l]$, are all $[r+2, 2, r+1]$ MDS codes with disjoint supports.

The optimal $(r, \delta)$-LRC code has parameters
%$$n = l(r+2),~ k = rl,~ d=3,~ 1 \leq r \leq 3, ~\delta = 3,~(l \geq 2).~~~~~(8)$$
\begin{equation}\label{4}
n = l(r+2),~ k = rl,~ d=3,~ 1 \leq r \leq 3, ~\delta = 3,~(l \geq 2).
\end{equation}

%\textbf{Construction 4.}
\begin{construction}\label{construction4}
 For the parameters in (\ref{4}), the following parity-check matrix $H$ gives the optimal quaternary $(r, \delta)$-LRC code with $r=3$.
By puncturing precisely one or two columns from each local group of the matrix $H$, we can obtain the parity-check matrix of the optimal quaternary $(r, \delta)$-LRC code with $r=2$ or $1$, respectively.

$$
H=\left(\begin{array}{c}
 I_{l} \otimes\left(\begin{array}{lllll}
1 & 0 & 1 & 1 & 1\\
0 & 1 & 1 & w & w^2
\end{array}\right)
\end{array}\right).
$$
\end{construction}

\section{Analyzing optimal quaternary $(r, \delta)$-LRC codes with minimum distance $d=4$}\label{Sec:LRC-$d=4$}
In this section, we assume that $\mathcal C$ is an optimal quaternary $(r, \delta)$-LRC code with minimum distance $d = 4$.
In this case,
$$n-k=d-1+(\lceil\frac{k}{r}\rceil-1)(\delta-1)=(\lceil\frac{k}{r}\rceil-1)(\delta-1)+3.$$
From (\ref{h}), the number of local groups in $H$ is
$$\lceil\frac{k}{r}\rceil \leq l \leq \lceil\frac{k}{r}\rceil -1 +\lfloor \frac{3}{\delta - 1}\rfloor.$$
Hence, $\delta - 1 \leq 3$. By $\delta > 2$, we have $\delta = 3$ or $4$. Furthermore, with the above inequality we can also get $l=\lceil\frac{k}{r}\rceil$,
$n-k=\lceil\frac{k}{r}\rceil(\delta -1) - \delta +4$.

By Corollary \ref{Hi}, if $\delta =3$, then in the parity-check matrix $H$ of an optimal $(r, \delta)$-LRC code, each local group $H_i$, $i\in [l]$, contains exactly two rows, the global group $H^{*}$ contains exactly one row. If $\delta =4$, then each local group $H_i$, $i\in [l]$, contains exactly three rows, and there is no the global group.

Similarly to inequality (\ref{n'1}), from Proposition \ref{H'MDS}(2), we get
\begin{equation}\label{n'2}
n- (\lceil\frac{k}{r}\rceil-1)(r + \delta - 1)\leq n' \leq 6,
\end{equation}
therefore, $t=0, r\leq3$ or $0< t\leq 3$.

\subsection{Case $t=3$.}\label{4.1}

When $t=3$, from inequality (\ref{n'2}), the length of $H'$ is
$$n'=6=n- (\lceil\frac{k}{r}\rceil-1)(r + \delta - 1),$$
which implies that these removed $\lceil\frac{k}{r}\rceil-1=s$ local groups have uniform support size $\#S_j=r + \delta - 1$. Since these $s$ groups can be chosen arbitrarily, all local groups $H_i$, $i\in [l]$, have supports size $\#S_i=r + \delta - 1$, and the codes $\langle H_{i}' \rangle $ are $[r + \delta - 1, \delta - 1, r+1]$ MDS codes. Since $l=\lceil\frac{k}{r}\rceil$, the length of remaining matrix $H'$ does not exceed the length of $ H_{i}'$, we have $r + \delta - 1\geq n' =6$. Because $\delta > 2$, from Proposition \ref{MDS4}, we can get that all $\langle H_{i}' \rangle $ are $[6, 3, 4]$ MDS codes. Thus, $r = 3$. This contradicts the assumption $t = 3 \leq r - 1$. Hence, there is no optimal $(r, \delta)$-LRC code in this case.

\subsection{Case $t=2$.}\label{4.2}

When $t=2$, the length of $H'$ is
$$5 \leq n' \leq 6.$$

\subsubsection{If $n'= 6$.}\label{4.2.1}
$\mathcal C^\prime$ is a quaternary $[6,3,4]$ MDS code, $\langle H'\rangle$ is a quaternary $[6,3,4]$ MDS code. We have
$$n-n'= (\lceil\frac{k}{r}\rceil-1)(r + \delta - 1)-1,$$
that is, in the $\lceil\frac{k}{r}\rceil-1=s$ removed groups, there are two groups whose supports intersect at most one coordinate, and the rest groups have disjoint supports.

These $s$ codes $\langle H_{i}' \rangle $ corresponding to the removed groups $H_i$ are either all $[r + \delta - 1, \delta - 1, r+1]$ MDS codes and there are two groups of them whose supports intersect at one coordinate, or one of them is $[r + \delta - 2, \delta - 1, r]$ MDS code, and $s-1$ of them are $[r + \delta - 1, \delta - 1, r+1]$ MDS codes with disjoint supports.
From Proposition \ref{MDS4}, $n' \leq r + \delta - 1 \leq 6$, $r + \delta - 1=6$.
%The subcodes $\langle H_{i}' \rangle $ corresponding to the removed groups $H_i$, $i=1,2,...,s$, are all $[r+2, 2, r+1]$ MDS code, and their supports intersect on only one coordinate.

The code $\langle H_{j}' \rangle $ corresponding to the remaining group $H_j$ is an $[\#S_j, \delta - 1, \#S_j-\delta + 2]$ MDS code with length $\#S_j \geq n'= 6$, then $\#S_j = 6$, $\langle H_{j}' \rangle = \langle H'\rangle$ is a quaternary $[6,3,4]$ MDS code.

In this case, the optimal $(r, \delta)$-LRC code has parameters
%$$n = 6l-1,~ k = 3l-1,~ d=4,~ r = 3, ~\delta = 4,~(l \geq 2).~~~~~(10)$$
\begin{equation}\label{5}
n = 6l-1,~ k = 3l-1,~ d=4,~ r = 3, ~\delta = 4,~(l \geq 2).
\end{equation}

\begin{construction}\label{construction5}

 The following parity-check matrices $H$ gives the optimal quaternary $(r, \delta)$-LRC code for the parameters in (\ref{5}).
$$
H=\left(\begin{array}{c|c}
\tilde{H} & \mathbf{0}_{6 \times(6(l-2))} \\
\hline \mathbf{0}_{(3(l-2)) \times 11} & I_{l-2} \otimes\left(\begin{array}{llllll}
1 & 0 & 0 & 1 & 1 & 1\\
0 & 1 & 0 & 1 & w & w^2\\
0 & 0 & 1 & 1  & w^2 & w
\end{array}\right)
\end{array}\right),
$$
where
$$
\tilde{H}=\left(\begin{array}{ccccccccccc}
1 & 0 & 0 & 1 & 1 & \alpha   & 0  & 0 &  0 & 0  & 0\\
0 & 1 & 0 & 1 & w & \beta    & 0  & 0 &  0 & 0  & 0\\
0 & 0 & 1 & 1 & w^2 & \gamma & 0  & 0 &  0 & 0  & 0\\
0 & 0 & 0 & 0 & 0 & 1        & 0 & 0 & 1 & 1 & 1\\
0 & 0 & 0 & 0 & 0 & 0        & 1 & 0 & 1 & w & w^2\\
0 & 0 & 0 & 0 & 0 & 0        & 0 & 1 & 1  & w^2 & w
\end{array}\right),
$$
and $(\alpha,\beta,\gamma)^{T}=(0,0,0)^{T}$ or $(1, w^2,w)^{T}$.

\end{construction}

 \subsubsection{If $n'= 5$.}\label{4.2.2}
 $\mathcal C^\prime$ is a quaternary $[5,2,4]$ MDS code, $\langle H'\rangle$ is a quaternary $[5,3,3]$ MDS code.
$$n-n'= (\lceil\frac{k}{r}\rceil-1)(r + \delta - 1),$$
that is, the supports of the $s$ removed groups are disjoint and each has support size exactly $r + \delta - 1$.
These $s$ codes $\langle H_{i}' \rangle $ corresponding to the removed groups $H_i$ are all $[r + \delta - 1, \delta - 1, r+1]$ MDS codes, similar to above, we have $r + \delta - 1=5 $ or $6$.

\underline{\textbf{Case $r + \delta - 1=5 $.}}

By $\delta >2 $ and $t=2 <r$, we have $[r + \delta - 1, \delta - 1, r+1]=[5,2,4]$.
The code $\langle H_{j}' \rangle $ corresponding to the remaining group $H_j$ is an $[\#S_j, \delta - 1, \#S_j-\delta + 2]$ MDS code with length $n' \leq \#S_j \leq r + \delta - 1$, then $\#S_j = 5$, $\langle H_{l}' \rangle = \langle H'\rangle$ is a quaternary $[5,3,3]$ MDS code.

In this subcase, the optimal $(r, \delta)$-LRC code has parameters
%$$n = 5l,~ k = 3l-1,~ d=4,~ r = 3, ~\delta = 3,~(l \geq 2).~~~~~(11)$$
\begin{equation}\label{6}
n = 5l,~ k = 3l-1,~ d=4,~ r = 3, ~\delta = 3,~(l \geq 2).
\end{equation}

%\textbf{Construction 6.}

\begin{construction}\label{construction6}

The following parity-check matrix $H$ gives the optimal quaternary $(r, \delta)$-LRC code for the parameters in (\ref{6}).

$$
H=\left(\begin{array}{c}
I_{l} \otimes\left(\begin{array}{lllll}
1 & 0 & 1 & 1 & 1\\
0 & 1 & 1 & w & w^2\\
\end{array}\right)\\

\hline

\textbf{1}_l \otimes\left(\begin{array}{lllll}
0 & 0 & 1 & w^2 & w
\end{array}\right)
\end{array}\right).
$$
\end{construction}

\underline{\textbf{Case $r + \delta - 1=6 $.}}

In this subcase, we have $[r + \delta - 1, \delta - 1, r+1]=[6,3,4]$,
and the code $\langle H_{j}' \rangle $ corresponding to the remaining group $H_j$ is an $[\#S_j, \delta - 1, \#S_j-\delta + 2]$ MDS code with length $\#S_j = 5$ or $6$, $\langle H_{l}' \rangle$ is a quaternary $[5,3,3]$ or $[6,3,4]$ MDS code.

The optimal quaternary $(r, \delta)$-LRC codes in this subcase have the same structure as the optimal quaternary $(r, \delta)$-LRC codes in case \ref{4.2.1} with $n'=6$.

\subsection{Case $t=1$.}\label{4.3}

When $t=1$, the length of $H'$ is
$$4 \leq n' \leq 6.$$
By $r-1 \geq t$, we have $r \geq 2.$

\subsubsection{Case $r=2$.}\label{4.3.1}
We know that $\delta =3$ or 4.

\underline{\textbf{If $\delta =3$.}}
By $n' \leq r+ \delta -1$, we have $n'=r+ \delta -1=4$. The code $\mathcal C^\prime$ is a quaternary $[4,1,4]$ MDS code, $\langle H'\rangle$ is a quaternary $[4,3,2]$ MDS code.
$$n-n'= (\lceil\frac{k}{r}\rceil-1)(r + \delta - 1), $$
the supports of the $\lceil\frac{k}{r}\rceil-1=s$ removed groups are disjoint and each has support size exactly $r + \delta - 1$.
The $s$ codes $\langle H_{i}' \rangle $ corresponding to the removed groups $H_i$ are all $[r + \delta - 1, \delta - 1, r+1]=[4,2,3]$ MDS codes.
And the code $\langle H_{j}' \rangle $ corresponding to the remaining group $H_j$ is also a $[4,2,3]$ MDS code.

In this subcase, the optimal $(r, \delta)$-LRC code has parameters
%$$n = 4l,~ k = 2l-1,~ d=4,~ r = 2, ~\delta = 3,~(l \geq 2).~~~~~(12)$$
\begin{equation}\label{7}
n = 4l,~ k = 2l-1,~ d=4,~ r = 2, ~\delta = 3,~(l \geq 2).
\end{equation}

%\textbf{Construction 7.}
\begin{construction}\label{construction7}

By puncturing exactly one column from each local group of the matrix in Construction \ref{construction6}, we can get the parity-check matrix $H$ of the optimal quaternary $(r, \delta)$-LRC code for the parameters in (\ref{7}), as follows.

$$
H=\left(\begin{array}{c}
I_{l} \otimes\left(\begin{array}{llll}
1 & 0 & 1 & 1 \\
0 & 1 & 1 & w \\
\end{array}\right)\\

\hline

\textbf{1}_l \otimes\left(\begin{array}{llll}
0 & 0 & 1 & w^2
\end{array}\right)
\end{array}\right).
$$

\end{construction}

\underline{\textbf{If $\delta =4$.}}
By $n' \leq r+ \delta -1=5$, we have $n'=4$ or $5$. The code $\mathcal C^\prime$ is a quaternary $[4,1,4]$ or $[5, 2, 4] $ MDS code, accordingly, $\langle H'\rangle$ is a quaternary $[4,3,2]$ or $[5, 3, 3]$ MDS code.
%$$n-n'= (\lceil\frac{k}{r}\rceil-1)(r + \delta - 1)  \mbox{or} (\lceil\frac{k}{r}\rceil-1)(r + \delta - 1)-1,$$

Similar to case \ref{4.2} with $r+\delta -1=6$, we can get the codes $\langle H_{i}' \rangle$, $i\in [l]$, are either all $[r + \delta - 1, \delta - 1, r+1]=[5,3,3]$ MDS codes and there are two groups in which their supports intersect at one coordinate, or one of them is $[4, 3, 2]$ MDS code, and $l-1$ of them are $[5, 3, 3]$ MDS codes with disjoint supports.

In this subcase, the optimal $(r, \delta)$-LRC code has parameters
%$$n = 5l-1,~ k = 2l-1,~ d=4,~ r = 2, ~\delta = 4,~(l \geq 2).~~~~~(13)$$
\begin{equation}\label{8}
n = 5l-1,~ k = 2l-1,~ d=4,~ r = 2, ~\delta = 4,~(l \geq 2).
\end{equation}

%\textbf{Construction 8.}

\begin{construction}\label{construction8}

 By puncturing exactly one column from each local group of the matrix in Construction \ref{construction5}, we can get the parity-check matrix $H$ of the optimal quaternary $(r, \delta)$-LRC code for the parameters in (\ref{8}), as follows.

$$
H=\left(\begin{array}{c|c}
\tilde{H} & \mathbf{0}_{6 \times(5(l-2))} \\
\hline \mathbf{0}_{(3(l-2)) \times 9} & I_{l-2} \otimes\left(\begin{array}{lllll}
1 & 0 & 0 & 1 & 1 \\
0 & 1 & 0 & 1 & w^2 \\
0 & 0 & 1 & 1 & w
\end{array}\right)
\end{array}\right),
$$

where
$$
\tilde{H}=\left(\begin{array}{ccccccccccc}
1 & 0 & 0 & 1  & \alpha   & 0 & 0 & 0 & 0\\
0 & 1 & 0 & 1  & \beta    & 0 & 0 & 0 & 0  \\
0 & 0 & 1 & 1  & \gamma   & 0 & 0 & 0 & 0  \\
0 & 0 & 0 & 0  & 1        & 0 & 0 & 1 &  1\\
0 & 0 & 0 & 0  & 0        & 1 & 0 & 1 &  w^2\\
0 & 0 & 0 & 0  & 0        & 0 & 1 & 1 &  w
\end{array}\right),
$$
and $(\alpha,\beta,\gamma)^{T}=(0,0,0)^{T}$ or $(1, w^2,w)^{T}$.
\end{construction}

\subsubsection{Case $r=3$.}\label{4.3.2}
~

\underline{\textbf{If $\delta =3$.}}
By $n' \leq r+ \delta -1=5$, we have $n'=4$ or $5$. Th code $\mathcal C^\prime$ is  then a quaternary $[4,1,4]$ or $[5,2,4]$ MDS code, accordingly, $\langle H'\rangle$ is a quaternary $[4,3,2]$ or $[5,3,3]$ MDS code.
%$n-n'= (\lceil\frac{k}{r}\rceil-1)(r + \delta - 1) $ or $(\lceil\frac{k}{r}\rceil-1)(r + \delta - 1)$,

The codes $\langle H_{i}' \rangle$, $i\in[l]$, are either all $[r + \delta - 1, \delta - 1, r+1]=[5,2,4]$ MDS code and there are two groups in which their supports intersect at one coordinate, or one of them is $[4, 2, 3]$ MDS code, and $l-1$ of them are $[5, 2, 4]$ MDS codes with disjoint supports.

In this subcase, the optimal $(r, \delta)$-LRC code has parameters
%$$n = 5l-1,~ k = 3l-2,~ d=4,~ r = 3, ~\delta = 3,~(l \geq 2).~~~~~(14)$$
\begin{equation}\label{9}
n = 5l-1,~ k = 3l-2,~ d=4,~ r = 3, ~\delta = 3,~(l \geq 2).
\end{equation}

%\textbf{Construction 9.}

\begin{construction}\label{construction9}

 The following parity-check matrix $H$ gives the optimal quaternary $(r, \delta)$-LRC code for the parameters in (\ref{9}).

$$
H=\left(\begin{array}{c|c}
\begin{array}{ccccccccc}
1 & 0 & 1 & 1 & \alpha & 0  & 0 & 0 & 0\\
0 & 1 & 1 & w & \beta  & 0  & 0 & 0 & 0\\
0 & 0 & 0 & 0 & 1      & 0  & 1 & 1 & 1\\
0 & 0 & 0 & 0 & 0      & 1  & 1 & w & w^2
\end{array} & \mathbf{0}_{4 \times(5(l-2))} \\
\hline \mathbf{0}_{(2(l-2)) \times 9} & I_{l-2} \otimes\left(\begin{array}{lllll}
1 & 0 & 1 & 1 & 1\\
0 & 1 & 1 & w & w^2
\end{array}\right)\\
\hline
\begin{array}{lllllllll}
0 & 0 & 1 & w^2 & 0 & 0 & 1 & w^2 & w
\end{array} & \textbf{1}_l \otimes\left(\begin{array}{lllll}
0 & 0 & 1 & w^2 & w
\end{array}\right)
\end{array}\right).
$$

\end{construction}

\underline{\textbf{If $\delta =4$.}}
By $n' \leq r+ \delta -1=6$, we have $n'=4, 5 ~\mbox{or} ~6$.

In this subcase, the optimal $(r, \delta)$-LRC code has parameters
%$$n = 6l -2,~ k = 3l-2,~ d=4,~ r = 3, ~\delta = 4,~(l \geq 2).~~~~~(15)$$
\begin{equation}\label{10}
n = 6l -2,~ k = 3l-2,~ d=4,~ r = 3, ~\delta = 4,~(l \geq 2).
\end{equation}

Comparing the parameters in (\ref{10}) with the parameters in (\ref{5}), it is not difficult  to get the following construction.

%\textbf{Construction 10.}

\begin{construction}\label{construction10}

  By puncturing the first column from the matrix $H$ in Construction \ref{construction5}, we can get the parity-check matrix of the optimal quaternary $(r, \delta)$-LRC code for the parameters in (\ref{10}).

\end{construction}

\subsubsection{Case $r\geq 4$.}\label{4.3.3}
~

In this case, if $\delta =3$, $r+ \delta -1 \geq 6$. From Corollary \ref{delta}, there exists  $i\in [l]$, such that the code $\langle H_{i}' \rangle $ is a quaternary $[\tilde{n}\geq 6, 2, \tilde{n}+1]$ MDS, this contradicts Proposition \ref{MDS4}. If $\delta =4$, $r+ \delta -1 \geq 7$. From Corollary \ref{delta}, there exists $i\in [l]$, such that the code $\langle H_{i}' \rangle $ is a quaternary $[\tilde{n}\geq 7, 3, \tilde{n}+1]$ MDS, this also contradicts Proposition \ref{MDS4}. Therefore, there is no optimal $(r, \delta)$-LRC code in this case.

\subsection{Case $t=0, r\leq3$.}\label{4.3.4}

When $t=0$, we have $r | k$. From Lemma \ref{k=rs}, the codes $\langle H_{i}' \rangle$, $i\in[l]$, are all $[r+ \delta -1, \delta -1, r+1]$ MDS code with disjoint supports and $n=l(r+ \delta -1)$.
By $k=rs$, $n = k+d-1+(\lceil\frac{k}{r}\rceil-1)(\delta-1)=s(r+ \delta -1)+4-\delta$. From $l=\lceil\frac{k}{r}\rceil=s$, we have $\delta =4$.

Hence, the optimal $(r, \delta)$-LRC code has parameters
%$$n = l(r+3),~ k = rl,~ d=4,~ 1 \leq r \leq 3, ~\delta = 4,~(l \geq 2).~~~~~(16)$$
\begin{equation}\label{11}
n = l(r+3),~ k = rl,~ d=4,~ 1 \leq r \leq 3, ~\delta = 4,~(l \geq 2).
\end{equation}

%\textbf{Construction 11.}

\begin{construction}\label{construction11}

For the parameters in (\ref{11}), the following parity-check matrix $H$ gives the optimal quaternary $(r, \delta)$-LRC code with $r=3$.
By puncturing precisely one or two columns from each local group of the matrix $H$, we can obtain the parity-check matrix of the optimal quaternary $(r, \delta)$-LRC codes with $r=2$ or $1$, respectively.

$$
H=\left(\begin{array}{c}
 I_{l} \otimes\left(\begin{array}{llllll}
1 & 0 & 0 & 1 & 1 & 1\\
0 & 1 & 0 & 1 & w & w^2\\
0 & 0 & 1 & 1 & w^2 & w
\end{array}\right)
\end{array}\right).
$$
\end{construction}

\section{Analyzing optimal quaternary $(r, \delta)$-LRC codes with minimum distance $d\geq5$}\label{Sec:LRC-$d=5$}

From Proposition \ref{H'MDS}(3), $\mathcal C^\prime$ is a quaternary $[n',1,n']$ MDS code, $\langle H'\rangle$ is a quaternary $[n',n'-1,2]$ MDS code, where $n'=d$.

By Lemma \ref{distance}, since $d > q =4$, we have $5 \leq d\leq \delta q= 4\delta$ and we can also get $r \mid(k-1)$, that is,
$$r=1 ~\mbox{or}~ r\geq 2,~ k=rs+1.$$

\subsection{Case $r=1$.}

In this case, we have $r \mid k$, from Lemma \ref{k=rs}, $n=l(r+ \delta -1)=l\delta$,
the optimal $(r, \delta)$-LRC code has parity-check matrix $H$ of the following form
$$
H=\left[\begin{array}{ccc|ccc}
I_{\delta-1} &       &             & Q_{1}  &        & \\
             &\ddots &             &        & \ddots & \\
             &       & I_{\delta-1}&        &        &Q_{l} \\
\hline
   &0_{(l-k)\times l(\delta-1)}&   & A_{1}  &\cdots  & A_{l}
\end{array}\right],
$$
where the matrices $Q_i$ and $A_i$ appearing above are of sizes $(\delta - 1) \times 1$ and $(l-k)\times 1$, respectively.

The matrices $ H_{i}'  = (I_{\delta-1}~Q_i)$, $i\in[l]$, are all the parity-check matrices of $[\delta, 1, \delta]$ MDS codes, there is a linear combination of the columns of $H_{i}'$ equals to $0$. This implies that there is a linear combination of the columns of $H$ corresponding to $H_{i}'$ equals to $(0, \cdots,0, A_i^T)^T$.

Since
$d=n-k+1-(\lceil\frac{k}{r}\rceil-1)(\delta-1)=\delta(l-k+1)$,
any $l-k$ columns of matrix $A=(A_1~\cdots ~ A_l)$ need to be linearly independent.
Meanwhile, from $0 < \delta(l-k+1) \leq 4\delta$, we have $k \leq l \leq k+3$.

\subsubsection{If $l=k$.}

The optimal $(r, \delta)$-LRC code has parameters
%$$n = k\delta ,~ k \geq 2,~ d=\delta,~ r = 1, ~\delta \geq 5.~~~~~(17)$$
\begin{equation}\label{12}
n = k\delta ,~ k \geq 2,~ d=\delta,~ r = 1, ~\delta \geq 5.
\end{equation}

Because $n-k-l(\delta-1)=l-k=0$, there is no global row in parity-check matrix $H$.

%\textbf{Construction 12.}

\begin{construction}\label{construction12}

Let $\tilde{H}$ be the generator matrix of the quaternary $[\delta, \delta-1, 2]$ $(\delta \geq 5)$ MDS code. Then the following parity-check matrix $H$ gives the optimal quaternary $(r, \delta)$-LRC code for the parameters in (\ref{12}).
$$
H=\left(\begin{array}{c}
 I_{k} \otimes \tilde{H}
\end{array}\right).
$$

\end{construction}

\subsubsection{If $l=k+1$.}

The optimal $(r, \delta)$-LRC code has parameters
%$$n = (k+1)\delta ,~ k \geq 2,~ d=2\delta,~ r = 1, ~\delta >2.~~~~~(18)$$
\begin{equation}\label{13}
n = (k+1)\delta ,~ k \geq 2,~ d=2\delta,~ r = 1, ~\delta >2.
\end{equation}

There is exactly $l-k=1$ global row in parity-check matrix $H$.

%\textbf{Construction 13.}

\begin{construction}\label{construction13}

Let $\tilde{H}$ be the generator matrix of the quaternary $[\delta, \delta-1, 2]$ $(\delta > 2)$ MDS code. Then the following parity-check matrix $H$ gives the optimal quaternary $(r, \delta)$-LRC code for the parameters in (\ref{13}).
$$
H=\left(\begin{array}{c}
 I_{k+1} \otimes \tilde{H}\\
 \hline
 \textbf{1}_{k+1} \otimes \left(\begin{array}{llll}
0 & ... & 0 & 1
\end{array}\right)_{1\times \delta}
\end{array}\right).
$$

\end{construction}

\subsubsection{If $l=k+2$.}

Since any $l-k=2$ columns of matrix $A$ need to be linearly independent, we have $l \leq \frac{q^2-1}{q-1}=5$.

The optimal $(r, \delta)$-LRC code has parameters
%$$n = (k+2)\delta ,~ 2\leq k \leq  3,~ d=3\delta,~ r = 1, ~\delta >2.~~~~~(19)$$
\begin{equation}\label{14}
n = (k+2)\delta ,~ 2\leq k \leq  3,~ d=3\delta,~ r = 1, ~\delta >2.
\end{equation}

There are exactly two global rows in parity-check matrix $H$.

%\textbf{Construction 14.}

\begin{construction}\label{construction14}

Let $\tilde{H}$ be the generator matrix of the quaternary $[\delta, \delta-1, 2]$ $(\delta > 2)$ MDS code. Then the following parity-check matrix $H$ gives the optimal quaternary $(r, \delta)$-LRC code for the parameters in (\ref{14}) with $k=3$. By puncturing the rows and columns corresponding to a  local group of the matrix $H$, we obtain the parity-check matrix of the optimal quaternary $(r, \delta)$-LRC code with $k=2$.

$$
H=\left(\begin{array}{c}
I_{5} \otimes \tilde{H} \\
\hline
A_{(1~0)}~~ A_{(0~1)}~~ A_{(1~1)} ~~A_{(1~w)} ~~ A_{(1~w^2)}
\end{array}\right),
$$
where $A_{(\alpha~\beta)}=\left(\begin{array}{llll}
0 & ... & 0 & \alpha\\
0 & ... & 0 & \beta
\end{array}\right)_{2 \times \delta}.$
%$$
%H=\left(\begin{array}{ccccc}
%\tilde{H} & & & &  \\
%& \tilde{H} & & &  \\
%& & \tilde{H} & &  \\
%& & & \tilde{H} &  \\
%& & & & \tilde{H}  \\
%A_{(1~0)} & A_{(0~1)} & A_{(1~1)} & A_{(1~w)} & A_{(1~w^2)}
%\end{array}\right),
%$$

\end{construction}

\subsubsection{If $l=k+3$.}

Since any $l-k=3$ columns of matrix $A$ has to be linearly independent, we have $l \leq 6$.

The optimal $(r, \delta)$-LRC code has parameters
%$$n = (k+3)\delta ,~ 2\leq k \leq  3,~ d=4\delta,~ r = 1, ~\delta >2.~~~~~(20)$$
\begin{equation}\label{15}
n = (k+3)\delta ,~ 2\leq k \leq  3,~ d=4\delta,~ r = 1, ~\delta >2.
\end{equation}

There are exactly three global rows in parity-check matrix $H$.

%\textbf{Construction 15.}

\begin{construction}\label{construction15}

Let $\tilde{H}$ be the generator matrix of the quaternary $[\delta, \delta-1, 2]$ $(\delta > 2)$ MDS code. Then the following parity-check matrix $H$ gives the optimal quaternary $(r, \delta)$-LRC code for the parameters in (\ref{15}) with $k=3$. By puncturing the rows and columns corresponding to a local group of the matrix $H$, we can obtain the parity-check matrix of the optimal quaternary $(r, \delta)$-LRC code with $k=2$.

$$
H=\left(\begin{array}{c}
I_{6} \otimes \tilde{H} \\
\hline
A_{(1~0~0)} ~~ A_{(0~1~0)} ~~ A_{(0~0~1)} ~~ A_{(1~1~1)} ~~ A_{(1~w~w^2)} ~~ A_{(1~w^2~w)}
\end{array}\right),
$$
where $A_{(\alpha~\beta~\gamma)}=\left(\begin{array}{llll}
0 & ... & 0 & \alpha\\
0 & ... & 0 & \beta\\
0 & ... & 0 & \gamma\\
\end{array}\right)_{3 \times \delta}.$

%$$
%H=\left(\begin{array}{cccccc}
%\tilde{H} & & & & & \\
%& \tilde{H} & & & & \\
%& & \tilde{H} & & & \\
%& & & \tilde{H} & & \\
%& & & & \tilde{H} & \\
%& & & & & \tilde{H} \\
%A_{(1~0~0)} & A_{(0~1~0)} & A_{(0~0~1)} & A_{(1~1~1)} & A_{(1~w~w^2)} & A_{(1~w^2~w)}
%\end{array}\right),
%$$

\end{construction}

\subsection{Case $k=rs+1,~r \geq 2$.}

In this case, $n=k+d-1+(\lceil\frac{k}{r}\rceil-1)(\delta-1)= s(r+\delta-1)+d$.
 $$n-n'=n-d=s(r+\delta-1),$$ %This means that each of the removed $s$ groups has support size exactly $r + \delta - 1$.
this means that the $\lceil\frac{k}{r}\rceil-1=s$ removed local groups have uniform support size $\#S_j=r + \delta - 1$. Since these $s$ groups can be chosen arbitrarily, all local groups $H_i$, $i\in[l]$, have supports size $\#S_i=r + \delta - 1$, and the code $\langle H_{i}' \rangle $ is an $[r + \delta - 1, \delta - 1, r+1]$ MDS code.
By Proposition \ref{MDS4}, since $r \geq 2$, $\delta > 2$, we have $[r + \delta - 1, \delta - 1, r+1]=[4,2,3],~[5,3,3],~[5,2,4] ~\mbox{or}~ [6,3,4]$.

\subsubsection{Case $r=2,~\delta=3$.}\label{5.2.1}

We have $k=2s+1,$ the codes $\langle H_{i}' \rangle $, $i\in[l]$, are $[4,2,3]$ MDS codes. The codes $\mathcal C|_{S_i}$, $i\in[l]$, are also $[4,2,3]$ MDS codes.

\underline{\textbf{If $s=1$.}} $n=4+d$, $k = 3$ and $5 \leq d\leq  4\delta=12$.

When $d=12$, the optimal LRC code $\mathcal C$ is a $[16,3,12]$ linear code with locality $(2, 3)$. We can think of the columns of generator matrix $G$ as points in the projective plane $PG(2,\F_4)$. There are $\frac{4^3-1}{4-1}=21$ points in $PG(2,\F_4)$, each line in $PG(2,\F_4)$ contains five points and any two lines intersect precisely at one point. Fix a point $\textbf{a}$, there are five lines passing through this point, we know that these five lines cover all $21$ points in $PG(2,\F_4)$, as shown in Figure 1.

\begin{tikzpicture}

\centering

\node (f1) at (1,-4)  {\textbf{Figure 1} };

\draw  [black](-1,0)--(3,0);
\filldraw [gray] (-1,0) circle (3pt);
\filldraw [gray] (0,0) circle (1pt);
\filldraw [gray] (1,0) circle (1pt);
\filldraw [gray] (2,0) circle (1pt);
\filldraw [gray] (3,0) circle (1pt);
\node (A) at (-1,0)  {$a$};
\node (31) at (0,0)  {$a_{31}$};
\node (32) at (1,0)  {$a_{32}$};
\node (33) at (2,0)  {$a_{33}$};
\node (34) at (3,0)  {$a_{34}$};
\node (3) at (3.8,0)  {\textbf{$l_3$} };

\draw  [black](-1,0)--(2.6,1.8);
%\filldraw [gray] (0,0) circle (7pt);
\filldraw [gray] (-0.1,0.45) circle (1pt);
\filldraw [gray] (0.8,0.9) circle (1pt);
\filldraw [gray] (1.7,1.35) circle (1pt);
\filldraw [gray] (2.6,1.8) circle (1pt);
%\node (A) at (0,0)  {$a$};
\node (21) at (-0.1,0.45)  {$a_{21}$};
\node (22) at (0.8,0.9)  {$a_{22}$};
\node (23) at (1.7,1.35)  {$a_{23}$};
\node (24) at (2.6,1.8)  {$a_{24}$};
\node (2) at (3.4,1.8)  {\textbf{$l_2$} };

\draw  [black](-1,0)--(1.8,3.36);
%\filldraw [gray] (0,0) circle (7pt);
\filldraw [gray] (-0.3,0.84) circle (1pt);
\filldraw [gray] (0.4,1.68) circle (1pt);
\filldraw [gray] (1.1,2.52) circle (1pt);
\filldraw [gray] (1.8,3.36) circle (1pt);
%\node (A) at (0,0)  {$a$};
\node (11) at (-0.3,0.84)  {$a_{11}$};
\node (12) at (0.4,1.68)  {$a_{12}$};
\node (13) at (1.1,2.52)  {$a_{13}$};
\node (14) at (1.8,3.36)  {$a_{14}$};
\node (1) at (2.6,3.36)  {\textbf{$l_1$} };

\draw  [black](-1,0)--(2.6,-1.8);
%\filldraw [gray] (0,0) circle (7pt);
\filldraw [gray] (-0.1,-0.45) circle (1pt);
\filldraw [gray] (0.8,-0.9) circle (1pt);
\filldraw [gray] (1.7,-1.35) circle (1pt);
\filldraw [gray] (2.6,-1.8) circle (1pt);
%\node (A) at (0,0)  {$a$};
\node (41) at (-0.1,-0.45)  {$a_{41}$};
\node (42) at (0.8,-0.9)  {$a_{42}$};
\node (43) at (1.7,-1.35)  {$a_{43}$};
\node (44) at (2.6,-1.8)  {$a_{44}$};
\node (4) at (3.4,-1.8)  {\textbf{$l_4$} };

\draw  [black](-1,0)--(1.8,-3.36);
%\filldraw [gray] (0,0) circle (7pt);
\filldraw [gray] (-0.3,-0.84) circle (1pt);
\filldraw [gray] (0.4,-1.68) circle (1pt);
\filldraw [gray] (1.1,-2.52) circle (1pt);
\filldraw [gray] (1.8,-3.36) circle (1pt);
%\node (A) at (0,0)  {$a$};
\node (51) at (-0.3,-0.84)  {$a_{51}$};
\node (52) at (0.4,-1.68)  {$a_{52}$};
\node (53) at (1.1,-2.52)  {$a_{53}$};
\node (54) at (1.8,-3.36)  {$a_{54}$};
\node (5) at (2.6,-3.36)  {\textbf{$l_5$} };

\node (f2) at (7,-4)  {\textbf{Figure 2} };

\draw  [black, densely dashed](5,0)--(6,0);
\draw  [black](6,0)--(9,0);
\filldraw [gray] (5,0) circle (3pt);
\filldraw [black] (6,0) circle (1pt);
\filldraw [black] (7,0) circle (1pt);
\filldraw [black] (8,0) circle (1pt);
\filldraw [black] (9,0) circle (1pt);
%\node (A) at (10,0)  {$a$};
\node (31) at (6,0)  {$a_{31}$};
\node (32) at (7,0)  {$a_{32}$};
\node (33) at (8,0)  {$a_{33}$};
\node (34) at (9,0)  {$a_{34}$};
\node (3) at (9.8,0)  {\textbf{$l_3$} };

\draw  [black, densely dashed](5,0)--(5.9,0.45);
\draw  [black](5.9,0.45)--(8.6,1.8);
%\filldraw [gray] (0,0) circle (7pt);
\filldraw [black] (5.9,0.45) circle (1pt);
\filldraw [black] (6.8,0.9) circle (1pt);
\filldraw [black] (7.7,1.35) circle (1pt);
\filldraw [black] (8.6,1.8) circle (1pt);
%\node (A) at (0,0)  {$a$};
\node (21) at (5.9,0.45)  {$a_{21}$};
\node (22) at (6.8,0.9)  {$a_{22}$};
\node (23) at (7.7,1.35)  {$a_{23}$};
\node (24) at (8.6,1.8)  {$a_{24}$};
\node (2) at (9.4,1.8)  {\textbf{$l_2$} };

\draw  [black, densely dashed](5,0)--(5.7,0.84);
\draw  [black](5.7,0.84)--(7.8,3.36);
%\filldraw [gray] (0,0) circle (7pt);
\filldraw [black] (5.7,0.84) circle (1pt);
\filldraw [black] (6.4,1.68) circle (1pt);
\filldraw [black] (7.1,2.52) circle (1pt);
\filldraw [black] (7.8,3.36) circle (1pt);
%\node (A) at (0,0)  {$a$};
\node (11) at (5.7,0.84)  {$a_{11}$};
\node (12) at (6.4,1.68)  {$a_{12}$};
\node (13) at (7.1,2.52)  {$a_{13}$};
\node (14) at (7.8,3.36)  {$a_{14}$};
\node (1) at (8.6,3.36)  {\textbf{$l_1$} };

\draw  [black, densely dashed](5,0)--(5.9,-0.45);
\draw  [black](5.9,-0.45)--(8.6,-1.8);
%\filldraw [gray] (0,0) circle (7pt);
\filldraw [black] (5.9,-0.45) circle (1pt);
\filldraw [black] (6.8,-0.9) circle (1pt);
\filldraw [black] (7.7,-1.35) circle (1pt);
\filldraw [black] (8.6,-1.8) circle (1pt);
%\node (A) at (0,0)  {$a$};
\node (41) at (5.9,-0.45)  {$a_{41}$};
\node (42) at (6.8,-0.9)  {$a_{42}$};
\node (43) at (7.7,-1.35)  {$a_{43}$};
\node (44) at (8.6,-1.8)  {$a_{44}$};
\node (4) at (9.4,-1.8)  {\textbf{$l_4$} };

\draw  [black, densely dashed](5,0)--(7.8,-3.36);
%\filldraw [gray] (0,0) circle (7pt);
\filldraw [gray] (5.7,-0.84) circle (1pt);
\filldraw [gray] (6.4,-1.68) circle (1pt);
\filldraw [gray] (7.1,-2.52) circle (1pt);
\filldraw [gray] (7.8,-3.36) circle (1pt);
%\node (A) at (0,0)  {$a$};
%\node (51) at (10.7,-0.84)  {$a_{51}$};
%\node (52) at (11.4,-1.68)  {$a_{52}$};
%\node (53) at (12.1,-2.52)  {$a_{53}$};
%\node (54) at (12.8,-3.36)  {$a_{54}$};
\node (5) at (8.6,-3.36)  {\textbf{$l_5$} };

\end{tikzpicture}

We  need to find $16$ points from $PG(2,\F_4)$ to form matrix $G$.

 To ensure a locality equals $(r=2, \delta=3)$ for the code $\mathcal C$, these points need to be such that each point is on a line segment consisting of $r+\delta-1=4$ points. Besides, to get a minimum distance equal $d=12$, these points need to be such that any $n-d+1=5$ points are not on the same line. Delete any one line in Figure 1, leaving four parallel line segments in $PG(2,\F_4)$ with a total of 16 points, as shown in Figure 2. Using these 16 points as columns, we can obtain the desired matrix $G$.

When $d=11$, the optimal LRC code $\mathcal C$ is a $[15,3,11]$ linear code with locality $(2, 3)$. We want to find $15$ points from $PG(2,\F_4)$ to form matrix $G$. Similarly, to ensure the locality and the minimum distance  of code $\mathcal C$, these points need to be such that each point is on a line segment of length $r+\delta-1=4$ and any other $n-d+1=5$ points are not on the same line.
Choose any point $\textbf{b}$ from the broken line $l_5$ in Figure 2, and there are also five lines passing through this point in $PG(2,\F_4)$. Suppose $l_6$ is a line passing through point $\textbf{b}$ and $l_6$ is not the broken line $l_5$,
then there is an intersection point between $l_6$ and each of the four lines $l_i$, $i\in[4]$, denoted as $\textbf{b}_{6i}$, respectively.
%then $l_6$ and $l_i$, $i\in[4]$, have an intersection point for each of the four lines.
Let $l_7$ is another line passing through point $\textbf{b}$, and $l_7$ is neither the broken line $l_5$ nor the line $l_6$, then there is also an intersection point between $l_7$ and each of the four lines $l_i$, $i\in[4]$, denoted as $\textbf{b}_{7i}$, respectively. Meanwhile $\{\textbf{b}_{61},\textbf{b}_{62},\textbf{b}_{63},\textbf{b}_{64}\} \cap \{\textbf{b}_{71},\textbf{b}_{72},\textbf{b}_{73},\textbf{b}_{74}\}= \emptyset$. In this way, we can find four lines passing through point $\textbf{b}$, as shown in Figure 3.

As shown in Figure 4, by deleting point $\textbf{b}$ and anyone point $\textbf{a}_{1i}$, $i\in [4]$, in Figure 3, using the remaining $15$ points as columns, we can get the desired matrix $G$. Further, according to the same principle, we can obtain the generation matrix of code $\mathcal C$ of all possible parameters in this case by selecting appropriate points.

\begin{tikzpicture}

\node (f3) at (1,-4)  {\textbf{Figure 3} };

\draw  [black, densely dashed](-1,0)--(0,0);
\draw  [black](0,0)--(3,0);
\filldraw [gray] (-1,0) circle (3pt);
\filldraw [gray] (0,0) circle (1pt);
\filldraw [gray] (1,0) circle (1pt);
\filldraw [gray] (2,0) circle (1pt);
\filldraw [gray] (3,0) circle (1pt);
%\node (A) at (0,0)  {$a$};
\node (31) at (0,0)  {$a_{31}$};
\node (32) at (1,0)  {$a_{32}$};
\node (33) at (2,0)  {$a_{33}$};
\node (34) at (3,0)  {$a_{34}$};
\node (3) at (3.8,0)  {\textbf{$l_3$} };

\draw  [black, densely dashed](-1,0)--(-0.1,0.45);
\draw  [black](-0.1,0.45)--(2.6,1.8);
%\filldraw [gray] (0,0) circle (7pt);
\filldraw [gray] (-0.1,0.45) circle (1pt);
\filldraw [gray] (0.8,0.9) circle (1pt);
\filldraw [gray] (1.7,1.35) circle (1pt);
\filldraw [gray] (2.6,1.8) circle (1pt);
%\node (A) at (0,0)  {$a$};
\node (21) at (-0.1,0.45)  {$a_{21}$};
\node (22) at (0.8,0.9)  {$a_{22}$};
\node (23) at (1.7,1.35)  {$a_{23}$};
\node (24) at (2.6,1.8)  {$a_{24}$};
\node (2) at (3.4,1.8)  {\textbf{$l_2$} };

\draw  [black, densely dashed](-1,0)--(-0.3,0.84);
\draw  [black](-0.3,0.84)--(1.8,3.36);
%\filldraw [gray] (0,0) circle (7pt);
\filldraw [gray] (-0.3,0.84) circle (1pt);
\filldraw [gray] (0.4,1.68) circle (1pt);
\filldraw [gray] (1.1,2.52) circle (1pt);
\filldraw [gray] (1.8,3.36) circle (1pt);
%\node (A) at (0,0)  {$a$};
\node (11) at (-0.3,0.84)  {$a_{11}$};
\node (12) at (0.4,1.68)  {$a_{12}$};
\node (13) at (1.1,2.52)  {$a_{13}$};
\node (14) at (1.8,3.36)  {$a_{14}$};
\node (1) at (2.6,3.36)  {\textbf{$l_1$} };

\draw  [black, densely dashed](-1,0)--(-0.1,-0.45);
\draw  [black](-0.1,-0.45)--(2.6,-1.8);
%\filldraw [gray] (0,0) circle (7pt);
\filldraw [gray] (-0.1,-0.45) circle (1pt);
\filldraw [gray] (0.8,-0.9) circle (1pt);
\filldraw [gray] (1.7,-1.35) circle (1pt);
\filldraw [gray] (2.6,-1.8) circle (1pt);
%\node (A) at (0,0)  {$a$};
\node (41) at (-0.1,-0.45)  {$a_{41}$};
\node (42) at (0.8,-0.9)  {$a_{42}$};
\node (43) at (1.7,-1.35)  {$a_{43}$};
\node (44) at (2.6,-1.8)  {$a_{44}$};
\node (4) at (3.4,-1.8)  {\textbf{$l_4$} };

\draw  [black, densely dashed](-1,0)--(1.8,-3.36);
%\filldraw [gray] (0,0) circle (7pt);
\filldraw [gray] (-0.3,-0.84) circle (1pt);
\filldraw [gray] (0.4,-1.68) circle (1pt);
\filldraw [gray] (1.1,-2.52) circle (1pt);
\filldraw [gray] (1.8,-3.36) circle (1pt);
%\node (A) at (0,0)  {$a$};
%\node (51) at (10.7,-0.84)  {$a_{51}$};
%\node (52) at (11.4,-1.68)  {$a_{52}$};
\node (53) at (1.1,-2.52)  {$b$};
%\node (54) at (12.8,-3.36)  {$a_{54}$};
\node (5) at (2.6,-3.36)  {\textbf{$l_5$} };

\draw  [red](1.1,-2.52)--(-0.1,-0.45)--(0,0)--(-0.1,0.45)--(-0.3,0.84);
\draw  [red](1.1,-2.52)--(0.8,-0.9)--(1,0)--(0.8,0.9)--(0.4,1.68);
%\node (5) at (1.0,2.18)  {\textbf{$l_6$} };
\draw  [red](1.1,-2.52)--(1.7,-1.35)--(2,0)--(1.7,1.35)--(1.1,2.52);
%\node (5) at (1.7,3.02)  {\textbf{$l_7$} };
\draw  [red](1.1,-2.52)--(2.6,-1.8)--(3,0)--(2.6,1.8)--(1.8,3.36);
%\node (5) at (2.4,3.86)  {\textbf{$l_8$} };

\node (f4) at (7,-4)  {\textbf{Figure 4} };

\draw  [black, densely dashed](5,0)--(6,0);
\draw  [black](6,0)--(9,0);
\filldraw [gray] (5,0) circle (3pt);
\filldraw [black] (6,0) circle (1pt);
\filldraw [black] (7,0) circle (1pt);
\filldraw [black] (8,0) circle (1pt);
\filldraw [black] (9,0) circle (1pt);
%\node (A) at (10,0)  {$a$};
\node (31) at (6,0)  {$a_{31}$};
\node (32) at (7,0)  {$a_{32}$};
\node (33) at (8,0)  {$a_{33}$};
\node (34) at (9,0)  {$a_{34}$};
\node (3) at (9.8,0)  {\textbf{$l_3$} };

\draw  [black, densely dashed](6,0)--(6.9,0.45);
\draw  [black](6.9,0.45)--(8.6,1.8);
%\filldraw [gray] (0,0) circle (7pt);
\filldraw [black] (5.9,0.45) circle (1pt);
\filldraw [black] (6.8,0.9) circle (1pt);
\filldraw [black] (7.7,1.35) circle (1pt);
\filldraw [black] (8.6,1.8) circle (1pt);
%\node (A) at (0,0)  {$a$};
\node (21) at (5.9,0.45)  {$a_{21}$};
\node (22) at (6.8,0.9)  {$a_{22}$};
\node (23) at (7.7,1.35)  {$a_{23}$};
\node (24) at (8.6,1.8)  {$a_{24}$};
\node (2) at (9.4,1.8)  {\textbf{$l_2$} };

\draw  [black, densely dashed](5,0)--(5.7,0.84);
\draw  [black](5.7,0.84)--(7.8,3.36);
%\filldraw [gray] (0,0) circle (7pt);
\filldraw [black] (5.7,0.84) circle (1pt);
\filldraw [black] (6.4,1.68) circle (1pt);
\filldraw [black] (7.1,2.52) circle (1pt);
\filldraw [black] (7.8,3.36) circle (1pt);
%\node (A) at (0,0)  {$a$};
\node (11) at (5.7,0.84)  {$a_{11}$};
\node (12) at (6.4,1.68)  {$a_{12}$};
\node (13) at (7.1,2.52)  {$a_{13}$};
\node (14) at (7.8,3.36)  {$a_{14}$};
\node (1) at (8.6,3.36)  {\textbf{$l_1$} };

\draw  [black, densely dashed](5,0)--(8.6,-1.8);
%\filldraw [gray] (0,0) circle (7pt);
\filldraw [gray] (5.9,-0.45) circle (1pt);
\filldraw [black] (6.8,-0.9) circle (1pt);
\filldraw [black] (7.7,-1.35) circle (1pt);
\filldraw [black] (8.6,-1.8) circle (1pt);
%\node (A) at (0,0)  {$a$};
%\node (41) at (10.9,-0.45)  {$a_{41}$};
\node (42) at (6.8,-0.9)  {$a_{42}$};
\node (43) at (7.7,-1.35)  {$a_{43}$};
\node (44) at (8.6,-1.8)  {$a_{44}$};
\node (4) at (9.4,-1.8)  {\textbf{$l_4$} };

\draw  [black, densely dashed](5,0)--(7.8,-3.36);
%\filldraw [gray] (0,0) circle (7pt);
\filldraw [gray] (5.7,-0.84) circle (1pt);
\filldraw [gray] (6.4,-1.68) circle (1pt);
\filldraw [gray] (7.1,-2.52) circle (1pt);
\filldraw [gray] (7.8,-3.36) circle (1pt);
%\node (A) at (0,0)  {$a$};
%\node (51) at (10.7,-0.84)  {$a_{51}$};
%\node (52) at (11.4,-1.68)  {$a_{52}$};
%\node (53) at (12.1,-2.52)  {$b$};
%\node (54) at (12.8,-3.36)  {$a_{54}$};
\node (5) at (8.6,-3.36)  {\textbf{$l_5$} };

\draw  [red, densely dashed](7.1,-2.52)--(5.9,-0.45)--(6,0)--(5.9,0.45)--(5.7,0.84);
\draw  [red, densely dashed](7.1,-2.52)--(6.8,-0.9);
\draw  [red](6.8,-0.9)--(7,0)--(6.8,0.9)--(6.4,1.68);
%\node (5) at (1.0,2.18)  {\textbf{$l_6$} };
\draw  [red, densely dashed](7.1,-2.52)--(7.7,-1.35);
\draw  [red](7.7,-1.35)--(8,0)--(7.7,1.35)--(7.1,2.52);
%\node (5) at (1.7,3.02)  {\textbf{$l_7$} };
\draw  [red, densely dashed](7.1,-2.52)--(8.6,-1.8);
\draw  [red](8.6,-1.8)--(9,0)--(8.6,1.8)--(7.8,3.36);
%\node (5) at (2.4,3.86)  {\textbf{$l_8$} };

\end{tikzpicture}

%deleting point $a_{42}$ in Figure 4 and using the remaining $14$ points as columns, we can obtain a matrix $G$ that generates a $[14,3,10]$ linear code with locality $(2, 3)$.

%deleting points $a_{42}$ and $a_{43}$ in Figure 4 and using the remaining $13$ points as columns, we can obtain a matrix $G$ that generates a $[13,3,9]$ linear code with locality $(2, 3)$.

%deleting points $a_{42}$ and $a_{43}$ and $a_{44}$ in Figure 4 and using the remaining $12$ points as columns, we can obtain a matrix $G$ that generates a $[12,3,8]$ linear code with locality $(2, 3)$.

%$G=\{a,a_{11},a_{12},a_{13}, a_{21},a_{22},a_{23}, a_{31},a_{32},a_{33}, a_{41}\}$ generates a $[11,3,7]$ linear code with locality $(2, 3)$.

%$G=\{a_{11},a_{12},a_{13}, a_{14},a_{21},a_{22},a_{23}, a_{24},a_{31},a_{41}\}$ generates a $[10,3,6]$ linear code with locality $(2, 3)$.

%$G=\{a,a_{11},a_{12},a_{13}, a_{21},a_{22},a_{23},a_{31},a_{41}\}$ generates a $[9,3,5]$ linear code with locality $(2, 3)$.

In this case, the optimal $(r, \delta)$-LRC code has parameters
%$$n = d+4 ,~ k= 3,~ d=5,~7,~\mbox{or}~9,~ r = 2, ~\delta =3.~~~~~(21)$$
\begin{equation}\label{16}
n = d+4 ,~ k= 3,~ 5 \leq d\leq 12,~ r = 2, ~\delta =3.
\end{equation}

Let matrix
$$
G=\left(\begin{array}{lllllllllllllllllll}
0 &0 &0  &0 & &1  &1&1&1 & &1  &1&1&1 & &1  &1&1&w^2\\
0 &1 &1  &1 & &w^2&1&w&0 & &w^2&w&0&1 & &w^2&0&1&1\\
1 &w &w^2&1 & &0  &0&0&0 & &1  &1&1&1 & &w  &w&w&1
\end{array}\right)_{3\times 16}.
$$
In fact, the columns of the matrix $G$ (denoted by $\textbf{g}_j$, $j\in[16]$) can correspond in turn to points $\textbf{a}_{1i}$, $\textbf{a}_{2i}$, $\textbf{a}_{3i}$ and $\textbf{a}_{4i}$, $i\in[4]$, in Figure 3. Meanwhile, we can take the point $\textbf{a}=(0,1,0)^T$ in the figure.

\begin{construction}\label{construction16}

The above generator matrix $G$ gives the optimal quaternary $(r, \delta)$-LRC code $\mathcal C$ for the parameters in (\ref{16}) with $d=12$. By puncturing $\mathcal C$ on the coordinates set $S$, with $S=\{13\}$, $\{13,14\}$, $\{13,14,15\}$, $\{13,14,15,16\}$ or $\{11,12,13,14,15,16\}$, we can get the optimal quaternary $(r, \delta)$-LRC codes with $d=11,~10,~9$, $8$ or $6$, respectively.
The matrix $G'=(\textbf{a}, \textbf{g}_1, \textbf{g}_2, \textbf{g}_3, \textbf{g}_5, \textbf{g}_6, \textbf{g}_7, \textbf{g}_9, \textbf{g}_{10}, \textbf{g}_{11}, \textbf{g}_{13})$ or $(\textbf{a}, \textbf{g}_1, \textbf{g}_2, \textbf{g}_3, \textbf{g}_5, \textbf{g}_6, \textbf{g}_7, \textbf{g}_9,\textbf{g}_{13})$ give the optimal quaternary $(r, \delta)$-LRC codes with $d=7$ or $5$, respectively.

\end{construction}

\underline{\textbf{If $s\geq 2$.}} $n=4s+d$, $k=2s+1$ and $5 \leq d \leq 12$.
Since $n-n'=s(r+\delta-1),$ the supports of $s$ removed local groups are pairwise disjoint. These $s$ groups can be chosen arbitrarily, we have the supports of all $l$ local groups are disjoint and $n=l(r+\delta-1)=4l$.
From $4s+d=4l$, we get $4 \mid d$, $d=8$ or $12$, $l-s=\frac{d}{4}=2$ or $3$.

\underline{\textbf{When $d=8$, $l-s=2$.}}

In this case, the optimal $(r, \delta)$-LRC has an equivalent parity-check matrix $H$ in the following form:
$$
H=\left(\begin{array}{cccc|cccc|c|cccc}
1 & 0 & 1 & 1 & 0 & 0 & 0 & 0 & \ldots  & 0 & 0 & 0 & 0\\
0 & 1 & 1 & w & 0 & 0 & 0 & 0 & \ldots  & 0 & 0 & 0 & 0\\
0 & 0 & 0 & 0 & 1 & 0 & 1 & 1 & \ldots  & 0 & 0 & 0 & 0\\
0 & 0 & 0 & 0 & 0 & 1 & 1 & w & \ldots  & 0 & 0 & 0 & 0 \\
\vdots & \vdots & \vdots & \vdots & \vdots & \vdots & \vdots & \vdots & \ldots & \vdots & \vdots & \vdots & \vdots \\
0 & 0 & 0 & 0 & 0 & 0 & 0 & 0 & \ldots  &1 & 0 & 1 & 1 \\
0 & 0 & 0 & 0 & 0 & 0 & 0 & 0 & \ldots  &0 & 1 & 1 & w \\
\hline
\textbf{0} & \textbf{0} & \textbf{u}_{1} & \textbf{v}_{1} & \textbf{0} &\textbf{0} & \textbf{u}_{2}& \textbf{v}_{2}& \ldots & \textbf{0} &\textbf{0}& \textbf{u}_{l} &  \textbf{v}_{l}
\end{array}\right)
$$
where $\textbf{u}_{i},~ \textbf{v}_{i} \in \F_4^3$, $i \in [l]$.
Denote $\mathcal{V}_i = \mbox{Span}\{\textbf{u}_i, \textbf{v}_i\}$ the subspace of $\F_4^3$ spanned by $\textbf{u}_{i}$ and $ \textbf{v}_{i}$. Since the minimum distance $d=8$, it's easy to verify that these vectors satisfy the following two properties:
\begin{enumerate}
\item [(1)] dim$(\mathcal{V}_i)$=2;
\item [(2)] for any $j \neq i \in [l]$, $\textbf{u}_i$, $\textbf{v}_i$, $\textbf{u}_i + \textbf{v}_i$, $w\textbf{u}_i + \textbf{v}_i \notin \mathcal{V}_j$.

\end{enumerate}

We can consider the vectors $\textbf{u}_{i},~ \textbf{v}_{i}$, $i \in [l]$, as points in the projective plane $PG(2,\F_4)$, then $\mathcal{V}_i$ is a line in $PG(2,\F_4)$ generated by $\textbf{u}_{i}$ and $ \textbf{v}_{i}$. Any two lines intersect precisely at one point, for any $i \neq j \in [l]$, let $\textbf{p}_{ij}$ be the intersection point of $\mathcal{V}_i$ and $\mathcal{V}_j$. From above properties, we have $\textbf{p}_{ij}=w^2\textbf{u}_i + \textbf{v}_i=w^2\textbf{u}_j + \textbf{v}_j$. This means that all the lines $\mathcal{V}_i$, $i \in [l]$, intersect at the same point $\textbf{p}=w^2\textbf{u}_1 + \textbf{v}_1=\cdots=w^2\textbf{u}_l + \textbf{v}_l$. There are at most five lines passing through a given point, therefore $l \leq 5$. Since $l=s+2$, $s\geq 2$, we get $l=4$ or 5, accordingly, $n=16$ or 20.
%By \cite{chen2021} (see.[Theorem 8]), the order of the field is $q=4$, we have the code length $n\leq  4(q+1)=20$. From $n=4s+d$, $s\geq 2$, we get $n=16$ or 20.

In this case, the optimal $(r, \delta)$-LRC code has parameters
\begin{equation}\label{l-s=2_1}
n = 16 ~\mbox{or} ~20,~ k= 3,~ d=8,~ r = 2, ~\delta =3.
\end{equation}

\begin{construction}\label{constructionl-s=2_1}
The following parity-check matrix $H$ gives the optimal quaternary $(r, \delta)$-LRC code for the parameters in (\ref{l-s=2_1}) with $n=20$. By puncturing the rows and columns corresponding to a local group of the matrix $H$, we can obtain the parity-check matrix of the optimal quaternary $(r, \delta)$-LRC code with $n=16$.

%$$
%H=\left(\begin{array}{cccccccccccccccccccccccc}
%I_{5} \otimes \tilde{H} \\
%\hline
%0 & 0 & 0 & 0 && 0 & 0 & 1   & 1 && 0 & 0 &1 &1&& 0 & 0 &1 & w^2 &&0 &0&1  &1\\
%0 & 0 & 0 & 1 && 0 & 0 & w^2 & 1 && 0 & 0 &0 &1&& 0 & 0 &1 & 1   &&0 &0&0  &1\\
%0 & 0 & 1 & w && 0 & 0 & 0   & 0 && 0 & 0 &1 &1&& 0 & 0 &w & 1   &&0 &0&w^2&w^2\\

%\end{array}\right).
%$$

$$
H=\left(\begin{array}{c}
I_{5} \otimes \tilde{H} \\
\hline
0 ~~ 0 ~~ 0 ~~~ 0 ~~~~~ 0 ~~ 0 ~~ 1 ~~w^2 ~~~~ 0 ~~ 0 ~~1 ~~w^2 ~~~~ 0 ~~ 0 ~~1 ~~ w^2 ~~~~0 ~~0~~~1  ~~w^2\\
0 ~~ 0 ~~ 0 ~~~ 1 ~~~~~ 0 ~~ 0 ~~ 0 ~~~1 ~~~~~ 0 ~~ 0 ~~0 ~~~1 ~~~~~ 0 ~~ 0 ~~0 ~~~ 1 ~~~~~0 ~~0~~~0  ~~~1~\\
0 ~~ 0 ~~ 1 ~~ w^2 ~~~~ 0 ~~ 0 ~~ 0 ~~~0 ~~~~~ 0 ~~ 0 ~~1 ~~w^2 ~~~~ 0 ~~ 0 ~~w ~~  1 ~~~~~0 ~~0~~w^2 ~~w~\\

\end{array}\right),
$$

where $\tilde{H}=\left(\begin{array}{cccc}
 1 & 0 & 1 & 1 \\
0 & 1 & 1 & w
\end{array}\right)$
is a generator matrix of the quaternary $[4, 2, 3]$  MDS code.
\end{construction}

\underline{\textbf{When $d=12$, $l-s=3$.}}

In this case, the optimal $(r, \delta)$-LRC has an equivalent parity-check matrix $H$ in the following form:
$$
H=\left(\begin{array}{cccc|cccc|cccc|c|cccc}
1 & 0 & 1 & 1 & 0 & 0 & 0 & 0 & 0 & 0 & 0 & 0 & \ldots  & 0 & 0 & 0 & 0\\
0 & 1 & 1 & w & 0 & 0 & 0 & 0 & 0 & 0 & 0 & 0 & \ldots  & 0 & 0 & 0 & 0\\
0 & 0 & 0 & 0 & 1 & 0 & 1 & 1 & 0 & 0 & 0 & 0 & \ldots  & 0 & 0 & 0 & 0\\
0 & 0 & 0 & 0 & 0 & 1 & 1 & w & 0 & 0 & 0 & 0 & \ldots  & 0 & 0 & 0 & 0 \\
0 & 0 & 0 & 0 & 0 & 0 & 0 & 0 & 1 & 0 & 1 & 1 & \ldots  & 0 & 0 & 0 & 0\\
0 & 0 & 0 & 0 & 0 & 0 & 0 & 0 & 0 & 1 & 1 & w & \ldots  & 0 & 0 & 0 & 0 \\
\vdots & \vdots & \vdots & \vdots & \vdots & \vdots & \vdots & \vdots & \vdots & \vdots & \vdots & \vdots & \ldots & \vdots & \vdots & \vdots & \vdots \\
0 & 0 & 0 & 0 & 0 & 0 & 0 & 0 & 0 & 0 & 0 & 0 & \ldots  &1 & 0 & 1 & 1 \\
0 & 0 & 0 & 0 & 0 & 0 & 0 & 0 & 0 & 0 & 0 & 0 & \ldots  &0 & 1 & 1 & w \\
\hline
\textbf{0} & \textbf{0} & \textbf{u}_{1} & \textbf{v}_{1} & \textbf{0} &\textbf{0} & \textbf{u}_{2}& \textbf{v}_{2}& \textbf{0} &\textbf{0} & \textbf{u}_{3}& \textbf{v}_{3}&\ldots & \textbf{0} &\textbf{0}& \textbf{u}_{l} &  \textbf{v}_{l}
\end{array}\right)
$$
where $\textbf{u}_{i},~ \textbf{v}_{i} \in \F_4^5$, $i \in [l]$.
Denote $\mathcal{V}_{ij} = \mbox{Span}\{\textbf{u}_i, \textbf{v}_i, \textbf{u}_j, \textbf{v}_j\}$, $i \neq j \in [l]$, the subspace of $\F_4^5$ spanned by $\textbf{u}_{i}$, $ \textbf{v}_{i}$, $\textbf{u}_{j}$ and $ \textbf{v}_{j}$. Since the minimum distance  $d=12$, it's also easy to verify that these vectors satisfy the following two properties:
\begin{enumerate}
\item [(1)] dim$(\mathcal{V}_{ij})$=4;
\item [(2)] for any $h \neq i \neq j$, $\textbf{u}_h$, $\textbf{v}_h$, $\textbf{u}_h + \textbf{v}_h$, $w\textbf{u}_h + \textbf{v}_h \notin \mathcal{V}_{ij}$.

\end{enumerate}

Consider the vectors $\textbf{u}_{i},~ \textbf{v}_{i}$, $i \in [l]$, as points in the $PG(4,\F_4)$,
let $L_i$ be the line generated by $\textbf{u}_{i}$ and $ \textbf{v}_{i}$. For any $h \neq i \neq j$, $L_h$ and $\mathcal{V}_{ij}$ intersect precisely at one point, which is denoted as $\textbf{p}_{ij}^h$. From above properties, we have $\textbf{p}_{ij}^h=w^2\textbf{u}_h + \textbf{v}_h$, equivalently, $\textbf{p}_{hj}^i=w^2\textbf{u}_i + \textbf{v}_i$, $\textbf{p}_{hi}^j=w^2\textbf{u}_j + \textbf{v}_j$.
Since $L_h \subset \mathcal{V}_{hi} \cap \mathcal{V}_{hj}$, $\textbf{p}_{hj}^i \in \mathcal{V}_{hi} \cap \mathcal{V}_{hj}$, $\textbf{p}_{hj}^i \in \mathcal{V}_{hi} \cap \mathcal{V}_{hj}$, the dimension of $\mathcal{V}_{hi} \cap \mathcal{V}_{hj}$ is 3, we have the dimension of the subspace Span$\{L_h,\textbf{p}_{hj}^i,\textbf{p}_{hj}^i\}$ is 3. This means that two lines $L_h$ and $L_{\textbf{p}_{hj}^i,\textbf{p}_{hj}^i}$ intersect at a point, and further, that intersection point is $\textbf{p}_{ij}^h$, where $L_h$ intersects $\mathcal{V}_{ij}$. Therefore, the points $\textbf{p}_{ij}^h$, $\textbf{p}_{hj}^i$ and $\textbf{p}_{hj}^i$ are collinear, that is, all the points $w^2\textbf{u}_i + \textbf{v}_i$, $i \in [l]$, are collinear.
There are at most five points in a given line, therefore $l \leq 5$. Since $l=s+3$, $s\geq 2$, we get $l=5$, $n=20$.

In this case, the optimal $(r, \delta)$-LRC code has parameters
\begin{equation}\label{l-s=3_1}
n =20,~ k= 3,~ d=12,~ r = 2, ~\delta =3.
\end{equation}

\begin{construction}\label{constructionl-s=3_1}
The following parity-check matrix $H$ gives the optimal quaternary $(r, \delta)$-LRC code for the parameters in (\ref{l-s=3_1}).

$$
H=\left(\begin{array}{c}
I_{5} \otimes \tilde{H} \\
\hline
0 ~~ 0 ~~ 0 ~~~ 0 ~~~~~ 0 ~~ 0 ~~ 0 ~~~0 ~~~~~ 0 ~~ 0 ~~1 ~~w^2 ~~~~ 0 ~~ 0 ~~1 ~~ w^2 ~~~~0 ~~0~~~1 ~~w^2\\
0 ~~ 0 ~~ 0 ~~~ 0 ~~~~~ 0 ~~ 0 ~~ 0 ~~~1 ~~~~~ 0 ~~ 0 ~~0 ~~~1 ~~~~~ 0 ~~ 0 ~~1 ~~ w ~~~~~0 ~~0~~~w ~~~0~\\
0 ~~ 0 ~~ 0 ~~~ 0 ~~~~~ 0 ~~ 0 ~~ 1 ~~w^2 ~~~~ 0 ~~ 0 ~~0 ~~~0 ~~~~~ 0 ~~ 0 ~~1 ~~ w^2 ~~~~0 ~~0~~~w ~~~1~\\
0 ~~ 0 ~~ 0 ~~~ 1 ~~~~~ 0 ~~ 0 ~~ 0 ~~~0 ~~~~~ 0 ~~ 0 ~~0 ~~~1 ~~~~~ 0 ~~ 0 ~~1 ~~~ 1 ~~~~~0 ~~0~~w^2 ~~1~\\
0 ~~ 0 ~~ 1 ~~ w^2 ~~~~ 0 ~~ 0 ~~ 0 ~~~0 ~~~~~ 0 ~~ 0 ~~1 ~~w^2 ~~~~ 0 ~~ 0 ~~1 ~~ w^2 ~~~~0 ~~0~~w^2 ~~w~\\
\end{array}\right),
$$

where $\tilde{H}=\left(\begin{array}{cccc}
 1 & 0 & 1 & 1 \\
0 & 1 & 1 & w
\end{array}\right)$
is a generator matrix of the quaternary $[4, 2, 3]$  MDS code.
\end{construction}

\subsubsection{Case $r=2,~\delta=4$.}

We have $k=2s+1,$ the codes $\langle H_{i}' \rangle $, $i\in[l]$, are $[5,3,3]$ MDS codes, the codes $\mathcal C|_{S_i}$ are $[5,2,4]$ MDS codes.
%The codes $\langle H_{i}' \rangle $, $i\in [l]$, are $[5,3,3]$ MDS codes.

\underline{\textbf{If $s=1$.}} $n=5+d$, $k = 3$ and $5 \leq d\leq  4\delta=16$.

When $d=16$, the optimal LRC code $\mathcal C$ is a $[21,3,16]$ linear code with locality $(2, 4)$. As in the case \ref{5.2.1}, we can think of the columns of generator matrix $G$ as points in the projective plane $PG(2,\F_4)$.
We want to find $21$ points from $PG(2,\F_4)$ to form matrix $G$. To ensure a locality equals $(r=2, \delta=4)$ for the code $\mathcal C$, these points need to be such that each point is on a line consisting of $r+\delta-1=5$ points. Besides, to ensure minimum distance $d=16$, these points need to be such that any $n-d+1=6$ points are not on the same line. It is easy to see that the matrix $G$ formed by taking all the $21$ points in $PG(2,\F_4)$ as columns is the generator matrix of code $\mathcal C$ we want. Similarly, according to the principle, we can also obtain the generation matrix of code $\mathcal C$ of all possible parameters by selecting appropriate points.

\begin{Claim}\label{claim1}
In this case, the minimum distance  $d$ cannot be $5$ or $6$, that is, there is no the optimal $(2, 4)$-LRC code with parameters $[10,3,5]$ or $[11,3,6]$ over $\F_4$.
\end{Claim}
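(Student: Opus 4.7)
The plan is to use the projective-geometry viewpoint already set up in the surrounding text: view the columns of a generator matrix $G$ of $\mathcal{C}$ as points in $\PG(2, \F_4)$, and exploit the fact that a line of $\PG(2, \F_4)$ carries exactly five points.

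For the putative $[11, 3, 6]$ case, the code is ruled out by a pure cardinality count. With $r = 2$, $\delta = 4$, $k = 3$, $s = 1$, one has $l = \lceil k/r \rceil = 2$, and from the general analysis opening Section~\ref{Sec:LRC-$d=5$} each local support satisfies $\#S_i = r + \delta - 1 = 5$. The two supports $S_1, S_2$ must cover $[n]$, yet $\#(S_1 \cup S_2) \leq 5 + 5 = 10 < 11 = n$, a contradiction.

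For the $[10, 3, 5]$ case I work in $\PG(2, \F_4)$. Here $n = \#S_1 + \#S_2 = 10$, so $S_1$ and $S_2$ partition $[10]$, and each $\mathcal{C}|_{S_i}$ is a $[5, 2, 4]$ MDS code by Corollary~\ref{HMDS}. Hence the five columns of $G$ indexed by $S_i$ span only a $2$-dimensional subspace of $\F_4^3$ and are pairwise non-proportional, since any two columns of the generator matrix of a $[5, 2, 4]$ MDS code are linearly independent; they therefore realise precisely the five projective points of some line $\ell_i \subset \PG(2, \F_4)$. I then split into two subcases. If $\ell_1 = \ell_2$, all ten columns of $G$ lie in a common $2$-dimensional subspace, contradicting $\mathrm{rank}(G) = 3$. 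If $\ell_1 \neq \ell_2$, then $\ell_1 \cap \ell_2$ consists of a single point $p$, and $\ell_1$ contains all five columns indexed by $S_1$ together with the unique column of $S_2$ representing $p$, giving six columns of $G$ on the line $\ell_1$; dually this hyperplane produces a codeword of weight $n - 6 = 4$, contradicting $d = 5$. Both subcases collapse, proving the claim.

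The only conceptual point, more than a genuine obstacle, is the observation that a $[5, 2, 4]$ MDS code over $\F_4$ must saturate its ambient projective line (because $\#\PG(1, \F_4) = 5$); once this is in hand, the elementary combinatorics of two intersecting lines in $\PG(2, \F_4)$ forces the contradiction in each subcase.
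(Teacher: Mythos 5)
Your proof is correct. The $[11,3,6]$ half is exactly the paper's argument: $l=2$, each $\#S_i=r+\delta-1=5$, so the two local supports cannot cover $11$ coordinates. For the $[10,3,5]$ half, however, you take a genuinely different route. The paper stays on the parity-check side: after removing one local group, $H'$ must generate a $[5,4,2]$ MDS code containing the $[5,3,3]$ code $\langle H_2'\rangle$ as a subcode; dualizing gives a $[5,1,5]$ code inside a $[5,2,4]$ MDS code, which is impossible because the MDS weight distribution forces $A_5=0$ for the $[5,2,4]$ code. You instead argue on the generator side: each $\mathcal{C}|_{S_i}$ being $[5,2,4]$ MDS forces the five columns of $G$ indexed by $S_i$ to be the five points of a line $\ell_i$ in $\PG(2,\F_4)$ (a line is saturated since it has exactly $q+1=5$ points), and then either $\ell_1=\ell_2$ kills the rank of $G$, or $\ell_1\neq\ell_2$ puts six columns on the hyperplane $\ell_1$ and produces a nonzero codeword of weight $10-6=4<d$. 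Both arguments are sound; yours is more elementary in that it avoids the MDS weight-enumerator computation and directly exhibits the forbidden low-weight codeword, and it fits naturally with the projective-geometry machinery the paper deploys elsewhere in Section 6, whereas the paper's dual-containment argument has the advantage of being reusable verbatim in the proof of Claim 2 (the $[11,4,5]$ case), where the supports overlap and the clean ``two disjoint lines'' picture is not available.
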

\begin{proof}
When $d=5$, the optimal $(2, 4)$-LRC code $\mathcal C$ is a $[10,3,5]$ linear code. From Proposition \ref{l}, the number of local groups in parity-check matrix $H$ is $l=2$. Since all local groups $H_i$, $i\in[2]$, have supports size $\#S_i=5$, the union of the supports of all local groups $\#(S_1\cup S_2) = n=10$, we have the supports of these two local groups are disjoint.
After removing $\lceil\frac{k}{r}\rceil-1=1$ local group from $H$, without loss of generality, suppose $H_{1}'$ is removed, then matrix $H'$ contains $H_2'$ as its submatrix, that is, a $[5,4,2]$ MDS code $C_1$ contains a $[5,3,3]$ MDS code $C_2$ as its subcode. We say this situation is impossible. If $C_2$ is a subcode of $C_1$, then $C_1^{\bot}$ is a subcode of $C_2^{\bot}$. Since the weight distribution of MDS code can be uniquely determined by its code parameters \cite{MacWilliams1977}, $C_2^{\bot}$ is a $[5,2,4]$ MDS code, we can get there is no codeword of weight 5 in code $C_2^{\bot}$.  Finally note that $C_1^{\bot}$ is a $[5,1,5]$ MDS code, it cannot be  be a subcode of $C_2^{\bot}$.

Now, when $d=6$, the optimal $(2, 4)$-LRC code $\mathcal C$ is a $[11,3,6]$ linear code. From Proposition \ref{l}, we can also get the number of local groups in parity-check matrix $H$ is $l=2$. Since all local groups $H_i$, $i\in[2]$, have supports size $\#S_i=5$, we have the union of the supports of these two local groups $\#(S_1\cup S_2)< n=11$, this contradicts the property of the local group.
\end{proof}

In this case, the optimal $(r, \delta)$-LRC code has parameters
%$$n = d+5 ,~ k= 3,~ d=7,~10,~13,~\mbox{or}~16,~ r = 2, ~\delta =4.~~~~~(22)$$
\begin{equation}\label{17}
n = d+5 ,~ k= 3,~ 7 \leq d\leq 16,~ r = 2, ~\delta =4.
\end{equation}

Let matrix
$$
G=\left(\begin{array}{llllllllllllllllllllllllll}
0& & 0 &0 &0  &0 & &1  &1&1&1 & &1  &1&1&1 & &1  &1&1&w^2 & &1&1&w&1\\
1& & 0 &1 &1  &1 & &w^2&1&w&0 & &w^2&w&0&1 & &w^2&0&1&1   & &0&1&1&w\\
0& & 1 &w &w^2&1 & &0  &0&0&0 & &1  &1&1&1 & &w  &w&w&1   & &w^2&w^2&1&w^2
\end{array}\right)_{3\times 21}.
$$
In fact, the columns of the matrix $G$ (denoted by $\textbf{g}_j$, $j\in[21]$) can correspond in turn to points $\textbf{a}$, $\textbf{a}_{1i}$, $\textbf{a}_{2i}$, $\textbf{a}_{3i}$, $\textbf{a}_{4i}$ and $\textbf{a}_{5i}$, $i\in[4]$, in Figure 1.

%\textbf{Construction 17.}

\begin{construction}\label{construction17}

The above generator matrix $G$ gives the optimal quaternary $(r, \delta)$-LRC code $\mathcal C$ for the parameters in (\ref{17}) with $d=16$. By puncturing $\mathcal C$ on the coordinates set $S$, with $S=\{2\}$, $\{2,3\}$, $\{2,3,4\}$, $\{2,...,5\}$, $\{13,17,18,19,21\}$, $\{15,...,19,21\}$, $\{12,13,16,..,19,21\}$, $\{14,15,...,21\}$, $\{11,12,13,15,...,19,21\}$, we can get the optimal quaternary $(r, \delta)$-LRC codes with $d=15,~14,...,7$, respectively.

\end{construction}

\underline{\textbf{If $s\geq 2$.}} $n=5s+d$, $k=2s+1$ and $5 \leq d \leq 16$.
%Since $n-n'=s(r+\delta-1),$ the supports of $s$ removed local groups are pairwise disjoint. These $s$ groups can be chosen arbitrarily, we have
The supports of all $l$ local groups are disjoint and $n=l(r+\delta-1)=5l$.
From $5s+d=5l$, we get $5 \mid d$, $d=5$ 10 or $15$, $l-s=\frac{d}{5}=1,~2$ or $3$.

\underline{\textbf{When $d=5$, $l-s=1$.}}

Each $\langle H_{i}' \rangle $, $i\in[l]$, is a $[5,3,3]$ MDS code and $\langle H' \rangle $ is a $[5,4,2]$ MDS code. From the proof of Claim \ref{claim1}, we know that this situation  cannot happen.

\underline{\textbf{When $d=10$, $l-s=2$.}}

In this case, the optimal $(r, \delta)$-LRC has an equivalent parity-check matrix $H$ in the following form:
$$
H=\left(\begin{array}{ccccc|ccccc|c|ccccc}
1 & 0 & 0 & 1 & 1   & 0 & 0 & 0 & 0 & 0   & \ldots  & 0 & 0 & 0 & 0& 0\\
0 & 1 & 0 & 1 & w   & 0 & 0 & 0 & 0 & 0   & \ldots  & 0 & 0 & 0 & 0& 0\\
0 & 0 & 1 & 1 & w^2 & 0 & 0 & 0 & 0 & 0   & \ldots  & 0 & 0 & 0 & 0& 0\\
0 & 0 & 0 & 0 & 0   & 1 & 0 & 0 & 1 & 1   & \ldots  & 0 & 0 & 0 & 0& 0\\
0 & 0 & 0 & 0 & 0   & 0 & 1 & 0 & 1 & w   & \ldots  & 0 & 0 & 0 & 0& 0\\
0 & 0 & 0 & 0 & 0   & 0 & 0 & 1 & 1 & w^2 & \ldots  & 0 & 0 & 0 & 0& 0 \\
\vdots & \vdots & \vdots & \vdots & \vdots & \vdots & \vdots & \vdots & \vdots & \vdots & \ldots & \vdots & \vdots & \vdots & \vdots & \vdots\\
0 & 0 & 0 & 0 & 0   & 0 & 0 & 0 & 0 & 0 & \ldots  &1 & 0 & 0 & 1 & 1 \\
0 & 0 & 0 & 0 & 0   & 0 & 0 & 0 & 0 & 0 & \ldots  &0 & 1 & 0 & 1 & w \\
0 & 0 & 0 & 0 & 0   & 0 & 0 & 0 & 0 & 0 & \ldots  &0 & 0 & 0 & 1 & w^2 \\
\hline
\textbf{0} & \textbf{0} & \textbf{0} & \textbf{u}_{1} & \textbf{v}_{1} & \textbf{0} & \textbf{0} &\textbf{0} & \textbf{u}_{2}& \textbf{v}_{2}& \ldots & \textbf{0} &\textbf{0}& \textbf{0} & \textbf{u}_{l} &  \textbf{v}_{l}
\end{array}\right)
$$
where $\textbf{u}_{i},~ \textbf{v}_{i} \in \F_4^3$, $i \in [l]$.
Denote $\mathcal{V}_i = \mbox{Span}\{\textbf{u}_i, \textbf{v}_i\}$ the subspace of $\F_4^3$ spanned by $\textbf{u}_{i}$ and $ \textbf{v}_{i}$. Similarly, these vectors satisfy the following two properties:
\begin{enumerate}
\item [(1)] dim$(\mathcal{V}_i)$=2;
\item [(2)] for any $j \neq i \in [l]$, $\textbf{u}_i$, $\textbf{v}_i$, $\textbf{u}_i + \textbf{v}_i$, $w\textbf{u}_i + \textbf{v}_i$, $w^2\textbf{u}_i + \textbf{v}_i \notin \mathcal{V}_j$.

\end{enumerate}

 Consider the vectors $\textbf{u}_{i},~ \textbf{v}_{i}$, $i \in [l]$, as points in $PG(2,\F_4)$, then $\mathcal{V}_i$ is a line in $PG(2,\F_4)$ generated by $\textbf{u}_{i}$ and $ \textbf{v}_{i}$.
The second property requires that for any $i \neq j \in [l]$, $\mathcal{V}_i$ and $\mathcal{V}_j$ should have no intersection, which contradicts the fact that any two lines in $PG(2,\F_4)$ must intersect.
Therefore, there is no optimal $(2, 4)$-LRC code in this case.

%\begin{equation}\label{l-s=2_2}
%n = 20 ~\mbox{or} ~25,~ k= 4,~ d=10,~ r = 2, ~\delta =4.
%\end{equation}

%\begin{construction}\label{constructionl-s=2_2}
%The following parity-check matrix $H$ gives the optimal quaternary $(r, \delta)$-LRC code for the parameters in (\ref{l-s=2_2}) with $n=25$. By puncturing the rows and columns corresponding to a local group of the matrix $H$, we can obtain the parity-check matrix of the optimal quaternary $(r, \delta)$-LRC code with $n=20$.

%$$
%H=\left(\begin{array}{c}
%I_{5} \otimes \tilde{H} \\
%\hline
%0~~0 ~~ 0 ~~ 0 ~~~ 0 ~~~~~ 0~~0 ~~ 0 ~~ 1 ~~w^2 ~~~~ 0~~0 ~~ 0 ~~1 ~~w^2 ~~~~ 0~~0 ~~ 0 ~~1 ~~ w^2 ~~~~0~~0 ~~0~~~1  ~~w^2\\
%0~~0 ~~ 0 ~~ 0 ~~~ 1 ~~~~~ 0~~0 ~~ 0 ~~ 0 ~~~1 ~~~~~ 0~~0 ~~ 0 ~~0 ~~~1 ~~~~~ 0~~0 ~~ 0 ~~0 ~~~ 1 ~~~~~0~~0 ~~0~~~0  ~~~1~\\
%0~~0 ~~ 0 ~~ 1 ~~ w^2 ~~~~ 0~~0 ~~ 0 ~~ 0 ~~~0 ~~~~~ 0~~0 ~~ 0 ~~1 ~~w^2 ~~~~ 0~~0 ~~ 0 ~~w ~~  1 ~~~~~0~~0 ~~0~~w^2 ~~w~\\

%\end{array}\right),
%$$

%where $\tilde{H}=\left(\begin{array}{ccccc}
%1 & 0 & 0 & 1 & 1 \\
%0 & 1 & 0 & 1 & w \\
%0 & 0 & 1 & 1 & w^2
%\end{array}\right)$
%is a generator matrix of the quaternary $[5, 3, 3]$  MDS code.
%\end{construction}

\underline{\textbf{When $d=15$, $l-s=3$.}}

In this case, the optimal $(r, \delta)$-LRC has an equivalent parity-check matrix $H$ in the following form:
$$
H=\left(\begin{array}{ccccc|ccccc|ccccc|c|ccccc}
1 & 0 & 0 & 1& 1   & 0 & 0 & 0 & 0 & 0 & 0 & 0 & 0 & 0 & 0& \ldots  & 0 & 0 & 0 & 0& 0\\
0 & 1 & 0 & 1& w   & 0 & 0 & 0 & 0 & 0 & 0 & 0 & 0 & 0 & 0& \ldots  & 0 & 0 & 0 & 0& 0\\
0 & 0 & 1 & 1& w^2 & 0 & 0 & 0 & 0 & 0 & 0 & 0 & 0 & 0 & 0& \ldots  & 0 & 0 & 0 & 0& 0\\
0 & 0 & 0 & 0 & 0  & 1 & 0 & 0 & 1& 1   & 0 & 0 & 0 & 0 & 0 & \ldots  & 0 & 0 & 0 & 0& 0 \\
0 & 0 & 0 & 0 & 0  & 0 & 1 & 0 & 1& w   & 0 & 0 & 0 & 0 & 0 & \ldots  & 0 & 0 & 0 & 0& 0\\
0 & 0 & 0 & 0 & 0  & 0 & 0 & 1 & 1& w^2 & 0 & 0 & 0 & 0 & 0 & \ldots  & 0 & 0 & 0 & 0& 0\\
0 & 0 & 0 & 0 & 0  & 0 & 0 & 0 & 0 & 0  & 1 & 0 & 0 & 1 & 1 & \ldots  & 0 & 0 & 0 & 0& 0 \\
0 & 0 & 0 & 0 & 0  & 0 & 0 & 0 & 0 & 0  & 0 & 1 & 0 & 1 & w & \ldots  & 0 & 0 & 0 & 0& 0\\
0 & 0 & 0 & 0 & 0  & 0 & 0 & 0 & 0 & 0  & 0 & 0 & 1 & 1 & w^2 & \ldots  & 0 & 0 & 0 & 0& 0 \\
\vdots & \vdots & \vdots & \vdots &\vdots & \vdots & \vdots & \vdots & \vdots & \vdots & \vdots & \vdots & \vdots & \vdots & \vdots & \ldots & \vdots & \vdots & \vdots & \vdots & \vdots\\
0 & 0 & 0 & 0 & 0  & 0 & 0 & 0 & 0 & 0  & 0 & 0 & 0 & 0& 0 & \ldots& 1 & 0 & 0 & 1 & 1\\
0 & 0 & 0 & 0 & 0  & 0 & 0 & 0 & 0 & 0  & 0 & 0 & 0 & 0& 0 & \ldots& 0 & 1 & 0 & 1 & w\\
0 & 0 & 0 & 0 & 0  & 0 & 0 & 0 & 0 & 0  & 0 & 0 & 0 & 0& 0 & \ldots& 0 & 0 & 1 & 1 & w^2\\
\hline
\textbf{0} & \textbf{0} & \textbf{0}& \textbf{u}_{1} & \textbf{v}_{1} & \textbf{0}& \textbf{0} &\textbf{0} & \textbf{u}_{2}& \textbf{v}_{2}& \textbf{0}& \textbf{0} &\textbf{0} & \textbf{u}_{3}& \textbf{v}_{3}&\ldots & \textbf{0} &\textbf{0}& \textbf{0}& \textbf{u}_{l} &  \textbf{v}_{l}
\end{array}\right)
$$
where $\textbf{u}_{i},~ \textbf{v}_{i} \in \F_4^5$, $i \in [l]$.
Denote $\mathcal{V}_{ij} = \mbox{Span}\{\textbf{u}_i, \textbf{v}_i, \textbf{u}_j, \textbf{v}_j\}$, $i \neq j \in [l]$, the subspace of $\F_4^5$ spanned by $\textbf{u}_{i}$, $ \textbf{v}_{i}$, $\textbf{u}_{j}$ and $ \textbf{v}_{j}$. These vectors satisfy the following two properties:
\begin{enumerate}
\item [(1)] dim$(\mathcal{V}_{ij})$=4;
\item [(2)] for any $h \neq i \neq j$, $\textbf{u}_h$, $\textbf{v}_h$, $\textbf{u}_h + \textbf{v}_h$, $w\textbf{u}_h + \textbf{v}_h$, $w^2\textbf{u}_h + \textbf{v}_h \notin \mathcal{V}_{ij}$.

\end{enumerate}
Consider the vectors $\textbf{u}_{i},~ \textbf{v}_{i}$, $i \in [l]$, as points in the $PG(4,\F_4)$, let $L_i$ be the line generated by $\textbf{u}_{i}$ and $ \textbf{v}_{i}$.
The second property requires that for any $h \neq i \neq j$, $L_h$ and $\mathcal{V}_{ij}$ should have no intersection, which contradicts the fact that they must intersect in $PG(4,\F_4)$.
Therefore, there is no optimal $(2, 4)$-LRC code in this case.

%Similarly, we have $l=5$, $n=25$.

%In this case, the optimal $(r, \delta)$-LRC code has parameters
%\begin{equation}\label{l-s=3_2}
%n =25,~ k= 3,~ d=15,~ r = 2, ~\delta =4.
%\end{equation}

%\begin{construction}\label{constructionl-s=3_2}
%The following parity-check matrix $H$ gives the optimal quaternary $(r, \delta)$-LRC code for the parameters in (\ref{l-s=3_2}).

%$$
%H=\left(\begin{array}{c}
%I_{5} \otimes \tilde{H} \\
%\hline
%0~~0 ~~ 0 ~~ 0 ~~~ 0 ~~~~~ 0~~0 ~~ 0 ~~ 0 ~~~0 ~~~~~ 0~~0 ~~ 0 ~~1 ~~w^2 ~~~~ 0~~0 ~~ 0 ~~1 ~~ w^2 ~~~~0~~0 ~~0~~~1 ~~w^2\\
%0~~0 ~~ 0 ~~ 0 ~~~ 0 ~~~~~ 0~~0 ~~ 0 ~~ 0 ~~~1 ~~~~~ 0~~0 ~~ 0 ~~0 ~~~1 ~~~~~ 0~~0 ~~ 0 ~~1 ~~ w ~~~~~0~~0 ~~0~~~w ~~~0~\\
%0~~0 ~~ 0 ~~ 0 ~~~ 0 ~~~~~ 0~~0 ~~ 0 ~~ 1 ~~w^2 ~~~~ 0~~0 ~~ 0 ~~0 ~~~0 ~~~~~ 0~~0 ~~ 0 ~~1 ~~ w^2 ~~~~0~~0 ~~0~~~w ~~~1~\\
%0~~0 ~~ 0 ~~ 0 ~~~ 1 ~~~~~ 0~~0 ~~ 0 ~~ 0 ~~~0 ~~~~~ 0~~0 ~~ 0 ~~0 ~~~1 ~~~~~ 0~~0 ~~ 0 ~~1 ~~~ 1 ~~~~~0~~0 ~~0~~w^2 ~~1~\\
%0~~0 ~~ 0 ~~ 1 ~~ w^2 ~~~~ 0~~0 ~~ 0 ~~ 0 ~~~0 ~~~~~ 0~~0 ~~ 0 ~~1 ~~w^2 ~~~~ 0~~0 ~~ 0 ~~1 ~~ w^2 ~~~~0~~0 ~~0~~w^2 ~~w~\\
%\end{array}\right),
%$$

%where $\tilde{H}=\left(\begin{array}{ccccc}
%1 & 0 & 0 & 1 & 1 \\
%0 & 1 & 0 & 1 & w \\
%0 & 0 & 1 & 1 & w^2
%\end{array}\right)$
%is a generator matrix of the quaternary $[5, 3, 3]$  MDS code.
%\end{construction}

\subsubsection{Case $r=3,~\delta=3$.}\label{r3delta3}

We have $k=3s+1,$ the codes $\langle H_{i}' \rangle $, $i\in[l]$, are $[5,2,4]$ MDS codes, the codes $\mathcal C|_{S_i}$, are $[5,3,3]$ MDS codes.
%The codes $\langle H_{i}' \rangle $, $i\in [l]$, are $[5,2,4]$ MDS codes.

\underline{\textbf{If $s=1$.}} $n=5+d$, $k = 4$ and $5 \leq d\leq 12$.

The optimal LRC code $\mathcal C$ is a $[5+d,4,d]$ linear code with locality $(3, 3)$. We can think of the columns of generator matrix $G$ as points in the projective plane $PG(3,\F_4)$.
Fix a line, five different faces contain this line, we know that these five faces cover all $85$ points in $PG(3,\F_4)$.

When $d=12$, we want to find $17$ points from $PG(3,\F_4)$ to form matrix $G$. To ensure a locality $(r=3, \delta=3)$  for the code $\mathcal C$ is $(r=3, \delta=3)$, these points need to be such that each point is on a face group consisting of $r+\delta-1=5$ points, and these 5 points satisfy that any three points are not collinear. To ensure a minimum distance for $\mathcal C$ equals $d=12$, these points need to be such that any $n-d+1=6$ points are not on the same face. We get the following matrix $G$ that satisfies the conditions by selecting appropriate points.

$$
G=\left(\begin{array}{llllllllllllllllllllllllll}
0&0 && 0&0&0 && 1&1&1 && 1&1&1 && 1&1&1 && 1&1&1\\
0&0&& 1&1&1&& 0&0&0& &1&1&1& &w&w&w& &w^2&w^2&w^2\\
0&1&&1&w&w^2&&0&w&w^2&&0&w&w^2&&w&0&w^2&&w&w^2&0\\
1&1&&w&w^2&w^2&&0&1&1&&0&0&w&&1&1&w^2&&0&0&1

\end{array}\right).
$$

% If the supports of any two local groups intersect on exactly two coordinates, and there is no global group in $H$. Since $\delta-1=2 |(n-k)$, $n-k=d+1$, $5\leq d \leq q\delta=12$, we have $d=5$, 7, 9, or 11.

In this case, the optimal $(r, \delta)$-LRC code has parameters
%$$n = d+5 ,~ k= 4,~ d=5,~7,~\mbox{or}~9,~ r = 2, ~\delta =3.~~~~~(23)$$
\begin{equation}\label{18}
n = d+5 ,~ k= 4,~ 5\leq d\leq 12,~ r = 3, ~\delta =3.
\end{equation}

%\textbf{Construction 18.}

\begin{construction}\label{construction18}
The above generator matrix $G$ gives the optimal quaternary $(r, \delta)$-LRC code $\mathcal C$ for the parameters in (\ref{18}) with $d=12$. By puncturing $\mathcal C$ on the coordinates set $S$, with $S=\{17\}$, $\{16,17\}$, $\{15,16,17\}$, ... $\{11,12,...,17\}$, we can get the optimal quaternary $(r, \delta)$-LRC codes with $d=11,~10,~9,...,~5$, respectively.

\end{construction}

\underline{\textbf{If $s\geq 2$.}} $n=5s+d$, $k=3s+1$ and $5 \leq d \leq 12$.
%Since $n-n'=s(r+\delta-1),$ the supports of $s$ removed local groups are pairwise disjoint. These $s$ groups can be chosen arbitrarily, we have
The supports of all $l$ local groups are disjoint and $n=l(r+\delta-1)=5l$.
From $5s+d=5l$, we get $5 \mid d$, $d=5$ or 10, accordingly, $l-s=\frac{d}{5}=1$ or $2$.

\underline{\textbf{When $d=5$, $l-s=1$.}}

The optimal $(r, \delta)$-LRC code has parameters
%$$n = (k+1)\delta ,~ k \geq 2,~ d=2\delta,~ r = 1, ~\delta >2.~~~~~(18)$$
\begin{equation}\label{l-s=1_3}
n = 5l,~ k =3l-2,~ d=5,~ r = 3, ~\delta =3,~(l\geq 3).
\end{equation}

%\textbf{Construction 13.}

\begin{construction}\label{constructionl-s=1_3}
The following parity-check matrix $H$ gives the optimal quaternary $(r, \delta)$-LRC code for the parameters in (\ref{l-s=1_3}).
$$
H=\left(\begin{array}{c}
 I_{l} \otimes \tilde{H}\\
 \hline
 \textbf{1}_{l} \otimes \left(\begin{array}{lllll}
 0&0&1&0&w^2\\
0 &0 & 0 & 1&w^2
\end{array}\right)
\end{array}\right),
$$

where $\tilde{H}= \left(\begin{array}{lllll}
1&0&1&1&1\\
0 &1 & 1 & w&w^2
\end{array}\right)$ is a generator matrix of the quaternary $[5, 2, 4]$ MDS code.
\end{construction}

\underline{\textbf{When $d=10$, $l-s=2$.}}

In this case, the optimal $(r, \delta)$-LRC has an equivalent parity-check matrix $H$ in the following form:
$$
H=\left(\begin{array}{ccccc|ccccc|c|ccccc}
1 & 0 & 1 & 1 & 1   & 0 & 0 & 0 & 0 & 0   & \ldots  & 0 & 0 & 0 & 0& 0\\
0 & 1 & 1 & w & w^2 & 0 & 0 & 0 & 0 & 0   & \ldots  & 0 & 0 & 0 & 0& 0\\
0 & 0 & 0 & 0 & 0   & 1 & 0 & 1 & 1 & 1   & \ldots  & 0 & 0 & 0 & 0& 0\\
0 & 0 & 0 & 0 & 0   & 0 & 1 & 1 & w & w^2 & \ldots  & 0 & 0 & 0 & 0& 0 \\
\vdots & \vdots & \vdots & \vdots & \vdots & \vdots & \vdots & \vdots & \vdots & \vdots & \ldots & \vdots & \vdots & \vdots & \vdots & \vdots\\
0 & 0 & 0 & 0 & 0   & 0 & 0 & 0 & 0 & 0 & \ldots  &1 & 0 & 1 & 1 & 1 \\
0 & 0 & 0 & 0 & 0   & 0 & 0 & 0 & 0 & 0 & \ldots  &0 & 1 & 1 & w & w^2 \\
\hline
\textbf{0} & \textbf{0} & \textbf{u}_{1} & \textbf{v}_{1} & \textbf{z}_1 &\textbf{0} & \textbf{0} & \textbf{u}_{2}& \textbf{v}_{2}&\textbf{z}_2 & \ldots & \textbf{0} &\textbf{0}& \textbf{u}_{l} &  \textbf{v}_{l}& \textbf{z}_l
\end{array}\right)
$$
where $\textbf{u}_{i},~ \textbf{v}_{i},~ \textbf{z}_i \in \F_4^5$, $i \in [l]$.

Denote $\mathcal{V}_i = \mbox{Span}\{\textbf{u}_i, \textbf{v}_i,\textbf{z}_i\}$ the subspace of $\F_4^5$ spanned by $\textbf{u}_{i}$, $ \textbf{v}_{i}$ and $\textbf{z}_i$, then dim$(\mathcal{V}_i)=3$.
Consider $\textbf{u}_{i},~ \textbf{v}_{i},~\textbf{z}_i$ as points in $PG(4,\F_4)$, then $\mathcal{V}_i$ contains $\frac{4^3-1}{4-1}=21$ different non-zero points in $PG(4,\F_4)$.

The dual of $[5,2,4]$ MDS code over $\F_4$ is $[5,3,3]$ MDS code $ C_1$ with weight distribution
$$A_0 = 1,~ A_1 = A_2 = 0,~ A_3 = 30,~ A_4 = 15, ~A_5 = 18.$$
Let $\mathcal C$ be the optimal $(r, \delta)$-LRC generated by above parity-check matrix $H$, then each codeword in code $\mathcal C$ is formed by splicing $l$ codewords in $ C_1$.
Let $\textbf{c}$ be a codeword of $C_1$, since for any $\xi \in \F_4^*$, vectors $(\textbf{0}~\textbf{0}~ \textbf{u}_{i}~\textbf{v}_{i} ~\textbf{z}_i) \cdot  (\xi \textbf{c})^{T}$ are the same point in $PG(4,\F_4)$, for convenience, we treat the codewords $\xi \textbf{c}$ in code $C_1$ as the same vector, and denote the resulting new set as $C_2$, then the weight distribution of $C_2$ is
$$A_0 = 1,~ A_1 = A_2 = 0,~ A_3 = 10,~ A_4 = 5, ~A_5 = 6.$$

It's easy to verify that there is a one-to-one correspondence between sets $\mathcal{V}_i$ and $C_2$, satisfying
$$\mathcal{V}_i=\{(\textbf{0}~\textbf{0}~ \textbf{u}_{i}~\textbf{v}_{i} ~\textbf{z}_i) \cdot \textbf{c}^{T}: \textbf{c} \in C_2\}.$$
%Since the minimum distance of code $\mathcal C$ is $d=10$, there is no codeword of weight 3+3+3 in it.
Let $$\mathcal{V}_i(3)=\{(\textbf{0}~\textbf{0}~ \textbf{u}_{i}~\textbf{v}_{i} ~\textbf{z}_i) \cdot \textbf{c}^{T}: \textbf{c} \in C_2, wt(\textbf{c})=3\}.$$
%randomly select 3 blocks $i_1$, $i_2$, $i_3$ from these $l$ blocks, and select a point from each of the sets $\mathcal{V}_{i_1}(3)$, $\mathcal{V}_{i_2}(3)$, $\mathcal{V}_{i_3}(3)$, respectively, then these three points are required to be non-collinear in $PG(4,\F_4)$.
For any $i_1,~i_2,~i_3 \in [l]$, fix a point $\textbf{p}$ in the set $\mathcal{V}_{i_1}(3)$. Then for any $\textbf{a}_1,~\textbf{a}_2 \in \mathcal{V}_{i_2}(3)$, these lines $L_{\textbf{p},\textbf{a}_1}$ and $L_{\textbf{p},\textbf{a}_2}$ are different, otherwise, there is a codeword with weight $\leq (3+5)$ in code $\mathcal C$. And for any $\textbf{a} \in \mathcal{V}_{i_2}(3), \textbf{b} \in \mathcal{V}_{i_3}(3)$, these lines $L_{\textbf{p},\textbf{a}}$ and $L_{\textbf{p},\textbf{b}}$ are different, otherwise, there is a codeword with weight $(3+3+3)$ in code $\mathcal C$.

Therefore, fix a point $\textbf{u}$ in the first set $\mathcal{V}_{1}(3)$, there are $10(l-1)$ different lines $L_{\textbf{u},\textbf{v}}$, $\textbf{v} \in \cup_{i=2}^{l} \mathcal{V}_{i}(3)$. In $PG(4,\F_4)$, the total number of points contained in these lines needs to satisfy $1+4\times 10(l-1)\leq 341-9$, that is $l \leq 9$. By $l=s+2$, $s\geq 2$, we get $4 \leq l \leq 9$.

In this case, the optimal $(r, \delta)$-LRC code has parameters

\begin{equation}\label{33d=10}
n = 5l ,~ k= 3l-5,~ d = 10,~ r = 3, ~\delta =3, (4 \leq l \leq 9).
\end{equation}

 However, we can only get the code length range in this case. It's quiet difficult to determine whether such an optimal code exists and its exact structure.

\subsubsection{Case $r=3,~\delta=4$.}
%The code $\langle H_{i}' \rangle $ are $[6,3,4]$ MDS code, $i\in [l]$.

We have $k=3s+1,$ the codes $\langle H_{i}' \rangle $, $i\in[l]$, are $[6,3,4]$ MDS codes, the codes $\mathcal C|_{S_i}$, are $[6,3,3]$ MDS codes.
%The codes $\langle H_{i}' \rangle $, $i\in [l]$, are $[5,2,4]$ MDS codes.

\underline{\textbf{If $s=1$.}} $n=6+d$, $k = 4$ and $5 \leq d\leq 16$.

From \cite{codetable}, we know that when $d \geq 13$, there is no quaternary linear code with parameters $[6+d,4,d]$.

When $d=12$, the optimal LRC code $\mathcal C$ is a $[18,4,12]$ linear code with locality $(3, 4)$. We can think of the columns of generator matrix $G$ as points in the projective plane $PG(3,\F_4)$.
we want to find $18$ points from $PG(3,\F_4)$ to form matrix $G$. To ensure the locality  $(r=3, \delta=4)$  for the code $\mathcal C$, these points need to be such that each point is on a face group consisting of $r+\delta-1=6$ points, and these 6 points satisfy that any three points are not collinear. To ensure minimum distance for $\mathcal C$ equals $d=12$, these points need to be such that any $n-d+1=7$ points are not on the same face. We get the following matrix $G$ that satisfies the conditions by selecting appropriate points.

$$
G=\left(\begin{array}{llllllllllllllllllll}
0& 0 & 0& 0&  0 & 0 & & 1&   w &  1 &  w &  1 &  w & &w^2& 1 & w^2& 1 & w^2&  1\\
1& 0 & 0& 1&  1 & 1 & & 1&   0 &  0 &  1 &  1 &  1 & &1&   0 &  0 & 1 &  1 &  1 \\
0& 1 & 0& 1&w^2 & w & & 0&   1 &  0  & 1 &w^2 &  w & &0&   1 &  0 & 1 &w^2 &  w \\
0& 0 & 1& 1&  w &w^2& & 0&   0 &  1  & 1&   w &w^2 & &0&   0 &  1 & 1 & w  &w^2

\end{array}\right).
$$

\begin{Claim}\label{claim2}
In this case, the minimum distance $d$ cannot be equals to $5$, that is, there is no the optimal $(3, 4)$-LRC code with parameters $[11,4,5]$ over $\F_4$.
\end{Claim}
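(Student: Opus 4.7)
The plan is to derive a contradiction by combining the parameter constraints forced by $(3,4)$-locality with the known weight distribution of the Hexacode. Suppose for contradiction that such an optimal quaternary $(3,4)$-LRC code $\mathcal C$ with parameters $[11,4,5]$ exists. By Proposition \ref{l} we have $2=\lceil 4/3\rceil\le l\le \lfloor 7/3\rfloor=2$, so there are exactly $l=2$ local groups, and by Corollary \ref{Hi} each $H_i$ has exactly $\delta-1=3$ rows.

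Next I would use Lemma \ref{H'}: removing one local group $H_i$ leaves a full-rank matrix $H'$ with $m'=4$ rows and $n'=11-|S_i|$ columns, defining an MDS code with parameters $[n',\,n'-4,\,5]$. Scanning the classification of quaternary MDS codes in Proposition \ref{MDS4}, the only admissible parameter set of the form $[n',n'-4,5]$ is $[5,1,5]$; the alternatives $[6,2,5]$, $[7,3,5]$, $[8,4,5]$, etc.\ are not in the list. Hence $|S_1|=|S_2|=6$, and since $S_1\cup S_2=[11]$, we get $|S_1\cap S_2|=1$. After relabelling, take $S_1=\{1,\dots,6\}$ and $S_2=\{6,7,\dots,11\}$. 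By Corollary \ref{HMDS}, each punctured code $\mathcal C|_{S_i}$ is a $[6,3,4]$ quaternary MDS code, i.e.\ (equivalent to) the Hexacode. From the standard closed-form weight enumerator of an MDS code (as used in the proof of Claim \ref{claim1}), the Hexacode has weight distribution $A_0=1,\ A_4=45,\ A_5=0,\ A_6=18$; crucially $A_5=0$.

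Now consider the projection $\pi_1:\mathcal C\to\mathcal C|_{S_1}$. Its kernel is $\{\vc\in\mathcal C:\support(\vc)\subseteq [11]\setminus S_1=\{7,\dots,11\}\}$, and has dimension $\dim\mathcal C-\dim\mathcal C|_{S_1}=4-3=1$. Any nonzero $\vc$ in this kernel satisfies $\wt(\vc)\ge d=5$ and $\support(\vc)\subseteq\{7,\dots,11\}$, so in fact $\wt(\vc)=5$ with full support on $\{7,\dots,11\}$. But $\{7,\dots,11\}\subseteq S_2$, and restricting $\vc$ to $S_2=\{6,7,\dots,11\}$ produces a codeword of $\mathcal C|_{S_2}$ of Hamming weight exactly $5$, contradicting $A_5=0$ for the Hexacode. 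This completes the proof.

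The whole argument is short; the main (minor) obstacle is the careful bookkeeping in step one, namely ruling out every alternative $|S_1|$ by checking Proposition \ref{MDS4}, and making sure the symmetry $|S_1|=|S_2|=6$ together with $S_1\cup S_2=[11]$ forces $[11]\setminus S_1\subseteq S_2$, which is exactly what lets us transplant the 1-dimensional kernel into $\mathcal C|_{S_2}$ and then invoke $A_5=0$. The structural fact that $[6,3,4]$ MDS codes over $\mathbb F_4$ have no weight-$5$ codeword is the real force of the argument, and is a clean analogue of the subcode-containment obstruction used in Claim \ref{claim1}.
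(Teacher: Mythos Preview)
Your proof is correct and the underlying obstruction is the same as the paper's, but the execution differs in a way worth noting. The paper stays on the parity-check side: after removing $H_1$, it observes that $\langle H'\rangle$ is a $[5,4,2]$ MDS code containing $\langle \bar{H_2'}\rangle$, the $[5,3,3]$ MDS code obtained by deleting the overlap column from $H_2'$; it then invokes the proof of Claim~\ref{claim1} verbatim (dualising to get $[5,1,5]\subseteq[5,2,4]$ and using $A_5=0$ for the $[5,2,4]$ code). You instead work on the primal side: you locate an explicit weight-$5$ codeword of $\mathcal C$ in $\ker\pi_1$ and push it into $\mathcal C|_{S_2}$, contradicting $A_5=0$ for the Hexacode $[6,3,4]$. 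Unwinding both arguments, the weight-$5$ generator of the paper's $[5,1,5]$ code is, after padding with zeros on $S_1$, exactly your kernel element, so the two contradictions are the same fact viewed dually. Your route is slightly more self-contained (no reduction to Claim~\ref{claim1}, no column-deletion step); the paper's route reuses machinery already in place.

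One small tightening: you write $\dim\ker\pi_1=4-3=1$ using $\dim\mathcal C|_{S_1}=3$ from Corollary~\ref{HMDS}. Strictly, what is immediate is $\dim\mathcal C|_{S_1}\le 3$ (Singleton, since $d(\mathcal C|_{S_1})\ge\delta=4$), hence $\dim\ker\pi_1\ge 1$, which is all you need. Likewise, even if $\mathcal C|_{S_2}$ were a proper subcode of the Hexacode, it would still inherit $A_5=0$, so the contradiction stands. These are cosmetic; the argument is sound.
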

\begin{proof}
When $d=5$, the optimal $(3, 4)$-LRC code $\mathcal C$ is a $[11,4,5]$ linear code. From Proposition \ref{l}, the number of local groups in parity-check matrix $H$ is $l=2$. Since each local group $H_i$, $i\in[2]$, has support size $\#S_i=6$, we have the supports of these two local groups intersect at one coordinate.
After removing $\lceil\frac{k}{r}\rceil-1=1$ local group from $H$, without loss of generality, suppose $H_{1}'$ is removed, then matrix $H'$ contains $\bar{H_2'}$ as its submatrix, where $\bar{H_2'}$ a is a submatrix obtained by deleting one column from matrix $H_2'$, that is, a $[5,4,2]$ MDS code $C_1$ contains a $[5,3,3]$ MDS code $C_2$ as its subcode. By the proof of Claim \ref{claim1}, we know this is impossible.

\end{proof}

In this case, the optimal $(r, \delta)$-LRC code has parameters
%$$n = d+5 ,~ k= 4,~ d=5,~7,~\mbox{or}~9,~ r = 2, ~\delta =3.~~~~~(23)$$
\begin{equation}\label{19}
n = d+6 ,~ k= 4,~ 6\leq d\leq 12,~ r = 3, ~\delta =4.
\end{equation}

\begin{construction}\label{construction19}
The above generator matrix $G$ gives the optimal quaternary $(r, \delta)$-LRC code $\mathcal C$ for the parameters in (\ref{18}) with $d=12$. By puncturing $\mathcal C$ on the coordinates set $S$, with $S=\{18\}$, $\{13,18\}$, $\{13,15,18\}$, $\{13,15,17,18\}$, $\{13,14,16,17,18\}$, $\{13,14,15,16,17,18\}$, we can get the optimal quaternary $(r, \delta)$-LRC codes with $d=11,~10,...,6$, respectively.

\end{construction}

\underline{\textbf{If $s\geq 2$.}} $n=6s+d$, $k=3s+1$ and $5 \leq d \leq 16$.
%Since $n-n'=s(r+\delta-1),$ the supports of $s$ removed local groups are pairwise disjoint. These $s$ groups can be chosen arbitrarily, we have
The supports of all $l$ local groups are disjoint and $n=l(r+\delta-1)=6l$.
From $6s+d=6l$, we get $6 \mid d$, $d=6$ or 12, accordingly, $l-s=\frac{d}{6}=1$ or $2$.

\underline{\textbf{When $d=6$, $l-s=1$.}}

The optimal $(r, \delta)$-LRC code has parameters

\begin{equation}\label{l-s=1_4}
n = 6l,~ k =3l-2,~ d=6,~ r = 3, ~\delta =4,~(l\geq 3).
\end{equation}

\begin{construction}\label{constructionl-s=1_4}
The following parity-check matrix $H$ gives the optimal quaternary $(r, \delta)$-LRC code for the parameters in (\ref{l-s=1_4}).
$$
H=\left(\begin{array}{c}
 I_{l} \otimes \tilde{H}\\
 \hline
 \textbf{1}_{l} \otimes \left(\begin{array}{llllll}
0&0&0&1&0&w^2\\
0&0&0&0&1&w^2
\end{array}\right)
\end{array}\right),
$$

where $\tilde{H}= \left(\begin{array}{llllll}
1&0&0&1&1&1\\
0&1&0&1& w&w^2\\
0&0&1&1& w^2&w
\end{array}\right)$ is the generator matrix of the quaternary $[6,3, 4]$ MDS code.
\end{construction}

\underline{\textbf{When $d=12$, $l-s=2$.}}

In this case, the optimal $(r, \delta)$-LRC has an equivalent parity-check matrix $H$ in the following form:
$$
H=\left(\begin{array}{cccccc|cccccc|c|cccccc}
1&0&0&1&1&1    & 0 & 0 & 0 & 0 & 0 &0  & \ldots  & 0 & 0 & 0 & 0& 0 &0\\
0&1&0&1& w&w^2 & 0 & 0 & 0 & 0 & 0 &0  & \ldots  & 0 & 0 & 0 & 0& 0 &0\\
0&0&1&1& w^2&w & 0 & 0 & 0 & 0 & 0 &0  & \ldots  & 0 & 0 & 0 & 0& 0 &0\\
0 & 0 & 0 & 0 & 0& 0 &  1&0&0&1&1&1     & \ldots  & 0 & 0 & 0 & 0& 0 &0\\
0 & 0 & 0 & 0 & 0& 0 &  0&1&0&1& w&w^2  & \ldots  & 0 & 0 & 0 & 0& 0 &0\\
0 & 0 & 0 & 0 & 0& 0  & 0&0&1&1& w^2&w  & \ldots  & 0 & 0 & 0 & 0& 0 &0\\
\vdots &\vdots &\vdots & \vdots & \vdots & \vdots & \vdots & \vdots & \vdots & \vdots & \vdots & \vdots & \ldots & \vdots & \vdots & \vdots &\vdots & \vdots & \vdots\\
0 & 0 & 0 & 0 & 0&0   & 0 & 0 & 0 & 0 & 0 &0& \ldots  &1&0&0&1&1&1 \\
0 & 0 & 0 & 0 & 0&0   & 0 & 0 & 0 & 0 & 0 &0& \ldots  &0&1&0&1& w&w^2 \\
0 & 0 & 0 & 0 & 0&0   & 0 & 0 & 0 & 0 & 0 &0& \ldots  &0&0&1&1& w^2&w \\
\hline
\textbf{0} & \textbf{0} &\textbf{0} & \textbf{u}_{1} & \textbf{v}_{1} & \textbf{z}_1 &\textbf{0} &\textbf{0} & \textbf{0} & \textbf{u}_{2}& \textbf{v}_{2}&\textbf{z}_2 & \ldots & \textbf{0} &\textbf{0}& \textbf{0} &\textbf{u}_{l} &  \textbf{v}_{l}& \textbf{z}_l
\end{array}\right)
$$
where $\textbf{u}_{i},~ \textbf{v}_{i},~ \textbf{z}_i \in \F_4^5$, $i \in [l]$.
%Similar to the situation when $d=10$ in case \ref{r3delta3}, we can get that $l$ can only be 4.

Denote $\mathcal{V}_i = \mbox{Span}\{\textbf{u}_i, \textbf{v}_i,\textbf{z}_i\}$ the subspace of $\F_4^5$ spanned by $\textbf{u}_{i}$, $ \textbf{v}_{i}$ and $\textbf{z}_i$. Similarly, these vectors satisfy the following two properties:
\begin{enumerate}
\item [(1)] dim$(\mathcal{V}_i)$=3;
\item [(2)] for any $j \neq i \in [l]$, $a\textbf{u}_i + b\textbf{v}_i + c\textbf{z}_i \notin \mathcal{V}_j$, where $(a,b,c)\in PG(2,\F_4)$, $(a,b,c)=(1,w,w^2),~(1,w^2,w),~(1,1,1)$ or $wt((a,b,c))\leq 2$.
%$\textbf{u}_i$, $\textbf{v}_i$, $\textbf{z}_i$, $\textbf{u}_i + \textbf{v}_i$, $\textbf{u}_i + \textbf{z}_i$, $\textbf{v}_i + \textbf{z}_i$, $w\textbf{u}_i + \textbf{v}_i$,

\end{enumerate}

The vectors $\textbf{u}_{i},~ \textbf{v}_{i},~\textbf{z}_i$, $i \in [l]$ are viewed as points in $PG(4,\F_4)$.
By the first property, for any $i \neq j \in [l]$, we have that the intersection of $\mathcal{V}_i$ and $\mathcal{V}_j$ is a subspace with dimension $\geq 1$ in $PG(4,\F_4)$.
By the second property, we know that the dimension of their intersection cannot be $\geq 2$. That is, $\mathcal{V}_i$ and $\mathcal{V}_j$ intersect at exactly one point in $PG(4,\F_4)$.% which is denoted as $p_{ij}$.

\begin{Claim}\label{claim3}
If all 3-dimensional subspace $\mathcal{V}_i$, $i \in [l]$, intersect at the same point, then $l\leq \frac{341-1}{21-1}=17$.
\end{Claim}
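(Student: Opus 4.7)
The plan is to count points in $PG(4,\mathbb{F}_4)$. First, I would recall from Lemma \ref{i-space} that the total number of points in $PG(4,\mathbb{F}_4)$ is $\frac{4^5-1}{4-1}=341$, and that each $\mathcal{V}_i$, being a $3$-dimensional subspace of $\mathbb{F}_4^5$ (i.e.\ a plane in the projective space), contains exactly $\frac{4^3-1}{4-1}=21$ projective points.

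Next, I would use the structural fact established immediately before the claim: for any $i\neq j$ in $[l]$, the intersection $\mathcal{V}_i\cap\mathcal{V}_j$ is exactly a single point of $PG(4,\mathbb{F}_4)$. Combined with the hypothesis that all the $\mathcal{V}_i$ pass through a common point $\mathbf{p}$, this forces $\mathcal{V}_i\cap\mathcal{V}_j=\{\mathbf{p}\}$ for every pair $i\neq j$. Consequently the sets $\mathcal{V}_i\setminus\{\mathbf{p}\}$ are pairwise disjoint, each of cardinality $20$.

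Finally, the union $\bigcup_{i=1}^l \mathcal{V}_i$ sits inside $PG(4,\mathbb{F}_4)$, so by the disjointness above one obtains
\begin{equation*}
1+20l=\#\Bigl(\{\mathbf{p}\}\cup\bigsqcup_{i=1}^l(\mathcal{V}_i\setminus\{\mathbf{p}\})\Bigr)=\#\bigcup_{i=1}^l\mathcal{V}_i\leq 341,
\end{equation*}
which yields $l\leq 17$, as asserted. There is no genuine obstacle here: the only subtlety is making sure that the pairwise intersection property from the preceding paragraph (not merely the common-point hypothesis) is invoked to guarantee disjointness of the sets $\mathcal{V}_i\setminus\{\mathbf{p}\}$; once that is in hand the bound is a direct counting argument.
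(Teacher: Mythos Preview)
Your argument is correct and is precisely the counting that the paper encodes in the fraction $\frac{341-1}{21-1}$ appearing in the statement itself; the paper does not spell out a separate proof for this claim, so your unpacking of that formula (using the pairwise one-point intersection established just before the claim together with the common-point hypothesis) is exactly what is intended.
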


From the second property, a maximum of six points in a 3-dimensional subspace $\mathcal{V}_i$, $i \in [l]$, can be simultaneously contained in other 3-dimensional subspaces, which are $a\textbf{u}_i + b\textbf{v}_i + c\textbf{z}_i$ with $(a,b,c)=(1,1,w^2), ~(1,w^2,1), ~(1,w^2,w^2),$ $(1,1,w), ~(1,w,1)$ and $(1,w,w)$.

Therefore, in $PG(4,\F_4)$, $\mathcal{V}_1$ contains $21$ different non-zero points, $\mathcal{V}_1 \cup \mathcal{V}_2$ contains $21+20$ different non-zero points, $\mathcal{V}_1 \cup \mathcal{V}_2 \cup \mathcal{V}_3$ contains at least $21+20+19$ different non-zero points. By analogy, $\cup_{i=1}^l \mathcal{V}_i$ contains at least
$$21+20+19+18+17+16+15(l-6) = 15l+21 $$
 different points. This number should satisfy $15l+21 \leq 341$, that is $l \leq 21$.

\begin{Claim}\label{claim4}
In this case, $l$ cannot be equals to $21$.
\end{Claim}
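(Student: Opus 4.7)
I plan to derive a contradiction from $l = 21$ by refining the counting bound $|\cup_i \mathcal{V}_i| \geq 15l + 21$ into a two-sided estimate on the set of shareable points. The first step is to separate, within each plane $\mathcal{V}_i$, the $15$ \emph{exclusive} positions $a\textbf{u}_i + b\textbf{v}_i + c\textbf{z}_i$ whose coefficient $(a,b,c) \in \PG(2,\F_4)$ is either one of the $12$ points of weight at most $2$ or one of $(1,1,1),(1,w,w^2),(1,w^2,w)$, from the remaining $6$ potentially shareable positions. Property~(2) of the preceding analysis forbids any exclusive point of $\mathcal{V}_i$ from lying in some other $\mathcal{V}_j$, so the $15 \cdot 21 = 315$ exclusive points over all $21$ planes are pairwise distinct. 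Letting $T$ denote the set of shareable points (those belonging to at least two of the planes) and recalling $|\PG(4,\F_4)| = (4^5-1)/3 = 341$, this yields the upper bound
\[
|T| \;\le\; 341 - 315 \;=\; 26.
\]

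The second step is to obtain a matching lower bound on $|T|$ through double counting. For $P \in T$ let $k_P$ denote the number of planes $\mathcal{V}_i$ containing $P$. Since each plane contributes exactly $6$ shareable positions and each pair of planes meets in exactly one shareable point,
\[
\sum_{P \in T} k_P \;=\; 6 \cdot 21 \;=\; 126, \qquad \sum_{P \in T} \binom{k_P}{2} \;=\; \binom{21}{2} \;=\; 210,
\]
so $\sum_{P \in T} k_P^2 = 2 \cdot 210 + 126 = 546$. The Cauchy--Schwarz inequality $\bigl(\sum_P k_P\bigr)^2 \le |T| \sum_P k_P^2$ then yields $|T| \ge 126^2/546 = 15876/546 > 29$, hence $|T| \ge 30$. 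This contradicts $|T| \le 26$, completing the argument.

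I expect the only sensitive step to be a careful verification that each of the $15$ forbidden coefficient vectors singled out in property~(2) genuinely forbids coincidence with any position in another plane, so that the $315$ exclusive points are truly all distinct and disjoint from $T$. This is immediate once one unpacks the condition $a\textbf{u}_h + b\textbf{v}_h + c\textbf{z}_h \notin \mathcal{V}_{ij}$, after which the remaining contradiction reduces to the elementary numerical computation above.
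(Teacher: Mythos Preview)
Your argument has one small but genuine gap: the identity $\sum_{P\in T} k_P = 6\cdot 21$ tacitly assumes that each of the six ``potentially shareable'' positions in every plane actually lies in a second plane and hence in $T$. Nothing in property~(2) forces this; a shareable position of $\mathcal V_i$ could belong to $\mathcal V_i$ alone. The fix is immediate. Replace $T$ by the set $S$ of \emph{all} distinct shareable positions (degree~$\ge 1$). Then $\sum_{P\in S}k_P=126$ and $\sum_{P\in S}\binom{k_P}{2}=\binom{21}{2}=210$ hold exactly, Cauchy--Schwarz gives $|S|\ge 126^2/546>29$, and since the $315$ exclusive points are disjoint from $S$ as well, $|S|\le 341-315=26$ --- the same contradiction. (Equivalently, keep $T$, set $a=|S\setminus T|$, note $\sum_{P\in T}k_P=126-a$ and $|T|\le 26-a$; then $(126-a)^2\le(26-a)(546-a)$ rearranges to $320a\le-1680$.)

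With that correction your proof is complete and is tidier than the paper's. The paper reaches the same upper bound $|\cup_i\mathcal V_i(6)|\le 26$ but then proceeds by a case analysis on the maximum degree $\lambda$ of a shared point: first $5\lambda+1\le 26$ forces $\lambda\le 5$; the case $\lambda=5$ is killed by a pigeonhole argument on a sixth plane, and $\lambda\le 4$ by the crude estimate $126/4>26$. Your single Cauchy--Schwarz inequality subsumes this case split in one stroke.
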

\begin{proof}
The dual of $[6,3,4]$ MDS code over $\F_4$ is still the $[6,3,4]$ MDS code $ C_1$ with weight distribution
$$A_0 = 1,~ A_1 = A_2 = A_3 = 0,~ A_4 = 45, ~A_5 = 0, ~A_6 = 18.$$
Let $\mathcal C$ be the optimal $(r, \delta)$-LRC generated by above parity-check matrix $H$, then each codeword in code $\mathcal C$ is formed by splicing $l$ codewords in $ C_1$.
For convenience, we treat the codewords $\xi \textbf{c}$, $\xi \in \F_4^*$, in code $C_1$ as the same vector, and denote the resulting new set as $C_2$, then the weight distribution of $C_2$ is
$$A_0 = 1,~ A_1 = A_2 = A_3 = 0,~ A_4 = 15, ~A_5 = 0, ~A_6 = 6.$$

Similarly, there is a one-to-one correspondence between sets $\mathcal{V}_i$ and $C_2$, satisfying
$$\mathcal{V}_i=\{(\textbf{0}~\textbf{0}~\textbf{0}~ \textbf{u}_{i}~\textbf{v}_{i} ~\textbf{z}_i) \cdot \textbf{c}^{T}: \textbf{c} \in C_2\}.$$
Let $$\mathcal{V}_i(4)=\{(\textbf{0}~\textbf{0}~\textbf{0}~ \textbf{u}_{i}~\textbf{v}_{i} ~\textbf{z}_i) \cdot \textbf{c}^{T}: \textbf{c} \in C_2, wt(\textbf{c})=4\},$$
$$\mathcal{V}_i(6)=\{(\textbf{0}~\textbf{0}~\textbf{0}~ \textbf{u}_{i}~\textbf{v}_{i} ~\textbf{z}_i) \cdot \textbf{c}^{T}: \textbf{c} \in C_2, wt(\textbf{c})=6\}.$$

For any $i\neq j \in [l]$, $\mathcal{V}_i(4) \cap \mathcal{V}_j(4)= \emptyset$, otherwise, there is a codeword with weight $(4+4)$ in code $\mathcal C$.
If $l=21$, then
\begin{equation}\label{26}
\#  (\cup_{i=1}^l \mathcal{V}_i(4))= 21 \times 15=315,~ \#  (\cup_{i=1}^l \mathcal{V}_i(6)) \leq 341-315=26.
\end{equation}

We define the degree of a point $\textbf{p}$ as the number of subspaces $\mathcal{V}_i$, $i\in [l]$, that contain this point.
%If $\lambda$ subspaces intersecting at the same point, we say that the degree of the intersection point is $\lambda$.
By Claim \ref{claim3}, there are at least 2 different intersections between these $l$ subspaces.
Let the maximum degree of these intersections be $\lambda$, and let the $\lambda$ subspaces containing a point of degree $\lambda$ be $\mathcal{V}_{i_j}$, $j\in [\lambda]$,
then $\#  (\cup_{j=1}^\lambda \mathcal{V}_{i_j}(6)) =5 \lambda +1$, this number should satisfy $5 \lambda +1 \leq 26$, that is $\lambda \leq 5$.

If $\lambda = 5$, $\#  (\cup_{j=1}^5 \mathcal{V}_{i_j}(6)) =26$, but there must be a point in the next subset $\mathcal{V}_{i_6}$ that is not included in $\cup_{j=1}^5 \mathcal{V}_{i_j}(6)$, which would contradict (\ref{26}).
If $\lambda \leq 4$, $\#  (\cup_{i=1}^{21} \mathcal{V}_{i}(6)) \geq \frac{21 \times 6}{4} > 31$, which also contradicts (\ref{26}).
Therefore, $l\neq 21$.

\end{proof}

In this case, the optimal $(r, \delta)$-LRC code has parameters

\begin{equation}\label{34l=4}
n = 6l ,~ k= 3l-5,~ d = 12,~ r = 3, ~\delta =4, (4 \leq l \leq 20).
\end{equation}

%\begin{construction}\label{construction341=4}
%The above parity-check matrix $H_2$ gives the optimal quaternary $(r, \delta)$-LRC code $\mathcal C$ for the parameters in (\ref{34l=4}).

%\end{construction}
From Claim \ref{claim3}, we can give the construction when $l=17$, for example, we can take:
$$
\begin{array}{ccccc}
\textbf{u}_{1}=(w^2~0~w~0~0)^{T},~~&&\textbf{v}_{1}=(0~w^2~w^2~0~0)^{T} ,~~~&&\textbf{z}_1=(w^2~w^2~0~0~0)^{T},~~~~\\
\textbf{u}_{2}=(w^2~0~0~w^2~w)^{T},&&\textbf{v}_{2}=(0~0~0~w^2~1)^{T} ,~~~~~&&\textbf{z}_2=(w^2~0~0~0~1)^{T},~~~~~~\\
\textbf{u}_{3}=(0~1~w~1~w)^{T},~~~~&&\textbf{v}_{3}=(1~0~1~0~1)^{T} ,~~~~~~~&&\textbf{z}_3=(0~w^2~w~w^2~w)^{T},~~~~\\
\textbf{u}_{4}=(0~1~w~w~w^2)^{T},~~&&\textbf{v}_{4}=(1~0~1~0~w)^{T} ,~~~~~~~&&\textbf{z}_4=(w^2~1~0~w~0)^{T},~~~~~~\\
\textbf{u}_{5}=(0~1~w~w^2~1)^{T},~~&&\textbf{v}_{5}=(1~0~1~0~w^2)^{T} ,~~~~~&&\textbf{z}_5=(w~1~1~w^2~w^2)^{T},~~~~\\
\textbf{u}_{6}=(1~1~1~0~w)^{T},~~~~&&\textbf{v}_{6}=(1~w~1~w^2~w^2)^{T} ,~~~&&\textbf{z}_6=(1~1~w~w^2~w)^{T},~~~~~~\\
\textbf{u}_{7}=(1~1~1~1~w^2)^{T},~~&&\textbf{v}_{7}=(1~w~1~1~w)^{T} ,~~~~~~~&&\textbf{z}_7=(1~1~w~w~0)^{T},~~~~~~~~\\
\textbf{u}_{8}=(1~1~1~w~0)^{T},~~~~&&\textbf{v}_{8}=(1~w~w^2~w~w)^{T} ,~~~~~&&\textbf{z}_8=(w^2~w~1~0~1)^{T},~~~~~~\\
\textbf{u}_{9}=(1~1~1~w^2~1)^{T},~~&&\textbf{v}_{9}=(1~w~w^2~0~1)^{T} ,~~~~~&&\textbf{z}_9=(w~1~w^2~1~0)^{T},~~~~~~\\
\textbf{u}_{10}=(1~1~1~0~w^2)^{T},~&&\textbf{v}_{10}=(1~1~w^2~w^2~w^2)^{T} ,&&\textbf{z}_{10}=(w^2~w~w^2~w~1)^{T},~\\
\textbf{u}_{11}=(1~1~1~1~w)^{T},~~~&&\textbf{v}_{11}=(1~w^2~1~0~w^2)^{T} ,~~&&\textbf{z}_{11}=(1~1~w^2~w~0)^{T},~~~\\
\textbf{u}_{12}=(1~1~1~w~1)^{T},~~~&&\textbf{v}_{12}=(1~w~1~w~w^2)^{T} ,~~~~&&\textbf{z}_{12}=(w~w~w^2~1~0)^{T},~~~\\
\textbf{u}_{13}=(1~1~1~w^2~0)^{T},~&&\textbf{v}_{13}=(1~1~w~w~w)^{T} ,~~~~~~&&\textbf{z}_{13}=(1~w~1~0~w)^{T},~~~~~\\
\textbf{u}_{14}=(1~1~1~0~1)^{T},~~~&&\textbf{v}_{14}=(1~w~w^2~w^2~w)^{T},~~&&\textbf{z}_{14}=(w^2~w~1~w~w)^{T},~~~~\\
\textbf{u}_{15}=(1~1~1~1~0)^{T},~~~&&\textbf{v}_{15}=(1~w~w^2~1~1)^{T} ,~~~&&\textbf{z}_{15}=(w~1~w^2~0~w)^{T},~~~~\\
\textbf{u}_{16}=(1~1~1~w~w^2)^{T},~&&\textbf{v}_{16}=(1~1~w^2~w^2~0)^{T} ,~&&\textbf{z}_{16}=(w^2~w~w^2~0~w^2)^{T},\\
\textbf{u}_{17}=(1~1~1~w^2~w)^{T},~&&\textbf{v}_{17}=(1~w~w^2~w~0)^{T} ,~~~&&\textbf{z}_{17}=(w~1~w^2~w~w^2)^{T}.~~\\
\end{array}
$$

By removing $i$ groups, $0 \leq i \leq 13$,  we can get all constructions corresponding to $4 \leq l \leq 17$.

When $18 \leq l \leq 20$, we think that such an optimal $(r, \delta)$-LRC exists, but it is difficult to give the exact construction.

\section{Conclusion}\label{Sec-Conclusions}

This paper focused on optimal quaternary $(r, \delta)$-LRC codes specifically. More Specifically, on those $[n,k,d]$-linear codes over $\mathbb{F}_{4}$  with locality $(r, \delta)$ which are simultaneously $r$-optimal and $d$-optimal (with minimum distance $d \geq \delta > 2$, and dimension $k > r \geq 1$). By adopting parity-check matrix and generator approaches employing several related ingredients (such as local group and global group) and using techniques from coding theory to puncturing or shortened codes, we succeeded to provide a complete classification of optimal quaternary $(r, \delta)$-LRC codes achieving the generalized Singleton upper bound (\ref{rd_Singleton}). Our study includes the enumeration of all the possible code parameters of optimal quaternary $(r, \delta)$-LRC codes. We used arguments from finite geometry in the projective spaces over $\F_4$ and related objects and derived all constructions of optimal codes for each possible code parameter via its explicit parity-check matrix. Compared to the recent literature of this context, our structural and classification results about those optimal quaternary $(r,\delta)$-LRC codes are complete and, in addition, obtained through original proofs-techniques different from those already used.

\end{document}